\documentclass[10pt,english]{article}
\usepackage[T1]{fontenc}
\usepackage[latin9]{inputenc}
\usepackage[a4paper]{geometry}
\geometry{verbose,tmargin=1.2in,bmargin=1.2in,lmargin=1.2in,rmargin=1.2in}
\usepackage{color}
\usepackage{babel}
\usepackage{enumitem}
\usepackage{float}
\usepackage{mathtools}
\usepackage{amsmath}
\usepackage{amsthm}
\usepackage{amssymb}
\usepackage{graphicx}
\usepackage[square,numbers]{natbib}
\usepackage[unicode=true,
 bookmarks=true,bookmarksnumbered=false,bookmarksopen=false,
 breaklinks=false,pdfborder={0 0 0},pdfborderstyle={},backref=false,colorlinks=true]
 {hyperref}

 \usepackage{marginnote}
 \usepackage{mparhack}

\makeatletter

\floatstyle{ruled}
\newfloat{algorithm}{tbp}{loa}
\providecommand{\algorithmname}{Algorithm}
\floatname{algorithm}{\protect\algorithmname}
\usepackage{algorithm,algorithmic}
\theoremstyle{plain}
\newtheorem{thm}{\protect\theoremname}
\theoremstyle{plain}
\newtheorem{lem}{\protect\lemmaname}
\theoremstyle{plain}
\newtheorem{prop}{\protect\propositionname}
\theoremstyle{definition}

\theoremstyle{plain}
\newtheorem{assump}{}

\def\Y{\mathbf{Y}}
\def\E{\mathbf{E}}
\def\W{\mathbf{W}}
\def\I{\mathbf{I}}
\def\Phib{\boldsymbol{\Phi}}
\def\Zc{\mathcal{Z}}

\usepackage{wrapfig}

\makeatother

\usepackage{listings}
\providecommand{\definitionname}{Definition}
\providecommand{\lemmaname}{Lemma}
\providecommand{\propositionname}{Proposition}
\providecommand{\theoremname}{Theorem}

\usepackage{authblk}

\begin{document}
\title{Statistical exploration of the Manifold Hypothesis}

\author[a]{Nick Whiteley\thanks{{\it Corresponding author:} Prof. Nick Whiteley, School of Mathematics, Woodland Rd, Bristol BS8 1UG, United Kingdom; email: {\tt nick.whiteley@bristol.ac.uk}}} 
\author[a]{Annie Gray}
\author[b]{Patrick Rubin-Delanchy}
\affil[a]{School of Mathematics, University of Bristol, United Kingdom}
\affil[b]{School of Mathematics, University of Edinburgh, United Kingdom}
\maketitle

\begin{abstract}
The Manifold Hypothesis is a widely accepted tenet of Machine Learning which asserts that nominally high-dimensional data are in fact concentrated near a low-dimensional manifold, embedded in high-dimensional space. This phenomenon is observed empirically in many real world situations, has led to development of a wide range of statistical methods in the last few decades, and has been suggested as a key factor in the success of modern AI technologies. We show that rich and sometimes intricate manifold structure in data can emerge from a generic and remarkably simple statistical model --- the Latent Metric Model --- via elementary concepts such as latent variables, correlation and stationarity. This establishes a general statistical explanation for why the Manifold Hypothesis seems to hold in so many situations. Informed by the Latent Metric Model we derive procedures to discover and interpret the geometry of high-dimensional data, and explore hypotheses about the data generating mechanism. These procedures operate under minimal assumptions and make use of well known graph-analytic algorithms.  
\end{abstract}

\section{Introduction}\label{sec:Introduction}
The manifold hypothesis is a widely accepted tenet of Machine Learning which posits that \citep{cayton2005algorithms}:

``\emph{...the dimensionality of many data sets is
only artificially high; though each data point consists of perhaps
thousands of features, it may be described as a function of only a
few underlying parameters. That is, the data points are actually samples
from a low-dimensional manifold that is embedded in a high-dimensional
space}''.

This phenomenon has impacted a wide range of methods and algorithms. Presence of manifold structure is the premise of manifold estimation and testing \citep{fefferman2016testing,genovese2012minimax,genovese2012manifold}, nonlinear dimension reduction techniques \cite{roweis2000nonlinear,tenenbaum2000global,hinton2002stochastic,belkin2003laplacian,weinberger2004learning,van2008visualizing,mcinnes2018umap}, intrinsic dimension estimation \citep{kegl2002intrinsic,levina2004maximum,hein2005intrinsic,carter2009local}, and regression and classification techniques specially adapted to settings in which covariates are valued on manifolds \cite{bickel2007local,aswani2011regression,cheng2013local,yang2016bayesian,lin2019extrinsic,niu2019intrinsic}. Assumptions that data are concentrated near low-dimensional topological or
geometric structures underpin clustering techniques and topological data analysis \citep{edelsbrunner2008persistent,niyogi2008finding,carlsson2009topology,balakrishnan2012minimax,chazal2013persistence,chazal2021introduction}. Some nonparametric techniques, such as nearest neighbour or tree-based regression methods, function without manifold structure necessarily being present, but  benefit significantly when it is there, since their convergence rates depends on intrinsic rather than ambient dimension of covariates \citep{kpotufe2011k,kpotufe2012tree}. It has been proved that deep neural networks exhibit a similar property  \cite{nakada2020adaptive}.  More broadly, the presence of manifold structure has been suggested as a key factor in the success of deep learning methods \citep{brahma2015deep}. Assumptions that data lie on a low-dimensional manifold embedded in high-dimensional space are central to very recent practical and theoretical developments in generative modelling in Artificial Intelligence, especially diffusion models \citep{song2019generative,song2020score,ho2020denoising,de2022convergence,de2022riemannian,stanczuk2022your,chung2022improving,NEURIPS2022_e8fb575e,he2023manifold,diffusion-fields}.



\begin{figure}
\begin{centering}
\includegraphics[width=0.52\columnwidth]{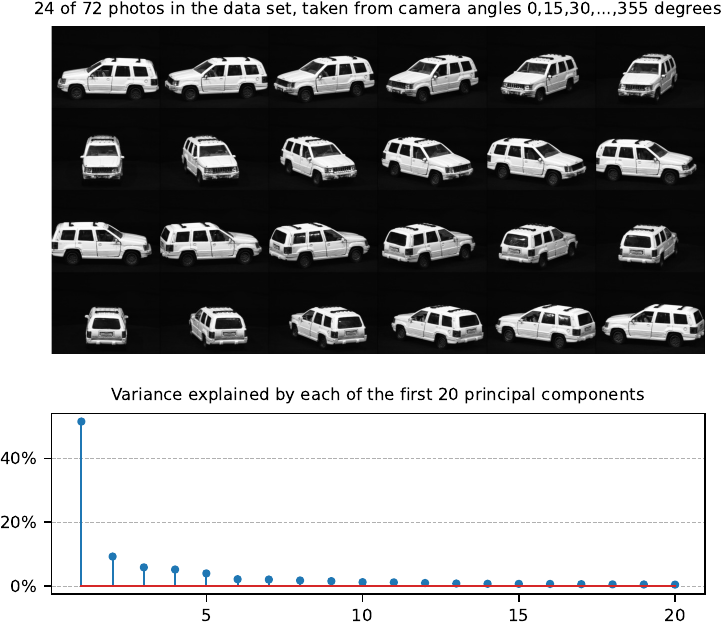}\hfill
\includegraphics[width=0.45\columnwidth]{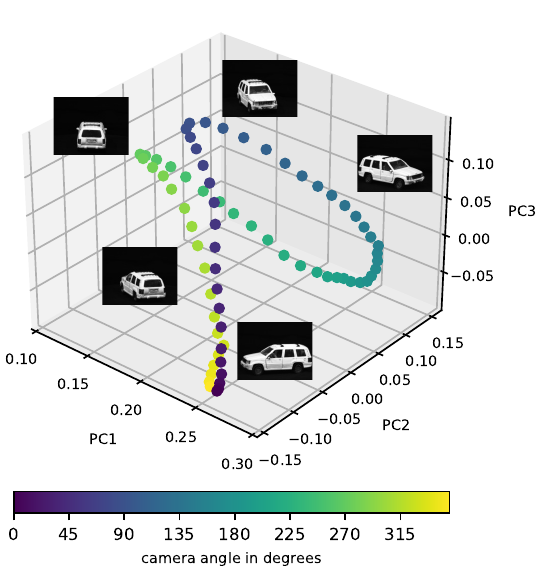}
\end{centering}
\caption{A collection of images reduced in dimension using PCA.\label{fig:car:intro}}

\end{figure}

Why might manifold structure be present in data? In some situations, such as image analysis, an intuitive albeit heuristic
explanation can be given in terms of the physical mechanism which generated the data (see e.g., \citet{pless2009survey}
for a review of manifold estimation in this context). Figure
\ref{fig:car:intro} shows 24 grayscale images of a car, a subset
of $n=75$ images from \citep{geusebroek2005amsterdam}, taken from angles
$0,5,10,\ldots,355$ degrees around the circumference of a circle.
Each image is of resolution $384\times288$ pixels and so can be
represented as a vector of length $p=110592$. However, at least intuitively,
we can account for the variation across the collection of images using
far fewer dimensions, in terms of the position of the camera in the
three-dimensional space of the world around us. Figure \ref{fig:car:intro}
shows the result of using principal component analysis (PCA) to reduce dimension, upon which we make the following observations.

The first 20 principal components account for 91.5\% of the total
variance, suggesting that the data are concentrated somewhere in a
low-dimensional linear subspace of $\mathbb{R}^{110592}$. The first three dimensions  --- the coordinates
of the data with respect to the eigenvectors associated with the three
largest eigenvalues --- exhibit points around a loop which is somewhat irregular
in shape but resembles the circle of camera positions, subject to
deformation by bending and twisting. The points appear roughly equally spaced around the loop, like the camera positions which are equally spaced at intervals of $5$ degrees around a circle.


Evidently reducing the dimension of these image data by PCA allows us to access
some of the geometric structure of the data generating mechanism, but questions
remain. We have chosen to plot the first three dimensions for ease
of visualisation, is this a ``good'' choice? What might the other dimensions convey? What explains the precise
shape of the loop and the spacing of the points along it, relative to the underlying circle of camera positions? 

In other situations, embedded topological and geometric structure may appear
in different forms and have different interpretations.  Figure \ref{fig:planaria_intro}
shows two approaches to visualising  expression levels of $p=5821$ genes measured across $n=5000$ individual cells from adult planarians, a type of flatworm. In the field of single-cell transcriptomics --- as set out in the 2018 Science paper \citep{plass2018cell} --- such data offer the possibility of discovering the cell lineage tree of an entire animal: the aim is to find out if the data reflect the tree-structured process by which stem cells differentiate into a variety of distinct cell types.  The data are preprocessed in the same way as the original paper \citep{plass2018cell}, using the Python package Scanpy \citep{wolf2018scanpy}.


The left plot in figure \ref{fig:planaria_intro} shows the result of
dimension reduction from $5821$ to $2$ using PCA. The
right plot shows the result of first reducing from $5821$
to $14$ dimensions using PCA, followed by reduction to $2$ dimensions
using $t$-SNE \citep{van2008visualizing}, a very popular nonlinear dimension
reduction method which finds a lower dimensional representation of a
 data set by minimising a particular measure of distortion
of pairwise distances. We used the default $t$-SNE parameter settings
in \verb|scikit-learn| \cite{scikit-learn}. In both plots, the points are coloured by cell type, but neither PCA nor $t$-SNE have access to this information. Similarly to figure \ref{fig:car:intro}, it is evident from figure \ref{fig:planaria_intro} that performing some form of dimension reduction allows us to access structure underlying the data, albeit in the form of discrete cell types rather than the geometry of camera positions. In figure \ref{fig:car:intro}, using only PCA
to reduce dimension was enough to make this structure visible. However, in figure \ref{fig:planaria_intro}, using only PCA and reducing to $2$ dimensions, distinct cell types are not clearly separated, whereas PCA down to $14$ dimensions followed
by $t$-SNE seems to be more effective. The \textit{t}-SNE  visualisation  hints at the presence of tree structure underlying the data, with some areas having branch-like arms originating at the central point cloud, but other lineages lack clarity or seem to be disconnected. Could we combine methods differently to obtain a clearer picture?

\begin{figure}[ht!]
\includegraphics[width=1\columnwidth]{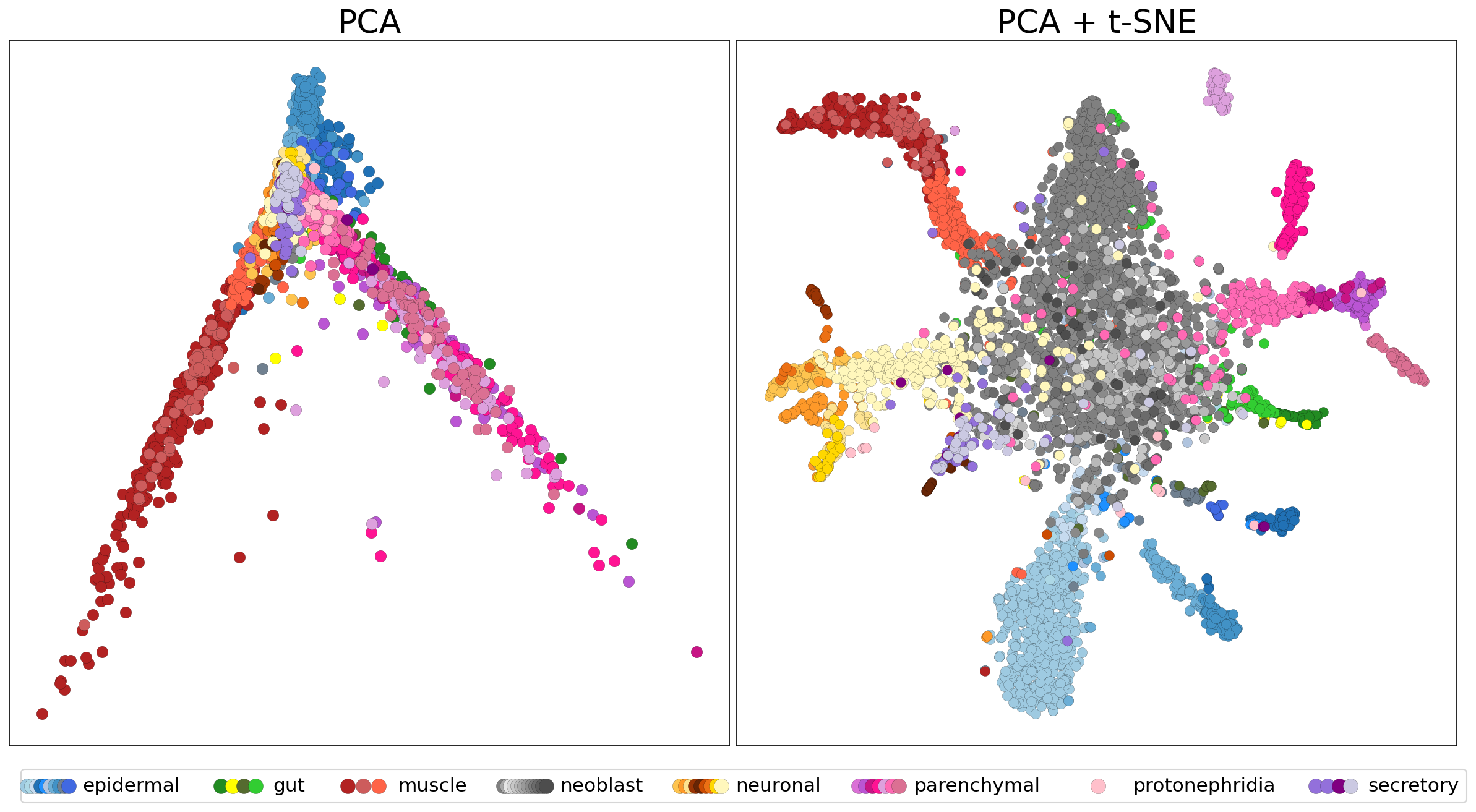}\caption{Planaria example. Left: first $2$ dimensions of the PCA embedding.
Right: representation of the data in $2$ dimensions obtained by first
reducing to 14 dimensions using PCA, then applying $t$-SNE.\label{fig:planaria_intro} }
\end{figure}

These examples illustrate just some of the ways in which underlying structure can manifest itself in embedded topological and geometric patterns in data. Many other examples can be found: in genomics, where genotyping DNA sites has revealed striking geographic patterns \citep{novembre2008genes,lao2008correlation, diaz2019umap}; neuroscience, where simultaneous recordings from Grid cells have been shown to exhibit toroidal structure seemingly independent of behavioural tasks \citep{gardner2022toroidal}; as well as manifold structure in data from wireless sensor networks \citep{patwari2004manifold}, visual speech recognition \citep{bregler1995nonlinear},
drug discovery \citep{reutlinger2012nonlinear}, RNA sequencing \citep{moon2018manifold},
and human motion synthesis \citep{lee2006human}.

In this work we put forward a perspective that embedded topological and geometric structure in data can be explained as a general statistical phenomenon, without reference to physical properties or other domain-specific details of the data generating mechanism. 

\paragraph{Main contributions.} Our first main contribution is to propose a simple and generic statistical model which produces hidden, low-dimensional manifold structure in high-dimensional data, thus providing \emph{a statistical justification for the manifold hypothesis}.

Our second main contribution is to describe how this hidden manifold relates to a true latent domain defined by the model, explaining, for example, why the points in the right panel of figure~\ref{fig:car:intro} are not in a perfect circle, as the camera positions are, but still form a loop. More precisely, we give mild conditions under which the relationship between the manifold and the latent domain is a homeomorphism (a topological equivalence), and stronger conditions under which it becomes an isometry (a metric equivalence).

Our third main contribution is to show that our model and theory enable a combination of \emph{simple or well-known techniques} to be given a new, model-based interpretation and put to use in exploring hypotheses and uncovering information about the latent domain and broader data generating mechanism. Given data vectors $\Y_1,\ldots,\Y_n\in\mathbb{R}^p$, we rationalise the following workflow:
\begin{enumerate}
\item Dimension selection, using $\Y_1,\ldots,\Y_n$ to choose number of dimensions $\hat{r}$.
\item Linear dimension reduction of $\Y_1,\ldots,\Y_n$ by PCA, resulting in an $\hat{r}$-dimensional embedding, $\zeta_1,\ldots,\zeta_n$.
\item Spherical projection of the embedding,  setting $\zeta_i^{\mathrm{sp}}\coloneqq \zeta_i/\|\zeta_i\|$, $i=1,\ldots,n$
\item Nearest neighbour graph construction from $\zeta_1^{\mathrm{sp}},\ldots,\zeta_n^{\mathrm{sp}}$.
\item Analysis and visualisation of the nearest neighbour graph, e.g., shortest paths, minimum spanning tree, topology. 
\end{enumerate}
For step 1., we introduce a new Wasserstein distance-based dimension selection method. 

The remainder of this article is structured as follows. In Section~\ref{sec:Model} we introduce the Latent Metric Model, and the associated manifold $\mathcal{M}$, which arises as a consequence of correlation over a latent domain $\Zc$. In section~\ref{sec:connecting} we describe how this manifold structure hides in the data and how the manifold relates to $\Zc$. We establish a representation formula  (proposition~\ref{prop:Phi_W_expansion}) uncovering the perhaps surprising fact that, under the Latent Metric Model, data are noisy, random projections of points in $\mathcal{M}$.  Standard statistical concepts, such as stationarity, give rise to striking geometric relationships between $\mathcal{M}$ and $\Zc$, such as isometry. In section~\ref{sec:methodology} we develop theory and methodology supporting the workflow above, elucidating the benefits of applying PCA (theorem~\ref{thm:consistency_summary}), proposing a new dimension selection method, and more. In Section~\ref{sec:examples} we demonstrate the workflow on real data, revisiting the image and transcriptomics data from section \ref{sec:Introduction}, as well as a temperature time series example. The key new feature of these analyses is that we can explore manifold hypotheses grounded in a statistical model.  In section \ref{sec:discussion} we draw together conclusions and discuss connections to the literature, including geometric representation of high-dimensional data, PCA in high dimensions, Gaussian process latent variable models, nonlinear dimension reduction, and exploratory data analysis.

\section{The Latent Metric Model}\label{sec:Model}

The Latent Metric Model (LMM) is constructed from three independent sources of randomness.

\paragraph*{Latent Variables.}
$Z_{1},\ldots,Z_{n}$ are independent and identically distributed random elements of a metric space
$(\mathcal{Z},d_{\mathcal{Z}})$, that is $\mathcal{Z}$ is a set, and $d_{\mathcal{Z}}(\cdot,\cdot)$ is a distance function  on $\mathcal{Z}$. It is assumed that the metric space $(\mathcal{Z},d_{\mathcal{Z}})$ is compact, and $Z_{1},\ldots,Z_{n}$ are distributed
according to a Borel probability measure $\mu$ supported on $\mathcal{Z}$.

\paragraph*{Random Functions.}
$X_{1},\ldots,X_{p}$ are random $\mathbb{R}$-valued functions, each with domain $\mathcal{Z}$. That is, for each $z\in\mathcal{Z}$
and $j=1,\ldots,p$, $X_{j}(z)$ is an $\mathbb{R}$-valued random
variable. It is not assumed that $X_{1},\ldots,X_{p}$ are identically distributed, but it is assumed that $\mathbb{E}[|X_{j}(z)|^{2}]<\infty$, for all $j=1,\ldots,p$ and $z\in\Zc$.

\paragraph*{Noise.}
$\mathbf{E}\in\mathbb{R}^{n\times p}$ is a matrix of random variables
whose elements are each zero-mean and unit-variance. The columns of
$\mathbf{E}$ are assumed independent and elements in distinct rows
of $\mathbf{E}$ are assumed pairwise uncorrelated.

 \vspace{\baselineskip}
\noindent The data matrix $\mathbf{Y}\in\mathbb{R}^{n\times p}$ is defined by:
\begin{equation}
\mathbf{Y}_{ij}\coloneqq X_{j}(Z_{i})+\sigma\mathbf{E}_{ij}\label{eq:model}
\end{equation}
for some $\sigma\geq0$. It will sometimes be convenient to think of data vectors $\Y_1,\ldots,\Y_n\in\mathbb{R}^p$ such that $[\Y_1|\cdots|\Y_n]^\top \equiv \Y$, so $\mathbf{Y}_{ij}$  is the $j$th element of $\Y_i$. Similarly we shall write noise vectors $[\E_1|\cdots|\E_n]^\top\equiv\E$.

We call:
\begin{equation}
f(z,z^{\prime})\coloneqq\frac{1}{p}\sum_{j=1}^{p}\mathbb{E}[X_{j}(z)X_{j}(z^{\prime})]\label{eq:f_defn}
\end{equation}
the \emph{mean correlation kernel} associated with the LMM. The following assumption is taken to hold throughout the paper without further mention.
\begin{assump}\label{ass:cont_covar}
For each $j=1,\ldots,p$, $\mathbb{E}[X_{j}(z)X_{j}(z^{\prime})]$
is a continuous function of $(z,z^{\prime})\in\mathcal{Z}\times\mathcal{Z}$.
\end{assump}
Assumption \ref{ass:cont_covar} implies $f(z,z^\prime)$ is continuous in $z,z^\prime$, and by a generalisation of Mercer's theorem \citep[Thm 4.49]{mercer1909functions,steinwart2008support} given in section \ref{sec:Supporting-results-for-Model},
when \ref{ass:cont_covar} holds there exists a countable collection
of non-negative real numbers $(\lambda_{k}^{f})_{k\geq1}$, $\lambda_{1}^{f}\geq\lambda_{2}^{f}\geq\cdots$,
and a sequence of functions $(u_{k}^{f})_{k\geq1}$ which are orthonormal
in $L_{2}(\mu)$ such that 
\begin{equation}
f(z,z^{\prime})=\sum_{k=1}^{\infty}\lambda_{k}^{f}u_{k}^{f}(z)u_{k}^{f}(z^{\prime})=\left\langle \phi(z),\phi(z^{\prime})\right\rangle_{\ell_2} ,\label{eq:mercers_theorem_intro}
\end{equation} 
where the series converges absolutely and uniformly in $z,z^\prime$. The inner product in \eqref{eq:mercers_theorem_intro} is $\langle x,x^\prime\rangle_{\ell_2} \coloneqq \sum_{k=1}^\infty x_k x_k^\prime$, between infinitely long vectors $x=[x_1\,x_2\,\cdots]^\top$ belonging to $\ell_2\coloneqq\{x\in\mathbb{R}^{\mathbb{N}}: \|x\|_{\ell_2} <\infty\}$, where $\|x\|_{\ell_2}\coloneqq\left( \sum_{k=1}^\infty |x_k|^2\right)^{1/2} = \langle x,x \rangle_{\ell_2}^{1/2}$. The function  $\phi:\Zc\to\ell_2$ is the ``feature map'':
\begin{equation}
\phi(z)\coloneqq\left[(\lambda_{1}^{f})^{1/2} u_{1}^{f}(z)\;\;(\lambda_{2}^{f})^{1/2}u_{2}^{f}(z)\;\;\cdots\;\;\right]^{\top},\label{eq:phi_defn}
\end{equation}
with range we denote
\begin{equation}
\mathcal{M}\coloneqq\left\{\phi(z);z\in\Zc\right\}.\label{eq:M_defn}
\end{equation} 
  The set $\mathcal{M}$ can be checked to be a subset of $\ell_2$ using $\|\phi(z)\|^2_{\ell_2}=f(z,z)$, and using the compactness of $\mathcal{Z}$ and the continuity assumption  \ref{ass:cont_covar} which imply  $\sup_{z\in\Zc}f(z,z)<\infty$.



We denote by $r$ the rank of $f$, that is the largest $k\geq1$ such that $\lambda_k^f>0$, with $r\coloneqq \infty$ if $\lambda_k^f>0$ for all $k\geq1$. When $r<\infty$ we abuse notation slightly by writing
 \begin{equation}
\phi(z)\coloneqq\left[(\lambda_{1}^{f})^{1/2} u_{1}^{f}(z)\;\;(\lambda_{2}^{f})^{1/2}u_{2}^{f}(z)\;\;\cdots\;\;(\lambda_{r}^{f})^{1/2} u_{r}^{f}(z)\right]^{\top}.
\end{equation}

We stress two points. First, the central purpose of the LMM is to explain and describe manifold structure in data as a general statistical phenomenon. The breadth of this objective necessitates a flexible modelling paradigm and, except when considering examples, we do not make specific distributional or functional assumptions, such as Gaussianity. The assumptions in this paper, involve more general concepts, such as continuity, smoothness or stationarity. Second, we stress the perspective here that $f$ and $\phi$ are derived quantities, defined implicitly by the ingredients of the LMM, rather than model parameters or hyperparameters whose values need to be chosen.

\section{Connecting statistical and geometric properties of the LMM}\label{sec:connecting}
In this section we explain how  statistical properties of the LMM allow us to connect the geometry of the data vectors, $\Y_1,\ldots,\Y_n$, which can be thought of as a point cloud in $\mathbb{R}^p$, to the structure of $\mathcal{M}$, and in turn the latent metric space $\Zc$. This is important for two reasons. Firstly, it shows how manifold structure in data emerges from elementary  statistical properties of the LMM, thus clarifying in what sense and why the Manifold Hypothesis holds. Secondly, it forms the basis for data analysis procedures we detail in section \ref{sec:methodology}. We proceed in four main steps:
\begin{itemize}[leftmargin=0.4cm]
\item Section \ref{subsec:intro_Phi_W} shows how inner-products between data vectors, say $\Y_i,\Y_j$, relate to inner products between $\phi(Z_i),\phi(Z_j)$. Since $\phi(Z_1),\ldots,\phi(Z_n)$ are i.i.d. and valued in $\mathcal{M}$, recall \eqref{eq:M_defn}, this gives our first indication that the geometry of the point cloud $\Y_1,\ldots,\Y_n$ will reflect the shape of $\mathcal{M}$.  
\item Section \ref{subsec:intro_manifold} shows that under a simple distinguishability assumption, the feature map $\phi$ is a homeomorphism.  Informally, this means we can think of $\mathcal{M}$ as being equivalent to $\Zc$ up to some continuous, invertible distortion such as bending, twisting or stretching. Formally, we can say $\mathcal{M}$ is a topological manifold.
\item Section \ref{sec:isometry} shows that when $\Zc$ is a subset of Euclidean space, conditions closely related to weak stationarity of the random function $X_j$ imply $\phi$ is an \emph{isometry}. This means a very special form of geometric relationship holds between $\mathcal{M}$ and $\Zc$, in which distances between points in $\Zc$, say $Z_i$ and $Z_j$, are faithfully represented by distances measured \emph{along the manifold} $\mathcal{M}$ between $\phi(Z_i)$ and $\phi(Z_j)$, rather than by straight-line distances of the form $\|\phi(Z_i)-\phi(Z_j)\|_{\ell_2}$.
\item Section \ref{sec:smoothness} shows that if the kernel is sufficiently smooth,  most of the structure of $\mathcal{M}$ is captured in a low-dimensional subspace. This hints towards the potential effectiveness of PCA (step 2 in the workflow) for manifold exploration. 
\end{itemize}
Remarkably, we shall draw the conclusions in the second and third points above without any explicit knowledge of the eigenvalues and eigenfunctions which appear in the definition of $\phi$, and which thus define $\mathcal{M}$.

\subsection{Relating data inner products to feature map inner products}\label{subsec:intro_Phi_W}

We have not made any assumptions about the functional form of $z\mapsto X_j(z)$,  $j=1,\ldots,p$, in the LMM, other than \ref{ass:cont_covar}. Nevertheless, the following proposition shows that a linear relationship holds between  $\Y_i-\sigma\E_i$ and $\phi(Z_i)$.
\begin{prop}\label{prop:Phi_W_expansion} Assume \ref{ass:cont_covar}. Then 
under the LMM with $r\in\{1,2,\ldots,\}\cup\{\infty\}$, the matrix $\W\in\mathbb{R}^{p\times r}$ with elements
\begin{equation}\label{eq:W_jk_defn}
\W_{jk}\coloneqq \frac{1}{(p\lambda_k^f)^{1/2}}\int_{\Zc}X_j(z)u_k^f(z)\mu(\mathrm{d}z)
\end{equation}
satisfies
\begin{equation}\label{eq:Phi_W_expansion}
\mathbf{Y}_i \stackrel{m.s.}{=}  p^{1/2}\W \phi(Z_i) + \sigma \E_i,\quad i=1,\ldots,n,\quad\qquad \mathbb{E}[\mathbf{W}^\top\mathbf{W}]=\mathbf{I}_r,
\end{equation}
where $\I_r$ is the identity matrix with $r$ rows and columns.
\end{prop}
\noindent The qualification ``$\stackrel{m.s.}{=}$'' in \eqref{eq:Phi_W_expansion} indicates that the infinite summations constituting the matrix-vector product $\W\phi(Z_i)$ in the case $r=\infty$ converge in the mean-square sense. The proof of proposition \ref{prop:Phi_W_expansion}, in appendix \ref{sec:Proofs-and-supporting_for_manifold}, entails a generalised form of Karhunen-Lo\`{e}ve expansion of $X_{1},\ldots,X_{p}$.  

The identities in \eqref{eq:Phi_W_expansion} can be interpreted as meaning that $p^{-1/2}\Y_i$ is a noisy, random projection of $\phi(Z_i)$. 
Indeed we can use \eqref{eq:Phi_W_expansion} together with the defining properties of the LMM in section \ref{sec:Model} to describe the behaviour of the inner-product between $\Y_i,\Y_j\in\mathbb{R}^p$ when the randomness in $X_1,\ldots,X_p$ and $\E$ is averaged out:
\begin{align}
\frac{1}{p}\mathbb{E}[\langle\Y_i,\Y_j\rangle |Z_i,Z_j] &= \langle\phi(Z_i),\mathbb{E}[\W^\top\W]\phi(Z_j)\rangle_{\ell_2} +0+0+\sigma^2 \frac{1}{p}\mathbb{E}[\langle\E_i,\E_j\rangle]\nonumber\\
&=\langle\phi(Z_i),\phi(Z_j)\rangle_{\ell_2} + \sigma^2\I[i=j]\label{eq:Y_inner_prod_cond_exp}.
\end{align}

The quantity $p^{-1}\langle\Y_i,\Y_j\rangle$ is an arithmetic mean of $p$ random variables. If, conditionally on $Z_i,Z_j$, the summands in $p^{-1}\langle\Y_i,\Y_j\rangle$ are weakly dependent and have moments bounded uniformly in $p$, then by a law of large numbers argument $p^{-1}\langle\Y_i,\Y_j\rangle$ will be close to its conditional expectation \eqref{eq:Y_inner_prod_cond_exp} with high probability when $p$ is large (see proposition \ref{prop:mixing} in appendix \ref{sec:Proofs-and-supporting_for_manifold} for details). Moreover, write $\W\equiv[\W_1|\cdots|\W_p]^\top$ and note from \eqref{eq:W_jk_defn} that the only randomness in $\W_j$ arises from $X_j$. So if $X_1, \ldots, X_p$ were assumed weakly dependent,  $\W_1, \ldots, \W_p$ would be too. Then, again assuming moments bounded uniformly in $p$, by a law of large numbers argument the sum $\sum_{j=1}^p\W_j \W_j^\top$ will be close to its expectation with high probability when $p$ is large, i.e., 
$$\W^\top\W = \sum_{j=1}^p\W_j \W_j^\top \approx \mathbb{E}[\W^\top\W] = \mathbf{I}_r.$$
We therefore conclude that, subject to suitable weak dependence and moment conditions, \begin{equation}\label{eq:inner_prod_convg}
\left|p^{-1}\langle\Y_i,\Y_j\rangle -  \langle\phi(Z_i),\phi(Z_j)\rangle_{\ell_2} - \sigma^2\I[i=j]\right|\to 0, \quad \text{as} \quad p\to\infty,
\end{equation}
in probability. In this sense the geometry of the collection of high-dimensional data vectors $\Y_1,\ldots,\Y_n$ reflects that of $\phi(Z_1),\ldots,\phi(Z_n)$, subject to some distortion depending on the noise level $\sigma$. Moreover, if \eqref{eq:inner_prod_convg} holds, then $|p^{-1}\|\Y_i-\Y_j\|^2 -  \|\phi(Z_i)-\phi(Z_j)\|_{\ell_2}^2 -2\sigma^2|\to 0$ as $p\to\infty$.  
 
In section \ref{subsec:PCA-scores}, we shall complement the above reasoning with theorem \ref{thm:consistency_summary} which shows that when the noise level $\sigma$ is fixed, and $n\to\infty$ and $p/n\to\infty$ simultaneously, using PCA to reduce dimension of $\Y_1,\ldots,\Y_n$ allows $\phi(Z_1),\ldots,\phi(Z_n)$ to be recovered, up to an orthogonal transformation.

\subsection{Relating distinguishability of latent variables to homeomorphism}\label{subsec:intro_manifold}
A \emph{homeomorphism} between two metric spaces is a mapping which is continuous, bijective and has a continuous inverse. If such a mapping exists the two metric spaces are said to be homeomorphic, or \emph{topologically equivalent}. To develop some intuition, one can think about the case in which the metric spaces in question are subsets of the three dimensional Euclidean world around us. In this situation mappings which qualify as homeomorphisms include transformations of shape by bending, twisting, stretching and folding, but not cutting, puncturing or joining \citep{bing1960}. Topological equivalence implies the two metric spaces in question must exhibit the same number of connected components, the same number of 1-dimensional loops and more generally the same number of $k$-dimensional ``holes'' as each other. Detecting such features using data is the purpose of persistent homology methods within the field of Topological Data Analysis \cite{carlsson2009topology,chazal2021introduction}. But there is more to a topological structure than its homology; for example, in the transcriptomics application (introduction and Section~\ref{subsec:ex_planaria}), the hypothesized underlying structure has interesting, `tree-like', topology but no interesting homology. 

We shall now see that, with only a little more structure added to the LMM, $\phi$ is homeomorphism between $\Zc$ and $\mathcal{M}$, where the distance on $\mathcal{M}$ is $\|\cdot-\cdot\|_{\ell_2}$. The first requirement, continuity of $\phi$, means that  $d_{\Zc}(z, z^\prime)\to 0$ implies  $\|\phi(z) - \phi(z^\prime)\|_{\ell_2}\to 0$. This holds due to the identities:
\begin{align*}
\|\phi(z)-\phi(z^\prime)\|_{\ell_2}^2 &=\|\phi(z)\|_{\ell_2}^2 +\|\phi(z^\prime)\|_{\ell_2}^2 -2\langle\phi(z),\phi(z^\prime)\rangle_{\ell_2}\\
& = f(z,z)+f(z^\prime,z^\prime)-2f(z,z^\prime),
\end{align*}
combined with continuity of $f$ under \ref{ass:cont_covar}. By its definition, $\phi:\Zc\to\mathcal{M}$ is automatically surjective, and if $\phi$ is one-to-one, its inverse is automatically continuous due to a general result in the theory of metric spaces \citep[Prop. 13.26]{sutherland2009introduction} concerning the inverse of a continuous mapping with compact domain. The question of whether or not $\phi$ is a homeomorphism thus reduces to whether or not it is one-to-one. Consider the following assumption.
\begin{assump}\label{ass:injective}
For each $z,z^{\prime}\in\mathcal{Z}$ such that $z\neq z^{\prime}$, $\sum_{j=1}^p\mathbb{E}[|X_j(z)-X_j(z^\prime)|^2]>0$.
\end{assump}
\begin{prop}
\label{prop:homeomorphism} Assume \ref{ass:cont_covar}. Then 
 $\phi:\Zc\to\mathcal{M}$ is a homeomorphism if and only if \ref{ass:injective} holds.
\end{prop}
\noindent Assumption \ref{ass:injective} can be interpreted as a ``distinguishability'' condition, requiring that points in $\Zc$ can be distinguished from each other in terms of the random functions $X_j$. This distinguishability can also be stated in terms of the kernel $f$; in 
the proof of proposition \ref{prop:homeomorphism} in appendix \ref{sec:Proofs-and-supporting_for_manifold}, we show that \ref{ass:injective} is equivalent to the condition: for each $z,z^{\prime}\in\mathcal{Z}$ such that $z\neq z^{\prime}$,
 there exists $\xi\in\mathcal{Z}$ such that $f(z,\xi)\neq f(z^{\prime},\xi)$.

The term \emph{topological manifold} conventionally means some set such that each point in that set has a neighbourhood which is homeomorphic to some subset of Euclidean space. We note that the relationship between $\mathcal{M}$ and $\Zc$ is of a similar nature, except that $\mathcal{M}$ is globally rather than only locally homeomorphic to $\Zc$, and the metric space $\Zc$ need not be Euclidean.  Putting these differences aside, we shall simply call $\mathcal{M}$ a manifold from now on.

When $\mathcal{M}$ and $\Zc$ are homeomorphic, they must have the same \emph{covering dimension}---see \citep[Ch.3]{pears_dimension_ttheory} for background---this is an abstract topological notion dimension, which generalises the usual notion of dimension of Euclidean space. In this sense, we can say that when $\Zc$ is low-dimensional, $\mathcal{M}$ is low-dimensional too.

\subsection{Relating stationarity to isometry}\label{sec:isometry}
Weak stationarity of any one of the random functions $X_j$ in the LMM would mean that:
\begin{itemize}[leftmargin=0.4cm]
\item $\mathbb{E}[X_j(z)]$ is constant in $z$, and 
\item $\mathbb{E}[(X_j(z)-\mathbb{E}[X_j(z)])(X_j(z^\prime)-\mathbb{E}[X_j(z^\prime)])]$ is a function only of distance between $z$ and $z^\prime$.
\end{itemize}
If all the random functions $X_1,\ldots,X_p$ were to have this property, it would follow from the definition of $f$ in \eqref{eq:f_defn} that  $f(z,z^\prime)$ must also be a function only of distance between $z$ and $z^\prime$. We shall now see that this leads to an \emph{isometric} relationship between $\mathcal{M}$ and $\Zc$.  To define isometry, it's convenient to work in the following setting:
\begin{assump}\label{ass:diffeo_Rd_alt} $\Zc$ is a compact subset of $\mathbb{R}^d$, and there exists a continuous path in $\Zc$ of finite length between any two points in $\Zc$.
\end{assump}
\noindent The precise mathematical definition of a path and its length  are given in appendix \ref{subsec:isometry_proofs}. In the setting of \ref{ass:diffeo_Rd_alt}, we denote by $d^{\mathrm{geo}}_{\mathcal{Z}}(z,z^\prime)$ the shortest path length, or geodesic distance, in $\mathcal{Z}$. This is the infimum of the lengths of all paths in $\mathcal{Z}$ with end-points $z,z^\prime$  (see appendix \ref{subsec:isometry_proofs} for details). If $\mathcal{Z}$ is convex,  the shortest path between two points is a straight line and $d^{\mathrm{geo}}_{\mathcal{Z}}(z,z^\prime)=\|z-z^\prime\|_{\mathbb{R}^d}$.  For $x,x^\prime\in\mathcal{M}$, the shortest path length, or geodesic distance, in $\mathcal{M}$ is denoted $d^{\mathrm{geo}}_{\mathcal{M}}(x,x^\prime)$ and defined analogously to $d^{\mathrm{geo}}_{\mathcal{Z}}(z,z^\prime)$. Even when $\Zc$ is convex, in general $\mathcal{M}$ is not convex and $d^{\mathrm{geo}}_{\mathcal{M}}(x,x^\prime)$ is not equal to the straight-line distance $\|x-x^\prime\|_{\ell_2}$. 

We shall say \emph{isometry} holds between $\Zc$ and $\mathcal{M}$ if
\begin{equation}\label{eq:isometry-definition}
d^{\mathrm{geo}}_{\mathcal{M}}(\phi(z),\phi(z^\prime) )= d^{\mathrm{geo}}_{\mathcal{Z}}(z,z^\prime),\qquad\forall z,z^\prime\in\Zc. 
\end{equation}
Compared to homeomorphism, this isometry condition imposes more of a constraint on the relationship between $\Zc$ and $\mathcal{M}$. One can interpret isometry as allowing $\phi$ to transform $\Zc$ into $\mathcal{M}$ by bending, but not by stretching or compressing, since that would violate the equality of shortest path lengths.

The following proposition shows that isometry holds up to a scaling constant when, for $z,z^\prime$ close to each other, $f(z,z^\prime)$ depends only on the Euclidean distance between $z$ and $z^\prime$. In contrast, weak-sense stationarity involves the more stringent requirement that such dependence holds \emph{for all} $z,z^\prime$. Define  $\mathcal{D}\coloneqq\{(z,z);z \in\Zc\}\subset \Zc\times\Zc$.
\begin{prop}\label{prop:sq_euc_kernels_alt} Assume \ref{ass:injective} and \ref{ass:diffeo_Rd_alt}.
If $f(z,z^{\prime})=g(\|z-z^{\prime}\|_{\mathbb{R}^{d}}^{2})$ for all $z,z^\prime$
in an open neighbourhood of $\mathcal{D}$ where $g$
is twice continuously differentiable and  $g^{\prime}(0)<0$, then
\begin{equation}\label{eq:euc_isometry_alt}
d_{\mathcal{M}}^{\mathrm{geo}}(\phi(z),\phi(z^\prime)) = \sqrt{-2g^{\prime}(0)} d_{\Zc}^{\mathrm{geo}}(z,z^\prime).
\end{equation}
\end{prop}
\noindent The following proposition complements proposition \ref{prop:sq_euc_kernels_alt} by addressing the special case in which $\Zc$ is a sphere. 
\begin{prop}\label{prop:inner_prod_kernels_alt}
Assume \ref{ass:injective}. If $\Zc=\{z\in\mathbb{R}^{d}:\|z\|_{\mathbb{R}^d}=1\}$ and $f(z,z^\prime)=g(\langle z,z^\prime\rangle_{\mathbb{R}^d})$ for all $z,z^\prime$ in an open neighbourhood of $\mathcal{D}$ where $g$
is twice continuously differentiable and $g^{\prime}(1)>0$, then
\begin{equation}\label{eq:sphere_isometry_alt}
d_{\mathcal{M}}^{\mathrm{geo}}(\phi(z),\phi(z^\prime)) = \sqrt{g^{\prime}(1)} d_{\Zc}^{\mathrm{geo}}(z,z^\prime).
\end{equation}
\end{prop}
\noindent The proofs of propositions \ref{prop:sq_euc_kernels_alt} and \ref{prop:inner_prod_kernels_alt} are at the end of appendix \ref{subsec:isometry_proofs}.

\subsection{Relating smoothness to concentration within a low-dimensional subspace}\label{sec:smoothness}
When the latent domain $\Zc$ is a subset of $\mathbb{R}^d$, we will say that $f$ is smooth if it can be expressed as the restriction of a smooth function on $\mathbb{R}^d \times \mathbb{R}^d$ to $\Zc \times \Zc$. How smooth $f$ is affects how much of the manifold $\mathcal{M}$ we can capture using only the first few coordinates. For some $s < r$, consider the truncated map
\[\phi_{s}(z)\coloneqq\left[(\lambda_{1}^{f})^{1/2} u_{1}^{f}(z)\;\;\cdots\;\;(\lambda_{s}^{f})^{1/2} u_{s}^{f}(z) \;\; 0  \;\;\cdots\;\;\right]^{\top}.\]
The eigenvalues give us a measure of how well $\mathcal{M}_{s}  \coloneqq  \phi_{s}(\Zc)$ approximates $\mathcal{M}$ through the mean square error
\begin{equation}\mathbb{E}[\|\phi(Z_i) - \phi_{s}(Z_i)\|^2_{\ell_2}] =\sum_{k>s}\lambda_k^f \mathbb{E}\left[|u_k^f(Z_i)|^2\right]= \sum_{k > s} \lambda^f_k.\label{eq:phi_s_error}
\end{equation}
The rate of decay of the eigenvalues is known to be related to the smoothness of the kernel \cite{takhanov2023speed} and so, under smoothness assumptions, \eqref{eq:phi_s_error} tells us that for $s$ suitably large the first $s$ coordinates of $\phi$ can provide a good approximation to $\mathcal{M}$, even if $r = \infty$. 
When $s\leq p$, such smoothness also implies each  vector $\mathbf{Y}_i$ will be concentrated within the (at most) $s$-dimensional subspace of $\mathbb{R}^p$  spanned by the first $s$ columns of $\mathbf{W}$. Indeed recalling the identity $\mathbf{Y}_i \stackrel{m.s.}{=}p^{1/2}\mathbf{W}\phi(Z_i)+\sigma\mathbf{E}_i$ from proposition \ref{prop:Phi_W_expansion} we have:
\begin{align*}
\mathbb{E}\left[\| \mathbf{Y}_i - p^{1/2}\mathbf{W}\phi_s(Z_i)\|^2 \right] &= \mathbb{E}\left[\|  p^{1/2}\mathbf{W}\phi(Z_i)-p^{1/2}\mathbf{W}\phi_s(Z_i)+\sigma\mathbf{E}_i\|^2 \right]\\
 &= p\mathbb{E}[\|\phi(Z_i) - \phi_{s}(Z_i)\|^2_{\ell_2}] + \sigma^2\mathbb{E}[\|\mathbf{E}_i\|^2]=p\sum_{k>s}\lambda_k^f + p\sigma^2,
\end{align*}
where the independence of $\mathbf{W}$, $\phi(Z_i)$, $\mathbf{E}_i$, the second equality in \eqref{eq:Phi_W_expansion} and the properties $\mathbb{E}[\mathbf{E}_{ij}]=0$, $\mathbb{E}[\mathbf{E}_{ij}^2]=1$  have been used.
An LMM with smooth kernel can therefore produce a data matrix $\+Y$ which is `approximately low-rank', a common feature of real data \citep{udell2019big}, hinting that PCA may be a useful tool to help recover $\phi(Z_1),\ldots,\phi(Z_n)$.



\subsection{A visual example}
To illustrate some of the concepts from section \ref{sec:connecting}, we consider a case in which  $\mathcal{Z}$ is a torus embedded
in $\mathbb{R}^{3}$, satisfying  \ref{ass:diffeo_Rd_alt} with $d=3$. We take $\mu$ to be the uniform distribution on the torus, and $Z_{1},\ldots,Z_{4000}$ simulated from $\mu$
are shown in figure \ref{fig:torus_Zs}. The colouring of the points
in this figure emphasises that the torus is the Cartesian product of two circles, and the locations on the torus can be parameterised in terms of angles around
these two circles. 
\begin{figure}[h!]
\includegraphics[width=1\columnwidth]{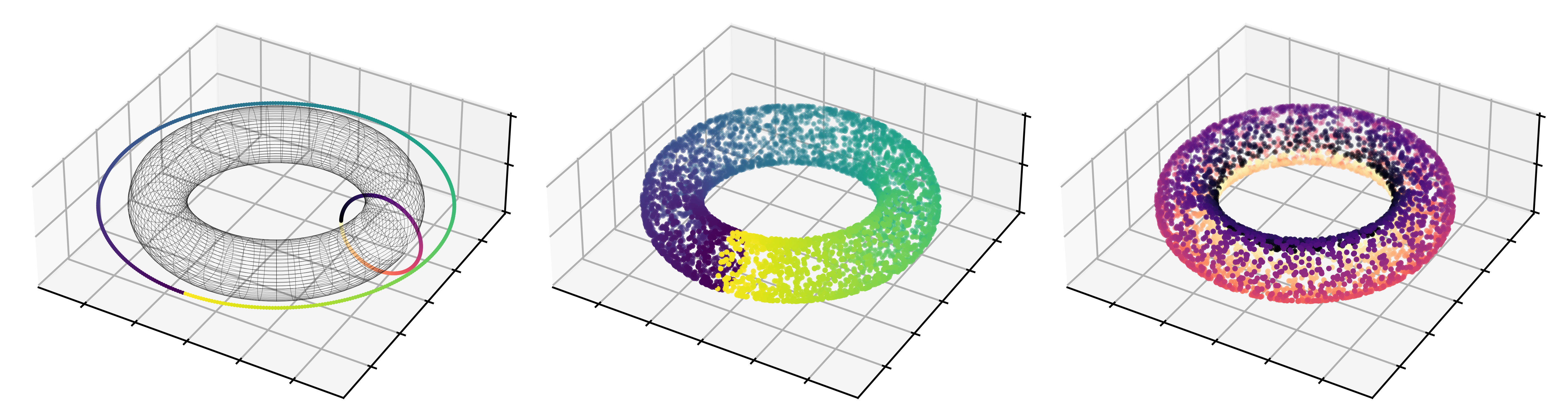}\caption{\label{fig:torus_Zs} Torus example. Left: grey wireframe of $\Zc$,
a torus, with colour bars indicating coordinates with respect to two circles. Both the middle and right plots show the same $n=4000$
points, $Z_1,\ldots,Z_{4000}$, which are sampled uniformly on the torus, coloured by their
coordinates with respect to each of the two circles.}
\end{figure}
\begin{figure}[h!]
\includegraphics[width=1\columnwidth]{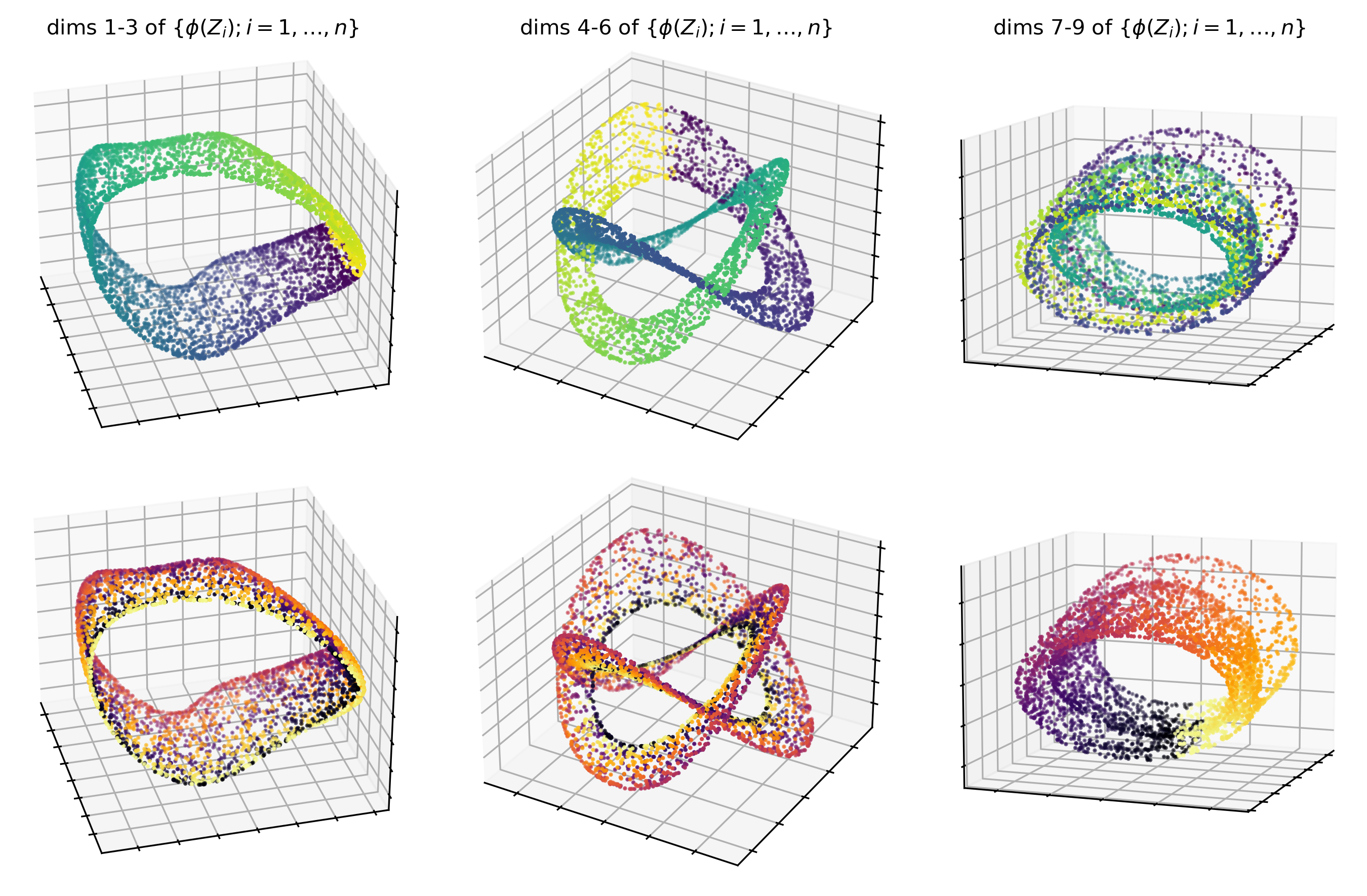}
\caption{\label{fig:torus_pcs}Torus example. Both the top and bottom rows
show the first 9 dimensions of $\phi(Z_i)$, $i=1,\ldots,4000$.
In each row, points are coloured according to the coordinates of the
underlying points $Z_{1},\ldots,Z_{n}$ with respect to the two
circles shown in figure \ref{fig:torus_Zs}. Numerical scales are
omitted to de-clutter the plots.}
\end{figure}

We assume $X_1\ldots,X_p$ are i.i.d., zero-mean Gaussian processes with common covariance function $\exp(-\|z-z^{\prime}\|_{\mathbb{R}^{3}}^{2})=f(z,z^{\prime})$,
which satisfies \ref{ass:cont_covar}. Figure \ref{fig:torus_pcs} shows numerical approximations to the first 1-3, 4-6 and 7-9 dimensions of $\phi(Z_i)$, $i=1,\ldots,4000$ (these approximations were obtained using PCA, the details of which are given later in section \ref{subsec:PCA-scores}). The only difference between the two rows of plots in figure \ref{fig:torus_pcs}
is the colouring of the points; the colouring in the top row is the
colouring of the corresponding points in the middle plot in figure
\ref{fig:torus_Zs}, similarly the colouring in the bottom row matches
that in the plot on the right of figure \ref{fig:torus_Zs}.
\begin{figure}[h!]
\begin{center}     
\includegraphics[trim={0.2cm 0.7cm 0 0.2cm}, width=0.28\textwidth]{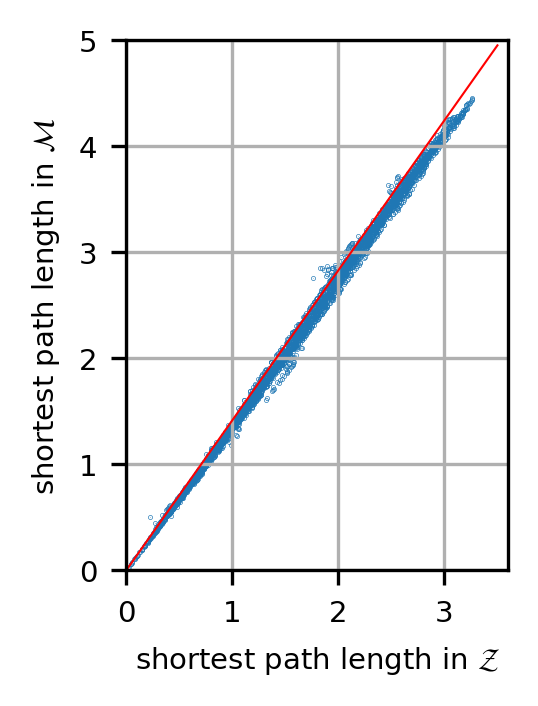}   
\end{center}   
\caption{Torus example. Blue: numerical shortest path lengths between points in $\mathcal{M}$ vs. between the corresponding points in $\Zc$. Red: theoretical scaling relationship $\sqrt{2}$.  \label{fig:torus_path_lengths}} 
\end{figure}

It is clear from figure \ref{fig:torus_pcs} that the global shape of $\mathcal{M}$, when viewed three dimensions at a time, is qualitatively different to the global shape of $\Zc$. However, assumption \ref{ass:injective} holds, so by lemma \ref{prop:homeomorphism}, we know $\mathcal{M}$ is topologically
equivalent to $\Zc$, and by proposition \ref{prop:sq_euc_kernels_alt}, $\phi$ isometry holds, up to a scaling factor of $\sqrt{-g^\prime(0)}=\sqrt{2}$ for the $f$ in question.  
 This tells us that shortest path lengths in $\mathcal{M}$ are equal to the corresponding path lengths $\Zc$, up to a factor of $\sqrt{2}$. Figure \ref{fig:torus_path_lengths} shows comparison of these shortest path lengths, computed numerically from the points in figures \ref{fig:torus_pcs} and \ref{fig:torus_Zs} using a nearest neighbour graph as detailed in section \ref{subsec:nn_graph}. We see a close approximation to the theoretical scaling factor of $\sqrt{2}$, shown by the red line. 

 Overall this example illustrates that if we are interested in discovering the topological or geometric structure of $\Zc$ based on observations of $\mathcal{M}$, we should not pay attention to the global shape of $\mathcal{M}$ that we perceive visually, because that depends on both $\Zc$ and $\phi$.  However, when homeomorphism holds, we can in principle recover the abstract topological structure of $\Zc$ and its homological features such as number of connected components, number of holes, etc., from $\mathcal{M}$. Moreover, when isometry holds, at least up to a constant scaling factor, we can gain insight into the geometry of $\Zc$ from shortest paths in $\mathcal{M}$.

\section{Methodology}\label{sec:methodology}
In this section, properties of the LMM are used to explain and justify the workflow outlined in section \ref{sec:Introduction}. Discussion of the step 1. is postponed until after discussion of step 2.

\subsection{Linear dimension reduction by PCA}\label{subsec:PCA-scores}
Given data $\mathbf{Y}\in\mathbb{R}^{n\times p}$ and $s\leq\min\{p,n\}$,
let the columns of $\mathbf{V}_{\mathbf{Y}}\in\mathbb{R}^{p\times s}$
be orthonormal eigenvectors associated with the $s$ largest eigenvalues
of $\mathbf{Y}^{\top}\mathbf{Y}\in\mathbb{R}^{p\times p}$. The \emph{dimension-$s$
PCA embedding} is the collection of vectors $\zeta_{1},\ldots,\zeta_{n}$, defined by:
\begin{equation}\label{eq:pc_scores_defn}
[\zeta_{1}|\cdots|\zeta_{n}]^{\top}\coloneqq\mathbf{Y}\mathbf{V}_{\mathbf{Y}},
\end{equation}
so for each $\zeta_{i}=\mathbf{V}_{\mathbf{Y}}^\top \Y_i$ 
is a vector in $\mathbb{R}^{s}$. These quantities are sometimes called principal component scores \cite{lee2010convergence,shen2012high,hellton2017and}.  
When performing PCA in practice, one often centers the data about their sample mean. For simplicity of presentation we do not consider such centering here, although we do not require population centering, that is we do not assume $\mathbb{E}[\Y_i]=0$.

The following assumptions about the LMM are introduced to enable theoretical analysis of the PCA embedding.
\begin{assump}\label{ass:independence}
The random functions $X_1,X_2, \ldots$ are independent.
\end{assump}
\begin{assump}\label{ass:moments_q_equals_1}
$\sup_{j\geq 1}\sup_{z\in\mathcal{Z}}\mathbb{E}[|X_{j}(z)|^{4}]<\infty$
and $\sup_{j \geq 1}\sup_{i\geq 1}\mathbb{E}[|\mathbf{E}_{ij}|^{4}]<\infty$.
\end{assump}
\begin{assump}\label{ass:finite rank}
For each $p\geq 1$, the rank $r$ of the mean correlation kernel $f$ defined in \eqref{eq:f_defn} is finite, and $r$ and   $1/\lambda_r^f$ are bounded as $p\to\infty$.
\end{assump}

\begin{thm}
\label{thm:consistency_summary}
Assume \ref{ass:cont_covar}, \ref{ass:independence}-\ref{ass:finite rank} and let 
$r$ be as therein. Let $\zeta_{1},\ldots,\zeta_{n}$
be the dimension-$r$ PCA embedding of $\Y\in\mathbb{R}^{n\times p}$ under the LMM.
Then there exists a random orthogonal matrix $\mathbf{Q\in}\mathbb{R}^{r\times r}$ depending on $n$ and $p$ such
that
\begin{equation}\label{eq:consistency_summary}
\max_{i=1,\ldots,n}\left\Vert p^{-1/2}\mathbf{Q}\zeta_{i}-\phi(Z_{i})\right\Vert_{2} \in O_{\mathbb{P}}\left(\frac{1}{\sqrt{n}}+\sqrt{\frac{n}{p}}\right)
\end{equation}
as $n\to\infty$ and $p/n\to\infty$ simultaneously, where $\|\cdot\|_2$ is the Euclidean norm.
\end{thm}
\noindent Theorem \ref{thm:consistency_summary} is a corollary to a more detailed non-asymptotic concentration result for the PCA embedding, theorem \ref{thm:uniform_consistency}, stated and proved in appendix
\ref{sec:Proof-and-supporting_consistency}.

\subsubsection*{Interpretation of theorem \ref{thm:consistency_summary}} Theorem \ref{thm:consistency_summary} implies that for any $\epsilon>0$, the probability that $\max_{i=1,\ldots,n}\left\Vert p^{-1/2}\mathbf{Q}\zeta_{i}-\phi(Z_{i})\right\Vert_{2}>\epsilon$ converges to zero when $n$ and $p/n$ grow simultaneously. In that sense $\phi(Z_1),\ldots, \phi(Z_n)$ can be recovered from $p^{-1/2}\zeta_1,\ldots,p^{-1/2}\zeta_n$, up to an orthogonal transformation, i.e., a transformation which preserves distances and inner-products. We see that computing the PCA embedding achieves a form of de-noising and signal extraction: each $\zeta_i$ depends on all three sources of randomness in the LMM, but $\phi(Z_i)$ clearly depends only on the random latent variable $Z_i$.

The result has positive implications for different forms of unsupervised learning, such as clustering, topological data analysis or manifold learning in the regime $n, p/n \rightarrow \infty$.
Viewed as sets, the point clouds $\{p^{-1/2}\zeta_{i}\}_{i = 1, \ldots, n}$ and $\{\phi(Z_i)\}_{i={1, \ldots, n}}$ converge to each other in Hausdorff distance, up to $\+Q$, implying convergence of topological summaries such as persistence diagrams \citep{wasserman2018topological}, and so on. Broadly speaking, 
we can consider $p^{-1/2}\zeta_1,\ldots,p^{-1/2}\zeta_n$ as proxies for $\phi(Z_1), \ldots, \phi(Z_n)$ when estimating features of $\mathcal{M}$. The LMM then gives us a way to translate such estimates into statements about the latent domain $\Zc$ (see section~\ref{sec:connecting}).

\subsubsection*{Discussion of assumptions \ref{ass:independence}-\ref{ass:finite rank}}

In the proof of theorem \ref{thm:uniform_consistency} and hence theorem \ref{thm:consistency_summary}, the independence assumption \ref{ass:independence}
and the moment assumption \ref{ass:moments} are used when analysing $p^{-1}\Y \Y^\top$ via matrix a polynomial moment concentration inequality from \cite{paulin2016efron}. The moment assumption  \ref{ass:moments} is not particularly restrictive. From a modelling point of view relaxing \ref{ass:independence} to some form of weak dependence or mixing condition would be desirable, but the authors do not know of any suitable polynomial moment matrix concentration inequalities which are applicable in that situation.

Concerning assumption \ref{ass:finite rank}, that $f$ has finite rank: recall from  section~\ref{sec:smoothness} that the eigenvalues tend to tail off quickly when $f$ is smooth, in which case assumption \ref{ass:finite rank} might be taken to hold approximately. Moreover, if $f$ is polynomial \citep{rubin2020manifold} or piecewise polynomial, or if $\Zc$ consists of finitely many points, then $f$ has strictly finite rank. As a result, assumption \ref{ass:finite rank} is mild enough to include any function $f$ which is obtainable from standard numerical or function approximation schemes (e.g. Taylor expansion, polynomial splines, etc).

The condition that  $r$ is bounded as $p\to\infty$ in \ref{ass:finite rank} can be understood as constraining the functional complexity of  $f$ as $p$ grows. The condition that  $1/\lambda_r^f $ is bounded as $p\to\infty$ means that the additive noise whose scale is specified by the constant $\sigma$ cannot overwhelm the ``signal'' in the LMM. If $X_1,X_2,\ldots$ are identically distributed then $f$, and hence $r$ and $\lambda_r^f$, are automatically constant in $p$. The assumption in  theorem \ref{thm:consistency_summary} that the dimension of the PCA embedding is equal to the finite kernel rank $r$ is an idealisation, although a very common type of assumption in uniform consistency results for spectral embedding, e.g. \cite{lyzinski2016community}.

\subsection{Choosing the PCA dimension} \label{subsec:Choosing-r}
 Our model and theory motivate a new method for choosing the embedding dimension, $\hat r$. Before proceeding, we should make clear that the hat notation in $\hat r$ is meant loosely: we seek a choice which achieves a good bias/variance trade-off in practice, and this may or may not coincide with the true rank of the kernel, $r$. Moreover, we do not claim that there is a `best' choice: different tasks benefit from different choices. In particular, if using PCA for prediction purposes we simply recommend cross-validation, as is common practice. For more exploratory analyses, as conducted here, we propose the following approach instead.
 
Assuming $n$ is even split the data $\Y$ into two, $\Y^{(1)},\Y^{(2)}\in\mathbb{R}^{n/2\times p}$, and for each candidate dimension $\rho$, take the orthogonal projection of the rows of $\Y^{(1)}$ onto the $\rho$ principal eigenvectors of ${\Y^{(1)}}^\top\Y^{(1)}$ --- the resulting $n/2$ vectors are $p$-dimensional, just constrained to a $\rho$-dimensional subspace. Next, measure how much this projection step has brought the first half closer to the second, using Wasserstein distance. Select $\hat{r}$ to be the $\rho$ achieving the smallest distance. The procedure is described precisely in algorithm~\ref{alg:dim_select}. The algorithm accommodates the possibility that there may be some maximal value $\rho_{\text{max}} <\min(n,p)$ that one is willing to consider, e.g. for computational reasons.

When calculating the Wasserstein distance between two sets of $n$ samples in $\mathbb{R}^p$, the overall computational complexity can be up to $O(n^2(p+ n\log n))$. One way to reduce the burden of computing Wasserstein distances over very large dimensional point sets, is to reduce the data to $\rho_{\text{max}} \ll p$ dimensions using PCA. Reducing the complexity further using approximations is well studied \cite{cuturi2013sinkhorn,arjovsky2017wasserstein,bonneel2015sliced,arjovsky2017wasserstein}. A variety of methods are available in the
Python packages \verb|POT| \citep{flamary2021pot} and  \verb|Geomloss| \citep{feydy2019interpolating}.


To understand how $\hat{r}$ might relate to $r$, let us make a few simplifying assumptions. Suppose $r < \infty$, so that (with exact equality)
\[\+Y_i = p^{1/2} \+W \phi(Z_i) + \sigma \+E_i,\]
and that the second Wasserstein distance is used, that is $d_\rho = \mathcal{W}_2(\mathbf{Y}^{(1)}\boldsymbol{\Pi}_\rho, \mathbf{Y}^{(2)})$ where
\[\mathcal{W}^2_2(\mathbf{A},\mathbf{B}) \coloneqq \min_\pi \frac{1}{m} \sum \lVert \mathbf{A}_i - \mathbf{B}_{\pi(i)} \rVert^2_2, \quad  \+A, \+B \in \mathbb{R}^{m \times d},\]
where $\+A_i$ and $\+B_i$ are the $i$th rows of $\+A$ and $\+B$, and where the minimum is over all permutations of the integers $1, \ldots, m$.

The second Wasserstein distance is particularly amenable to mathematical analysis because of the following property, which can be checked by direct calculation. If there exist $\hat{\+A}_1, \ldots, \hat{\+A}_m$ such that for all $i,j$ we have
$\langle{\+A}_i - \hat{\+A}_i, \+B_j \rangle = 0$, then
\begin{equation}\label{eq:wass_id}\mathcal{W}_2^2(\+A, \+B) = \frac{1}{m}\lVert \+A - \hat{\+A} \rVert^2_{\text{F}}  + \mathcal{W}_2^2(\hat{\+A}, \+B).
\end{equation}
To see why algorithm~\ref{alg:dim_select} might reject overly values of $\rho$, suppose $\rho > r$ and consider the projection errors $\boldsymbol{\mathcal{E}}^{(1)} = \mathbf{Y}^{(1)}(\boldsymbol{\Pi}_\rho-\boldsymbol{\Pi}_r)$.
With $\boldsymbol{\mathcal{E}}^{(1)}_i$ denoting the $i$th row of $\boldsymbol{\mathcal{E}}^{(1)}$ and the superscript $(k)$ indicating random objects associated with $\+Y^{(k)}$, suppose
\begin{equation}
  \frac{1}{p}\langle\: \boldsymbol{\mathcal{E}}^{(1)}_i, \+E^{(2)}_j\rangle  \approx 0, \quad \text{and} \quad 
  \frac{1}{p} \langle\: \boldsymbol{\mathcal{E}}_i^{(1)}, \sqrt{p} \+W \phi(Z^{(2)}_j) \rangle  \approx 0, \quad i,j = 1, \ldots, n/2. \label{eq:orthogonality}
\end{equation}
Then, using \eqref{eq:wass_id},
\[\frac{1}{p} d^2_\rho \approx \frac{2}{np} \lVert \boldsymbol{\mathcal{E}}^{(1)} \rVert^2_{\text{F}} + \frac{1}{p}d^2_r,\]
where $\lVert \boldsymbol{\mathcal{E}}^{(1)} \rVert^2_{\text{F}}$ is non-decreasing in $\rho - r$,
and it follows that we should expect $d_\rho > d_r$ when $\rho$ is large relative to $r$.

Why should the approximations in \eqref{eq:orthogonality} hold? The first is the product of sample-splitting. If the $\+E_i$ are statistically independent, then $\+E^{(2)}$ is statistically independent of $\+V^{(1)}_{\rho}$. Combined with the fact that the elements of  $\+E^{(2)}_i$ are mean-zero and unit-variance, we therefore expect the $p^{-1}\+E^{(2)}_i$ to be approximately orthogonal to the subspace spanned by the columns of $\+V^{(1)}_{\rho}$, \emph{when $p$ is large relative to $n$}. The second approximation seems to be reasonable as long as $n$ and $p$ are large, and we have confirmed this by simulation. 

Now consider the case $\rho < r$. We have $$d_\rho^2 = \mathcal{W}_2^2(\mathbf{Y}^{(1)}\boldsymbol{\Pi}_\rho,\mathbf{Y}^{(2)}) = \frac{2}{n}\|\Y^{(2)}(\boldsymbol{\Pi}_\rho-\mathbf{I}_p)\|_{\mathrm{F}}^2 + \mathcal{W}_2^2(\mathbf{Y}^{(1)}\boldsymbol{\Pi}_\rho,\mathbf{Y}^{(2)}\boldsymbol{\Pi}_\rho).$$
By the Eckart-Young theorem, 
\begin{equation}\label{eq:eck-yng}
\|\mathbf{Y}^{(2)}(\boldsymbol{\Pi}_{\rho}-\mathbf{I}_p)\|_{\mathrm{F}}^{2}\geq \sum_{k>\rho}\lambda_{k}(\mathbf{Y}^{(2)\top}\mathbf{Y}^{(2)})
\end{equation}
where $\lambda_{k}(\mathbf{Y}^{(2)\top}\mathbf{Y}^{(2)})$ is the $k$th largest eigenvalue of $\mathbf{Y}^{(2)\top}\mathbf{Y}^{(2)}$.
The r.h.s. of \eqref{eq:eck-yng} is non-decreasing in $r-\rho$. To see that the term $\mathcal{W}_2^2(\mathbf{Y}^{(1)}\boldsymbol{\Pi}_\rho,\mathbf{Y}^{(2)}\boldsymbol{\Pi}_\rho)$ converges to zero as $p/n,n\to\infty$, it is convenient to consider the case $\sigma=0$.  In this situation, 
$$
\frac{1}{p}\mathcal{W}_2^2(\mathbf{Y}^{(1)}\boldsymbol{\Pi}_\rho,\mathbf{Y}^{(2)}\boldsymbol{\Pi}_\rho)\leq \frac{1}{p}\mathcal{W}_2^2(\mathbf{Y}^{(1)},\mathbf{Y}^{(2)})=\frac{1}{p}\mathcal{W}_2^2(\boldsymbol{\Phi}^{(1)}\mathbf{W}^\top,\boldsymbol{\Phi}^{(2)}\mathbf{W}^\top)
$$
where $\boldsymbol{\Phi}^{(k)}=[\phi(Z_1^{(k)})|\cdots|\phi(Z_{n/2}^{(k)})]^\top\in\mathbb{R}^{n/2\times r}$.
Appealing to the same arguments as in section \ref{subsec:intro_Phi_W}, as $p/n\to\infty$, $\frac{1}{p}\mathcal{W}_2^2(\boldsymbol{\Phi}^{(1)}\mathbf{W}^\top,\boldsymbol{\Phi}^{(2)}\mathbf{W}^\top)$ is concentrated about $\mathcal{W}_2^2(\boldsymbol{\Phi}^{(1)},\boldsymbol{\Phi}^{(2)})$, and the latter converges to zero as $n\to\infty$ because the rows of  $\boldsymbol{\Phi}^{(1)} $ and $\boldsymbol{\Phi}^{(2)}$ are i.i.d. random vectors in $\mathbb{R}^r$. It follows that we should expect $d_{\rho} > d_{r}$ when $\rho$ is small relative to $r$, and so overall that $d_{\rho}$ will have a minimum near $\rho=r$.

A general rule we could draw from these arguments, and which we see in practice, is that to recommend substantial dimension reduction the algorithm wants to see a large $p$ relative to $n$, and noise. Conversely, if the noise level is low or if $n$ is large relative to $p$, then $d_\rho$ may keep decreasing with $\rho$, which we tend to interpret as contraindication against PCA.

\subsubsection*{Method comparison}
We now explore the performance of Algorithm~\ref{alg:dim_select} in a few simulated examples. We consider four configurations, where each configuration refers to a choice of latent space $\mathcal{Z}$ and corresponding kernel $f$. In the first configuration, the latent space comprises six distinct elements. The latent spaces in the remaining configurations are different subsets of $\mathbb{R}^2$. In each configuration, we draw $n = 500$ points $Z_i$ uniformly on $\mathcal{Z}$, and the resulting point sets are shown in figures~\ref{fig:dimension_selection}a)1-4.

In the first configuration, we choose an arbitrary $6 \times 6$ positive-definite matrix to represent the kernel. In the second, $f(x,y)= (x^\top y +1)^2$, which has rank 6; in the third, $f(x,y) = \{\cos(x^{(1)} - y^{(1)}) + \cos(x^{(2)} - y^{(2)}) +2\}$, which has rank 5; and in the fourth, $f(x,y)=\exp(-\|x-y\|_{\mathbb{R}^2}^{2}/2)$, which has infinite rank.

We simulate a $500 \times 1000$ data matrix $\mathbf{Y}$ in each configuration, where the $p=1000$ random functions are independent, zero mean Gaussian processes with the same covariance kernel $f$, and the errors $\mathbf{E}_{ij}$ are independent and standard normal.


In figures~\ref{fig:dimension_selection}c)1-4 we show the Wasserstein error (log-scale), i.e., the distance computed in Algorithm~\ref{alg:dim_select}, for dimensions up to $\rho_{\text{max}} = 30$. Reassuringly, the optimum roughly coincides with the true rank of the kernel when finite (dashed black line, configurations 1-3) and at the same time it is interesting that a non-degenerate optimum is still found under infinite rank (configuration 4). If we lower the noise, the optimal dimension increases (figure~\ref{fig:Wasserstein_varying_sigma}, Appendix), reflecting the afore-mentioned bias/variance trade-off.

For comparison, we also show the dimensions selected using the ladle \citep{luo2016combining} and elbow methods \citep{zhu2006automatic}, as implemented in the R packages `dimension' (on GitHub: \url{https://github.com/WenlanzZ}) and `igraph' (on The Comprehensive R Archive Network), respectively. The ladle and Wasserstein methods seem to make similar choices, although as implemented the ladle method is computationally costly, which has precluded more simulations or going beyond $\max(n,p) = 1000$ to allow a more comprehensive comparison. We would advise against the elbow method for dimension selection under the LMM, as it appears to favour dangerously low dimensions.

In configurations 3 and 4, there is isometry between $\mathcal{M}$ and $\mathcal{Z}$. As a result, we can aim to recover the path lengths in $\Zc$ amongst $Z_1, \ldots, Z_n$ from $p^{-1/2} \zeta_{1}, \ldots,p^{-1/2} \zeta_{n}$ -- see section \ref{subsec:nn_graph} for details. This method yields infinite distances when the $k$-nearest neighbour graph isn't connected. Dealing with this issue in a systematic way is awkward, and we settled on the following solution. Picking $\epsilon$ as the $5\%$ quantile of the $\mathbb{R}^2$ Euclidean distance matrix between the $Z_i$, we place an edge between any pair of points within distance $\epsilon$, weighted by Euclidean distance, and approximate the geodesic distance between two points as the corresponding weighted graph distance. Any infinite distance remaining is replaced with the original Euclidean distance. The blue line in figures~\ref{fig:dimension_selection}d)3-4 shows the entrywise mean square error between the estimated geodesic distance matrices of $Z_1, \ldots, Z_n$ and  $p^{-1/2} \zeta_{1}, \ldots,p^{-1/2} \zeta_{n}$, for different choices of $r$. The optimum roughly coincides with the dimensions selected by the ladle and Wasserstein methods.

Because of the isometric relationship between $\mathcal{M}$ and $\mathcal{Z}$ in configurations 3 and 4, we might also hope that the persistence diagrams of their Rips filtrations would be similar. 
The red line in figures~\ref{fig:dimension_selection}d)3-4 shows the bottleneck distance between the persistence diagrams of the Rips filtrations of $Z_1, \ldots, Z_n$ and $p^{-1/2} \zeta_{1}, \ldots,p^{-1/2} \zeta_{n}$, as implemented in the R package `TDA', for different choices of $r$. In this metric, the optimal dimension (lowest bottleneck distance) is lower than that suggested by the ladle and Wasserstein methods, but we do not know to what extent this should be expected in general. The scales of the log-Wasserstein error, geodesic distance error, and bottleneck distance are not comparable and in figures~\ref{fig:dimension_selection}d) we have recentered and rescaled the curves to make their maxima and minima agree.

In figures~\ref{fig:dimension_selection}e)1-4 we show the persistence diagrams of the Rips filtrations of $p^{-1/2} \zeta_{1}, \ldots, p^{-1/2} \zeta_{n}$ computed on the basis of the dimension selected by the Wasserstein method (Algorithm~\ref{alg:dim_select}), using the R package `TDA'. Recall that in persistent homology the significance of a topological feature is quantified by its persistence, death minus birth, which is the vertical distance between the point (birth,death) to the diagonal $x=y$. Following \citet{fasy2014confidence} we draw a line parallel to $x=y$ to separate the signal from the noise, picking $y = x + 0.2$ by eye. In each figure, we report the number of connected components, $\hat \beta_0$, and holes, $\hat \beta_1$, estimated by this heuristic. The true corresponding values for $\mathcal{Z}$ are respectively (6,0), (1,8), (1,0), and (1,1).

\begin{algorithm}[t]
 \caption{PCA dimension selection}\label{alg:dim_select}
 \begin{flushleft}
  \textbf{Input:} data matrix $\mathbf{Y}\in\mathbb{R}^{n\times p}$; \\
  \hphantom{\textbf{Input:} }maximal dimension $\rho_{\text{max}} \leq \min(n,p)$, default $\rho_{\text{max}} = \min(n,p)$.
 \end{flushleft}
 \vspace{-1em}
\begin{algorithmic}[1]
  \STATE Split the data as $\mathbf{Y}^{(1)} \coloneqq \mathbf{Y}_{1:\lceil n/2 \rceil, 1:p}$, $\mathbf{Y}^{(2)} \coloneqq \mathbf{Y}_{(\lceil n/2 \rceil + 1): n, 1:p}$
  \FOR{$\rho\in\{1,...,\rho_{\text{max}}\}$} 
  \STATE Let $\mathbf{V}^{(1)}_{\rho} \in \mathbb{R}^{p\times \rho}$ denote the matrix of orthogonal eigenvectors associated with the $\rho$ largest eigenvalues of $\mathbf{Y}^{(1)\top}\mathbf{Y}^{(1)}$.
  \STATE Project  $\mathbf{Y}^{(1)}$ onto the linear span of  the columns of $\mathbf{V}^{(1)}_{\rho}$, that is, compute: $\mathbf{Y}^{(1)}\boldsymbol{\Pi}_\rho $ where $\boldsymbol{\Pi}_\rho\coloneqq\mathbf{V}^{(1)}_{\rho} \mathbf{V}^{(1)\top}_{\rho}$.
  \STATE Compute Wasserstein distance $d_{\rho}$ between $\mathbf{Y}^{(1)}\boldsymbol{\Pi}_\rho$ and $\mathbf{Y}^{(2)}$ as point sets in $\mathbb{R}^p$.
  \ENDFOR
\end{algorithmic}
\vspace{-1em}
 \begin{flushleft}
  \textbf{Output:} selected dimension $\hat r = \text{argmin } \{d_{\rho}\}$.
 \end{flushleft}
 \vspace{-1em}
 \end{algorithm}


 \begin{figure}
    \centering
    \includegraphics[trim={1cm 1cm 2cm 1cm}, width=\textwidth]{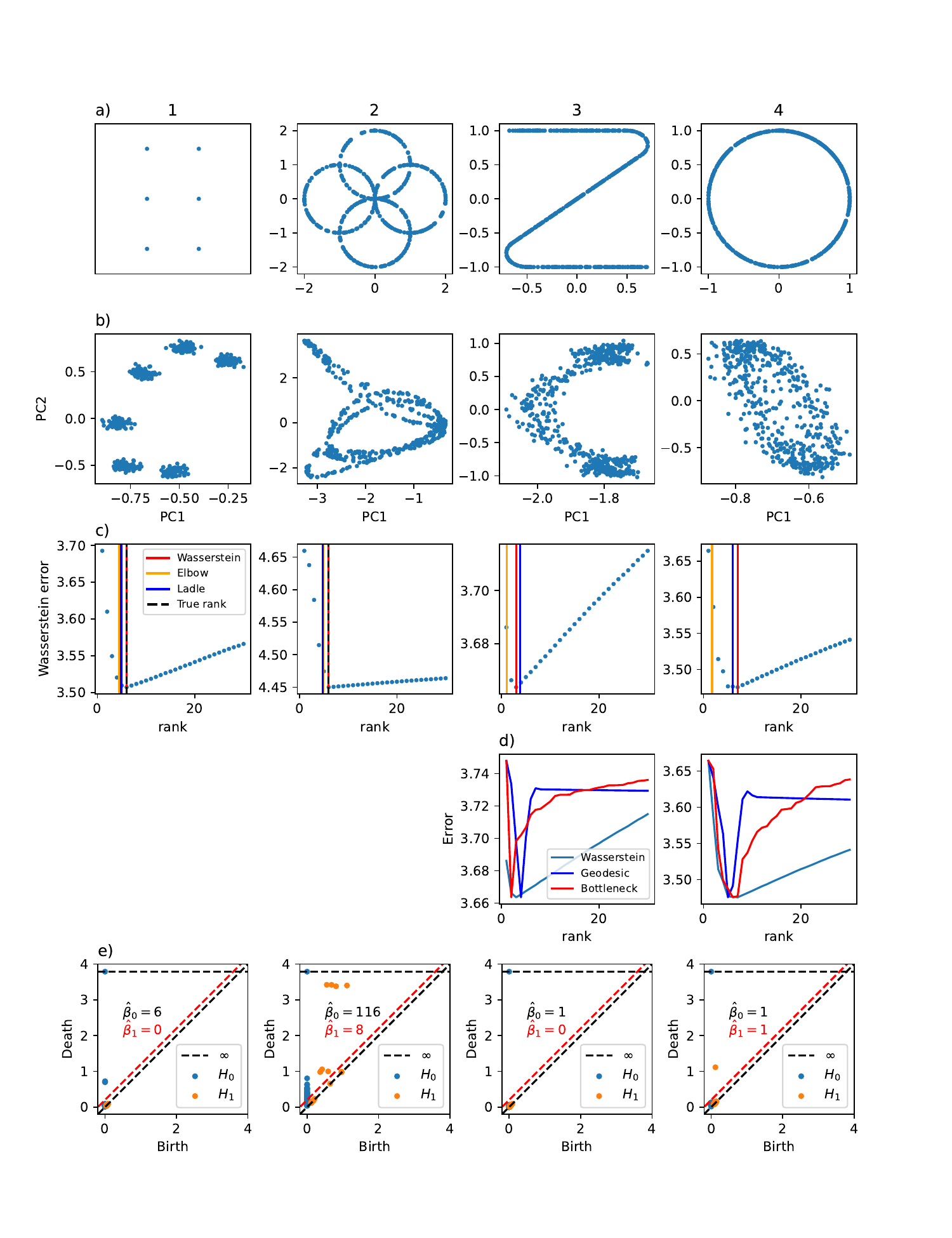}
    \caption{PCA dimension selection. Columns 1-4: different latent space/kernel configurations. 1-3 are finite rank, 4 infinite rank; configurations 3 and 4 are isometric. Row a: sampled positions ($n=500$); b: first two principal components ($p=1000$); c: the dimension selected by different methods, and the true rank when finite; d: error in geodesic distance and persistence diagram estimation (bottleneck distance) for the isometric configurations; e: persistence diagrams showing partial recovery of true topological features. Further details in main text.}
    \label{fig:dimension_selection}
\end{figure}

\subsection{Spherical projection}\label{sec:spherical_proj}
When performing data analysis we may wish to consider the assumption that $f$ belongs to one of the families of kernels in proposition \ref{prop:sq_euc_kernels_alt} or \ref{prop:inner_prod_kernels_alt}, because of their stationarity interpretation, and because the associated isometry properties would justify use of the PCA embedding to recover geometric features of $\Zc$. However, all these kernels have the property that \begin{equation} p^{-1}\sum_{j=1}^p\mathbb{E}[|X_j(z)|^2] = f(z,z)= \mathrm{const.},\label{eq:f(z,z)_equals_const}
\end{equation}
which from a modelling point of view may be restrictive.  We shall now show that the spherically projected PCA embedding has a model-based interpretation which allows this restriction to be loosened.

Suppose we are given random functions $X_1,\ldots,X_p$ such that $f(z,z)$ is constant in $z\in\Zc$. Without loss of generality assume this constant is $1$. As usual let $\phi$ be the feature map associated with $f$. Note that in this situation $\mathcal{M}$ is a subset of the unit hypersphere $\{x\in\ell_2:\|x\|_{\ell_2}=1\}$. With $\alpha_1,\ldots,\alpha_n$ being i.i.d. random variables whose distribution is supported on  a compact set $\mathcal{A}\subset\mathbb{R}_+$, define the model:
\begin{equation}
\Y_{ij}=\alpha_i X_j(Z_i)+\sigma\E_{ij}.\label{eq:LMM_extended}
\end{equation}
This can be viewed as a particular form of LMM with extended latent space $\Zc^{\mathrm{ext}}\coloneqq \mathcal{A}\times\Zc$ and extended random functions $X^{\mathrm{ext}}_j(\alpha,z)\coloneqq \alpha X_j(z)$. Its mean correlation kernel is:
$$
f^{\mathrm{ext}}(\alpha,z,\alpha^\prime,z^\prime)\coloneqq\frac{1}{p}\sum_{j=1}^p \mathbb{E}\left[X^{\mathrm{ext}}_j(\alpha,z)X^{\mathrm{ext}}_j(\alpha^\prime,z^\prime)\right] = \alpha\langle\phi(z),\phi(z^\prime)\rangle_{\ell_2}\alpha^\prime,
$$
the Mercer feature map of $f^{\mathrm{ext}}$ satisfies: $\phi^{\mathrm{ext}}(\alpha,z)=\alpha \phi(z)$, and we have $p^{-1}\sum_{j=1}^p\mathbb{E}[|X^{\mathrm{ext}}_j(\alpha,z)|^2] =\alpha^2$,
allowing more flexibility than \eqref{eq:f(z,z)_equals_const}.

Now suppose assumption \ref{ass:finite rank} holds, let $\zeta_1,\ldots,\zeta_n$ be the dimension-$r$ PCA embedding computed from $\Y$ under the extended LMM \eqref{eq:LMM_extended} and consider the spherical projection $\zeta_i^{\mathrm{sp}}\coloneqq \zeta_i/\|\zeta_i\|_2$. Using the identities $\|\phi^{\mathrm{ext}}(\alpha,z)\|_2=\alpha \|\phi(z)\|_2 = \alpha f(z,z)^{1/2}=\alpha$, and applying the triangle inequality several times gives:
\begin{align*}
\left\|\mathbf{Q}\zeta_i^{\mathrm{sp}}- \phi(Z_i)\right\|_2& = \left\|\frac{p^{-1/2}\mathbf{Q}\zeta_i}{p^{-1/2}\|\zeta_i\|_2}- \frac{\phi^{\mathrm{ext}}(\alpha_i,Z_i)}{\|\phi^{\mathrm{ext}}(\alpha_i,Z_i)\|_2}\right\|_2 \\
&\leq 2\frac{\|p^{-1/2}\mathbf{Q}\zeta_i -\phi^{\mathrm{ext}}(\alpha_i,Z_i) \|_2}{\alpha_i - \|p^{-1/2}\mathbf{Q}\zeta_i -\phi^{\mathrm{ext}}(\alpha_i,Z_i) \|_2},
\end{align*}
where $\mathbf{Q}$ is any orthogonal matrix. Theorem \ref{thm:consistency_summary} could therefore be applied to the LMM \eqref{eq:LMM_extended} to establish that for the particular $\mathbf{Q}$ in that theorem, $\|p^{-1/2}\mathbf{Q}\zeta_i -\phi^{\mathrm{ext}}(\alpha_i,Z_i) \|_2\to 0$, which by the above inequality implies $\left\|\mathbf{Q}\zeta_i^{\mathrm{sp}}- \phi(Z_i)\right\|_2\to 0$. In summary, under the model \eqref{eq:LMM_extended}, $\phi(Z_1),\ldots,\phi(Z_n)$ can be recovered from the spherically projected embedding $\zeta_1^{\mathrm{sp}},\ldots,\zeta_n^{\mathrm{sp}}$, up to an orthogonal transformation.

\subsection{Nearest neighbour graph construction}\label{subsec:nn_graph}

Constructing a nearest neighbour graph from the PCA embedding allows us to approximate topological and geometric features of $\mathcal{M}$ and hence $\Zc$. In keeping with the workflow set out in section \ref{sec:Introduction}, we focus on the spherically projected embedding $\zeta_1^{\mathrm{sp}},\ldots,\zeta_n^{\mathrm{sp}}$ but very similar considerations apply to the raw embedding $\zeta_1,\ldots,\zeta_n$. Noting that $\|\zeta_i^{\mathrm{sp}}\|_2=1$ for all $i$, we denote by $d_{\mathbb{S}}(\zeta_i^{\mathrm{sp}},\zeta_j^{\mathrm{sp}})\coloneqq\arccos(\langle\zeta_i^{\mathrm{sp}},\zeta_j^{\mathrm{sp}}\rangle_2)$ the circular arc distance on the unit hypersphere.

There are two popular types of nearest neighbour graph: the $\epsilon$-nn and $k$-nn graphs, both of which are undirected, weighted graphs with $n$ vertices, identified with $\zeta_1^{\mathrm{sp}},\ldots,\zeta_n^{\mathrm{sp}}$.  There is an edge between $\zeta_i^{\mathrm{sp}}$ and $\zeta_j^{\mathrm{sp}}$ in the $\epsilon$-nn graph if $d_{\mathbb{S}}(\zeta_i^{\mathrm{sp}},\zeta_j^{\mathrm{sp}})\leq \epsilon$, and in the $k$-nn graph if $\zeta_i^{\mathrm{sp}}$ is one of the $k$-nearest (with respect to $d_{\mathbb{S}}$) neighbours of  $\zeta_j^{\mathrm{sp}}$ or vice versa. In both types of graph, if there is an edge between $\zeta_i^{\mathrm{sp}}$ and $\zeta_j^{\mathrm{sp}}$ it is assigned weight $d_{\mathbb{S}}(\zeta_i^{\mathrm{sp}},\zeta_j^{\mathrm{sp}})$. A number of algorithms for identifying nearest neighbours exactly or approximately are available, for example in the Python library \verb|scikit-learn| \cite{scikit-learn}.

Recalling \eqref{eq:d_M_and_d_Z_defn}, nearest neighbour graph distances can be used to approximate shortest path lengths in $\mathcal{M}$:
$$
d_\mathcal{M}^{\mathrm{geo}}(\phi(Z_i) ,\phi(Z_j)) \approx \mathbf{D
}_{\mathcal{M}}^{ij}\coloneqq \min_{x_1,\ldots,x_m} \left\{d_{\mathbb{S}}(x_1,x_2)+\cdots+d_{\mathbb{S}}(x_{m-1},x_m)\right\},
$$
where the minimum is over all paths in the nearest neighbour graph connecting $x_1=\zeta_i^{\mathrm{sp}}$ and $x_m = \zeta_j^{\mathrm{sp}}$. If there are no such paths, $\mathbf{D
}_{\mathcal{M}}^{ij}=\infty$ by convention.

Various fast algorithms for computing shortest paths and shortest path lengths are available, for example in the Python library \verb|NetworkX| \cite{hagberg2008exploring}, which can be parallelised \verb|nx-parallel| (https://github.com/networkx/nx-parallel). If further speed-up in computing shortest paths length is needed, one might consider contraction hierarchies \cite{geisberger2008contraction}.

The use of nearest neighbour graphs to approximate path lengths on manifolds is well studied and is the first step in the Isomap procedure \cite{tenenbaum2000global}. The theoretical accuracy of such approximations has been analysed by
\cite{bernstein2000graph,trosset2020rehabilitating}.  In particular \cite{trosset2020rehabilitating} note that the $k$-nn graph is often preferred in practice although its analysis is more complicated. They also note that choosing a single value for $\epsilon$ or $k$ is a difficult problem in general. Where possible in the examples of section \ref{sec:examples} we compute and analyse the  nearest neighbour graph over a range of values for $\epsilon$ or $k$, rather than selecting one single value. This approach is similar in spirit to the computation of $\epsilon$-nn graphs over a range of $\epsilon $ values in persistent homology techniques \cite{carlsson2009topology,chazal2021introduction}.

\section{Examples}\label{sec:examples}
In the following three real data examples, we will assume the LMM holds and explore hypotheses about the latent domain, $\Zc$, feature map, $\phi$, and the manifold underlying the data, $\mathcal{M} = \phi(\Zc)$. In all examples, we have access to background information. In some cases, such as the first hypothesis in the image and transcriptomics examples, the background points us towards generic hypotheses, such as `$\mathcal{M}$ is a loop' or `$\mathcal{M}$ is a tree'. In others, e.g. the second hypothesis in the image and transcriptomics examples, the information is used more explicitly to obtain trial values for $\Zc$, realisations $z_i$ of $Z_i$, and to estimate parts of the model. 

In his famous book ``Exploratory Data Analysis'' \citep{edatukey}, Tukey observed that there was a readily accepted division in the process of criminal justice between ``the search for the evidence --- the responsibility of the police and other investigative forces --- and the evaluation of the evidence's strength --- a matter for juries and judges''. In terms of this analogy, the workflow we present is directed towards the former activity --- the search for clues, indications, appearances. In the examples to follow, we do not attempt to formally evaluate the strength of the evidence presented, beyond baseline comparisons against uniform models. We regard this confirmatory analysis as an important but distinct undertaking requiring different techniques.

The code and data used are available here: \url{https://github.com/anniegray52/explore_manifold_hyp}

\subsection{Images}\label{subsec:ex_images}
We return to the data set of images described in section \ref{sec:Introduction}. Recall there are $n=72$ images, each consisting of $p=110592$ grey-scale pixels, taken from angles $0,5,10,\ldots,355$ degrees around the circumference of a circle. We will assume XY coordinates for the camera positions, $\cos(\theta_i), \sin(\theta_i)$ for each angle $\theta_i$ (converted to radians). In this context, we will first consider the hypothesis:
\begin{enumerate}
\item[1.] $\mathcal{Z}$ is a circle and $\phi$ is a homeomorphism. An informal implication is: The data lie close to a loop.
\end{enumerate}
Finding this hypothesis to be tenable given the data, we will consider the stronger hypothesis:
\begin{enumerate}
\item[2.] $\mathcal{Z}$ is the circle of camera positions, $z_i = (\cos(\theta_i), \sin(\theta_i))$ are the (known) camera positions, and $\phi$ is a scaled isometry.
An informal implication is: Distances along the loop correspond to distances along the circle between camera positions. 
\end{enumerate}

The first step of the workflow is to apply the dimension selection method. As per figure \ref{fig:cars_EDA}a), using $\rho_{\text{max}} = 35$, this results in $\hat{r}=11$. The kernel density estimate in \ref{fig:cars_EDA}b) demonstrates the variation in the magnitudes of the dimension-$\hat{r}$ PCA embedding vectors $\|\zeta_1\|_2, \ldots, \|\zeta_n\|_2$; in all subsequent steps we instead work with the spherically projected embedding $\zeta_1^{\mathrm{sp}},\ldots,\zeta_n^{\mathrm{sp}}$ as per \eqref{sec:spherical_proj}. 
\begin{figure}[h!]
\includegraphics[width=1\columnwidth]{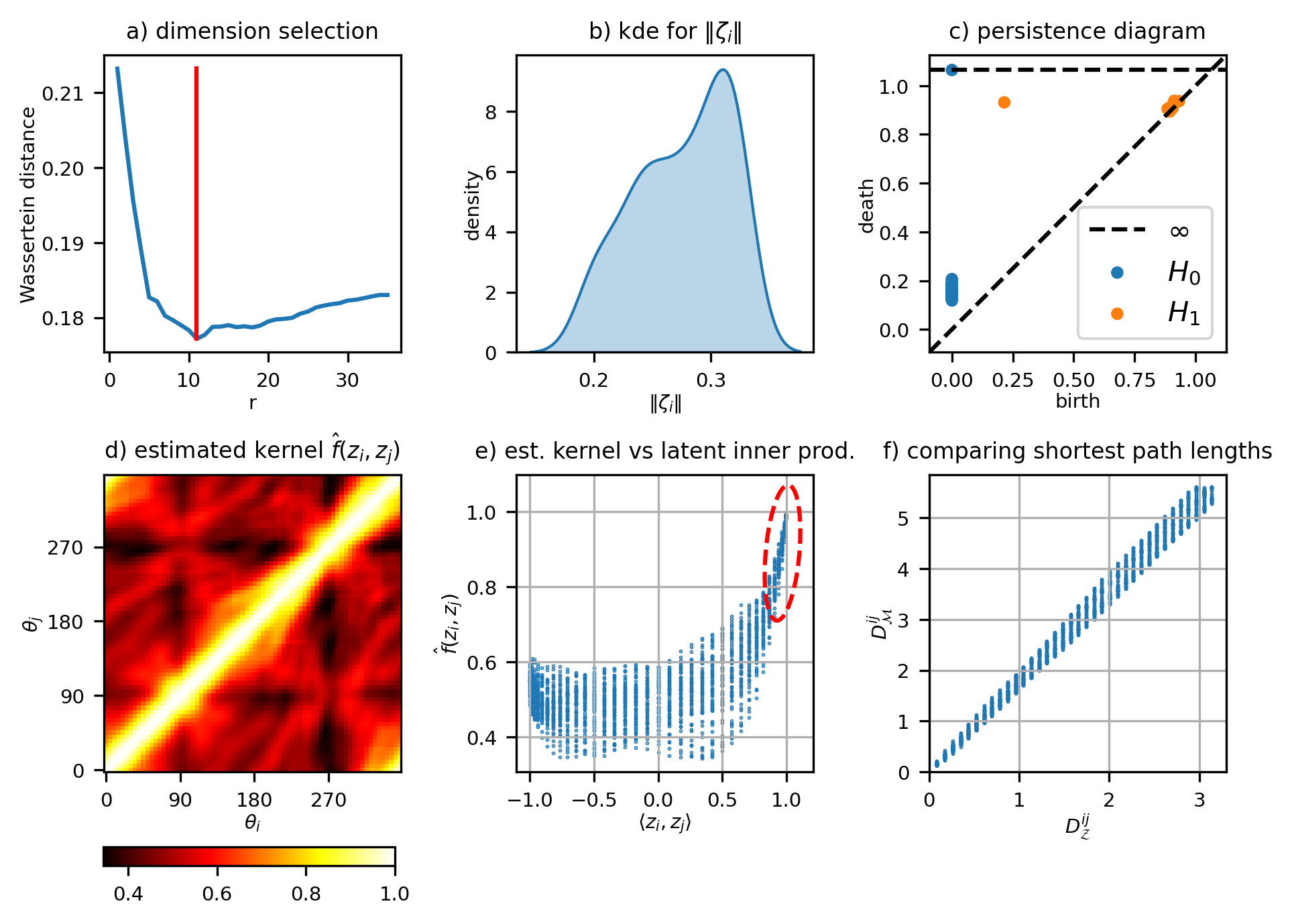}
\caption{\label{fig:cars_EDA}Images example. a) Wasserstein dimension selection; red vertical line indicates   minimum at $\hat{r}=11$. b) Kernel density estimate for the magnitudes of the PCA embedding vectors. c) Persistence diagram shows evidence of a single ``loop'' in the embedding. d) Estimated kernel as a function of latent positions in angular form $\theta_i = \arctan(z_i^{(2)}/z_i^{(1)})$. e) Estimated kernel as a function of latent inner product $\langle z_i,z_j\rangle$, the red dashed ellipse highlights $\hat{f}(z_i,z_j)$ in the region $\langle z_i,z_j\rangle\approx 1$. f) Evidence of a linear relationship between shortest path lengths computed from the nearest neighbour graph $\mathcal{G}$ ($y$-axis), and from the latent positions ($x$-axis).}
\end{figure}

We now consider the first hypothesis, which would be mathematically justified by assumption \ref{ass:injective}, with $\mathcal{Z}$ (say) the unit circle on $\mathbb{R}^2$ ($d_{\Zc}$ the Euclidean metric). Then indeed $\phi$ would be a homeomorphism and $\mathcal{M}$ would be topologically equivalent to a circle. Conveniently in this example, the presence of loops or holes in data point-clouds can be assessed using persistent homology techniques \cite{carlsson2009topology,chazal2021introduction}. Figure \ref{fig:cars_EDA}c) shows a persistence diagram computed from the spherically projected PCA embedding using the Python package \verb|Ripser.py| \cite{ctralie2018ripser}.  The blue dot on the horizontal dashed line is indicative of a single connected component with persists over a large range of length scales. The isolated single orange dot close to the horizontal dashed line is indicative of a single ``loop'' in the embedding, also persisting over a large range of length scales. We did not find any higher dimensional features, checking up to dimension 6, beyond which computation was prohibitive. This is all consistent with the hypothesis.

We now consider the second hypothesis, which by proposition \ref{prop:inner_prod_kernels_alt} would be mathematically justified by the mean correlation kernel being of the form $f(z,z^\prime)=g(\langle z, z^\prime\rangle)$ in some neighbourhood of $z=z^\prime$ (or equivalently $\langle z,z^\prime\rangle =1$), and $g^\prime(1)>0$ (recall proposition \ref{prop:inner_prod_kernels_alt}). Assuming that  $z_i = (\cos(\theta_i), \sin(\theta_i))$, figure \ref{fig:cars_EDA}d) shows $\hat{f}(z_i,z_j)\coloneqq \langle\zeta_{i}^{\mathrm{sp}},\zeta_{j}^{\mathrm{sp}} \rangle$, which we regard as an estimator of $f(z_i,z_j)$, plotted as a function of $\theta_i$ and $\theta_j$. The fairly constant width of the pronounced yellow/white diagonal stripe in this plot  admits the interpretation that indeed $f(z,z^\prime)=g(\langle z, z^\prime\rangle)$ in a neighbourhood of $z=z^\prime$. To examine this in more detail, figure \ref{fig:cars_EDA}e) plots values of $\hat{f}(z_i,z_j)$ against $\langle z_i,  z_j\rangle$ over all $i,j=1,\ldots,n$. The red dashed ellipse highlights that $\hat{f}(z_i,z_j)$ is approximately an increasing function of $\langle z_i, z_j \rangle$ in a neighbourhood of $z_i=z_j$, which is consistent with $g^\prime(1)>0$. Informed by \eqref{eq:sphere_isometry_alt} we thus anticipate there is isometry between $\mathcal{M}$ and $\mathcal{Z}$, up a scaling factor of $g^\prime(1)^{1/2}$. To see if the data allow for such a relationship we compute the $k$-nn graph $\mathcal{G}$ as per section \ref{subsec:nn_graph} with $k=2$. This is the natural choice for $k$ if $\mathcal{M}$ is topologically equivalent to a circle. Figure \ref{fig:cars_EDA}f) shows shortest path lengths $\mathbf{D}_{\mathcal{M}}^{ij}$ in $\mathcal{G}$ plotted against shortest path lengths around the circle, denoted $\mathbf{D}_{\mathcal{Z}}^{ij}$, over $i,j=1,\ldots,n$. The clear linear relationship is consistent with there being little deviation from isometry, up to a scaling factor, which by a straight line fit  in figure \ref{fig:cars_EDA}f) we can estimate: $g^\prime(1)\approx 3.18$.


\subsection{Single-cell transcriptomics }\label{subsec:ex_planaria}
We now revisit the planaria single-cell transcriptomics example introduced in section \ref{sec:Introduction}. Recall that here we have $p=5821$ dimensional gene expression data in $n=5000$ cells from adult planarians, and we also know cell-type labels for each of these cells, indicated by the different colours in figure \ref{fig:planaria_intro}. Adult planarians have a large number of pluripotent stem cells, known as neoblasts, that continuously differentiate into all adult cell types, resulting in a lineage tree that connects all the cells in the whole animal. We represent this lineage by a continuous tree (formally defined later) and suppose the cell types are named positions, $c_i$, on this tree.

In this context, we will first consider the hypothesis:
\begin{enumerate}
\item[1.] $\mathcal{Z}$ is a continuous tree and $\phi$ is a homeomorphism. An informal implication is: The data lie close to a tree.
\end{enumerate}
Finding this hypothesis to be tenable given the data, we will consider the stronger hypothesis:
\begin{enumerate}
\item[2.] $\mathcal{Z}$ is the lineage tree, $z_i = c_i$ are the (known) cell types, and $\phi$ is a homeomorphism. Informally: the tree represents the lineage of the cell types. 
\end{enumerate}

In graph theory, a tree is an undirected graph in which any two vertices are connected by a unique path. We consider an analogue of this concept which reflects the continuous nature of cell differentiation. Inspired by definitions in \cite{morgan1984valuations,janson2023real}, we say a metric space $\Zc$ is a \emph{continuous tree} if for any $z, z^\prime\in\Zc$ there exists a homeomorphism $\psi$ between $[0,1]$ and some subset of $\Zc$ such that $\psi(0) = z$ and $\psi(1) = z^\prime$ (this means a continuous path in $\Zc$ exists between $z$ and $z^\prime$), and all such homeomorphisms have the same image (this means the path is unique).

Following the workflow in section \ref{sec:Introduction}, dimension selection with $\rho_{\text{max}} = 50$ results in $\hat{r} = 14$. We then calculate the dimension-$\hat{r}$ embedding and its spherical projection $\zeta_1^{\mathrm{sp}},\ldots,\zeta_n^{\mathrm{sp}}$. For the remainder of this section we refer to the latter as the PCA embedding. 

We now consider the first hypothesis, which would be mathematically justified by assumption \ref{ass:injective}. Then $\mathcal{M}$ equipped with the $\ell_2$ distance also qualifies as a continuous tree, as the composition of two homeomorphisms is a homeomorphism. To gain some preliminary insight into the structure of the PCA embedding, figure \ref{fig:plan_paths}a) shows, in red, a histogram of inner products between all distinct pairs of embedding points $\zeta_i^{\mathrm{sp}},\zeta_j^{\mathrm{sp}}$. As a baseline comparison, we generated a random embedding consisting of the same number $n=5000$ points uniformly distributed on the $\hat{r}$-dimensional, unit-radius hypershere. Figure \ref{fig:plan_paths}a) shows, in black, a histogram of inner-products between all distinct pairs of points in this random embedding. We see this black histogram is symmetrical and concentrated around $0$. By contrast, the red histogram is not symmetrical and exhibits two peaks. The peak near an inner product value of $1$ indicates a substantial proportion of pairs of points  $\zeta_i^{\mathrm{sp}},\zeta_j^{\mathrm{sp}}$ which are much closer together than is observed in the random embedding. Many other pairs $\zeta_i^{\mathrm{sp}},\zeta_j^{\mathrm{sp}}$ have inner products between $-0.5$ and $0$, indicating they are spread out on the hypersphere, but not in the same way that uniformly random points are spread out. On the basis of this preliminary check we see no reason to rule out tree structure in the PCA embedding.
\begin{figure}[h!]
\includegraphics[width=1\columnwidth]{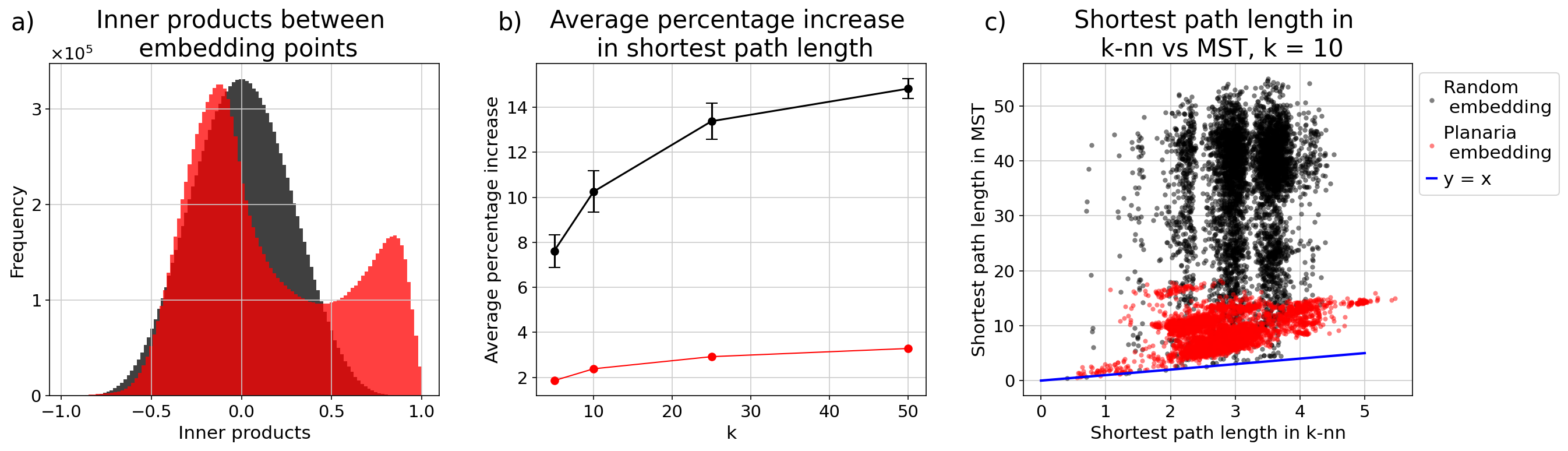}
\caption{\label{fig:plan_paths} Single-cell transcriptomics example. a) histogram of inner products between distinct points in the PCA and random embeddings. b) average percentage increase in shortest path length in the minimum spanning tree compared to the $k$-nn graph, over different values of $k$. Results for the random embedding are shown in black, over $10$ simulations with error bars indicated $2\times$standard error, c) comparing the shortest path lengths for samples in 10-nn graph and the MST.}
\end{figure}
We now quantify how ``tree-like'' the PCA embedding is in two steps. First we compute the $k$-nn graph of the PCA embedding as per section \ref{subsec:nn_graph}, and its minimum spanning tree. The latter is obtained by removing edges from the $k$-nn graph until a tree is formed, in such a way that the total edge length of the tree is minimal. Various fast algorithms for computing minimum spanning trees are available, we used the Python library \verb|NetworkX| \cite{hagberg2008exploring}. The second step is to compare shortest path lengths in the $k$-nn graph to  those in the minimum spanning tree. The shortest path length between any pair of vertices in the minimum spanning tree can only be greater than or equal to the shortest path length between those vertices in the $k$-nn graph. The percentage increase in shortest path length, when averaged over all pairs of vertices, serves as a univariate statistic which quantifies how tree-like the $k$-nn graph is. If the $k$-nn graph were a tree, this statistic would be exactly zero.

Figure \ref{fig:plan_paths}b) shows the average percentage increase in shortest path length, as a function of $k$. The red line shows the results for the PCA embedding. The black line and error bars show the same quantity computed from repeated simulations of the random embedding, serving as a baseline for comparison. We see that across all values of $k$, the average percentage increase in shortest path length is much lower for the PCA embedding than for the random embedding. This indicates that the minimum spanning tree is a close approximation to the $k$-nn graph of the PCA embedding. To take a finer-grained look, figure \ref{fig:plan_paths}c) shows shortest path lengths in the minimum spanning tree, versus in the $k$-nn graph with $k=10$, for a sample of 5000 pairs of vertices. The blue ``$y=x$'' line indicates the lower bound on path length increase which would be achieved if the $k$-nn graph were a tree. Overall, the hypothesis that the planaria data are tree-like seems tenable given the findings in figure \ref{fig:plan_paths}. 


We now consider the second hypothesis. The left plot in figure \ref{fig:plan_class} shows a visualisation of the minimum spanning tree derived from the $k$-nn graph with $k=10$, with vertices coloured by the known cell type labels. This visualisation was obtained using the Scaleable Force Directed Placement graph layout algorithm \cite{hu2005efficient}. From the colouring by cell type, we see that biologically similar types, such as the three types of muscle cell, appear in localised branches of the tree.

\begin{figure}[h!]
\includegraphics[width=1\columnwidth]{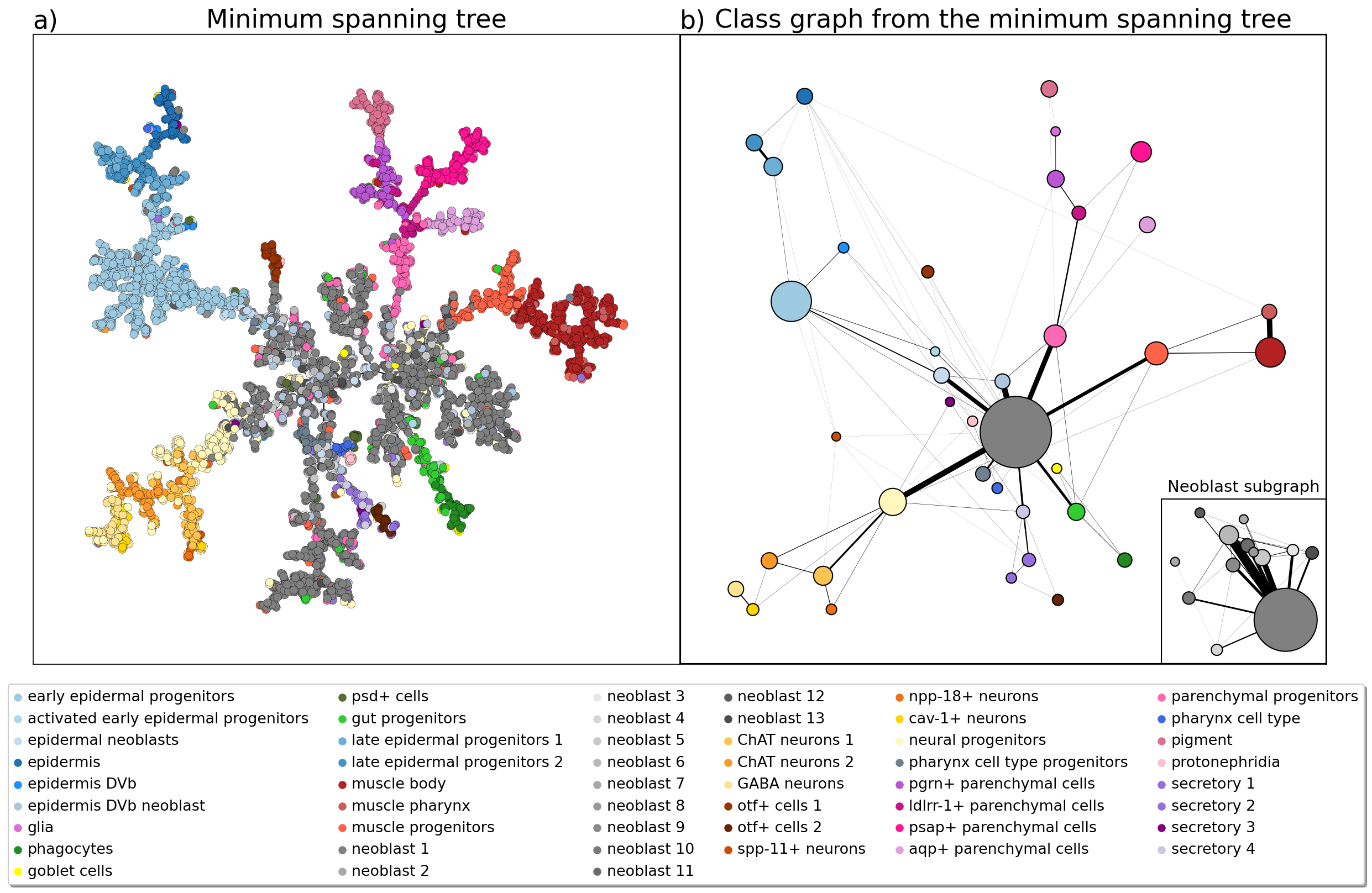}
\caption{\label{fig:plan_class}Single-cell transcriptomics example. a) minimum spanning tree computed from the spherically projected PCA embedding of the planaria data, colours indicate cell types. b) the class graph formed from the minimum spanning tree. All neoblast cell types are represented by a single dark grey node. The class subgraph consisting only of neoblast types is shown in the bottom right-hand corner inset.}
\end{figure}

We next construct a ``class graph'' which captures the relationships between cell types implied by the minimum spanning tree in figure \ref{fig:plan_class}a). In this class graph, each vertex corresponds to a cell type, and the undirected edge weight between any two vertices in the class graph is defined to be the total number of edges in the minimum spanning tree between cells of those two types. 

The class graph is shown in figure \ref{fig:plan_class}b). The size of each node represents the total number of cells of that type. The thickness of the edges reflects their weights in the class graph, although for visual clarity we do not draw some edges with very low weights.  Figure \ref{fig:plan_class} elucidates cell development, tracing the lineage from stem cells to progenitors and differentiated cell populations: neurons, muscle cells, protonephridia, epidermis, and secretory cells.

The original paper \cite{plass2018cell} provides a consolidated tree, which amalgamates various evidence types. The overall structure aligns with our nearest neighbour approach, with branches for individual known cell types, however, discrepancies exist in the form of minor variations in the differentiated cell populations. For example, cav-1+ neurons connect to ChAT neurons 1 rather than neural progenitors. The connections from the muscle pharynx to the muscle progenitors and from the epidermis to the epidermal lineage, found based on marker gene analysis in \cite{plass2018cell} coincide with the results from the nearest neighbour approach employed here. Acknowledging these differences, we refrain from delving further into minor disparities, given the current paper's intended scope.


For visual clarity, we draw a single node grouping together all the neoblast 1-13 cell types. The subgraph of the class graph corresponding to these neoblast types is shown in the inset of figure \ref{fig:plan_class}b), revealing a large number of neoblast 1 cells, linked by edges to most other neoblast cell types. This aligns with the results of the original authors, but contrasts with previous studies \cite{van2014single}, \cite{molina2021decoding}, which suggested distinct fates for various neoblast types. These disparities might be due to the unique ability of specialised neoblast cells to maintain pluripotency \cite{raz2021planarian} or the sensitivity of the single-cell transcriptomic method, as in \cite{plass2018cell}.

\subsection{Temperature time series}\label{subsec:ex_temps}
In this example the raw data are time series of average daily temperatures in $n=265$ towns and cities, on $p=1450$ days. The data originate from the Berkeley Earth project \cite{berkeleyearth}. Our objective is to explore the relationship between temperature deviations and geographic locations of the towns and cities. To do so we take the $i$th data vector $\Y_i$ to be the temperature time series for town or city $i$ centered about its long-run average. Thus geometry of the data point-cloud $\Y_1,\ldots,\Y_n$ as specified by the inner products $p^{-1}\langle\Y_i,\Y_j\rangle$ reflects the lag-zero cross-correlations amongst the time series.

In this context, we will first consider the hypothesis:
\begin{enumerate}
\item[1.] $\Zc$ is a geographic region, $z_i = (\text{latitute}_i,\text{longitude}_i)$ are the (known) geographical locations of towns or cities, $\phi$ is a scaled isometry. An informal implication is: geodesic distances in $\mathcal{M}$ reflect geographical distances.
\end{enumerate}
As a relaxation of the above we also ask if we can at least entertain:
\begin{enumerate}
\item[2.] $\Zc$ is a geographic region, $z_i = (\text{latitute}_i,\text{longitude}_i)$ are the geographical locations of towns or cities, $\phi$ is a scaled isometry in certain subregions.\end{enumerate}

Following the workflow from section \ref{sec:Introduction}, figure \ref{fig:temps_three_plots}a) shows the results of dimension selection with $\rho_{\text{max}} = 70$, figure \ref{fig:temps_three_plots}b) illustrates the variability of the magnitudes $\|\zeta_i\|_2$ of the non-projected embedding vectors, and we work henceforth with the spherically projected embedding as per section \ref{sec:spherical_proj}.

We now consider the first hypothesis, which by proposition \ref{prop:inner_prod_kernels_alt} would be mathematically justified if the temperatures on a given day were stationary processes over Earth (a `sphere'). If isometry between $\Zc$ and $\mathcal{M}$ were to hold up to a scaling factor, then the $k$-nearest neighbours of $z_i$ amongst $\{z_j;j\neq i\}$ would correspond to the $k$-nearest neighbours of $\phi(z_i)$ amongst $\{\phi(z_j);j\neq i\}$, with respect to $d_{\mathcal{M}}^{\mathrm{geo}}$. In order to see if the data are consistent with the hypothesis of isometry, we therefore compute the proportion of edges in common between the embedding $k$-nn graph $\mathcal{G}$ (as per section \ref{subsec:nn_graph}), and the geographic $k$-nn graph defined by the known locations $z_1,\ldots,z_n$. Figure \ref{fig:temps_three_plots}c) shows this proportion as a function of $k$. As a baseline to help interpret these results, we sampled $n$ points uniformly from the $\hat{r}$-dimensional unit hypersphere, derived the $k$-nn graph from these points, then computed the proportion of edges in common with the geographic $k$-nn graph. This was repeated independently $100$ times, and the resulting minimum, mean and maximum proportions of edges in common for each $k$ are shown in red and black in figure \ref{fig:temps_three_plots}c). The correspondence between $\mathcal{G}$ and the geographic $k$-nn graph is much better than under this uniform model. However, we see that as $k$ increases up to $50$, the embedding $k$-nn graph has about $70\%$ of edges in common with the geographic $k$-nn graph, but increasing $k$ further up to about $k=130$ does not increase this percentage further. This plateauing suggests isometry does not hold. 


We now consider the second hypothesis. The plateauing leaves open the possibility that there may be coincidence between the embedding and geographic $k$-nn graphs in some localised areas of $\mathcal{Z}$ but not in others. Indeed isometry as in \eqref{eq:isometry-definition} or \eqref{eq:euc_isometry_alt} requires equality of shortest path lengths \emph{for all} $z,z^\prime\in\mathcal{Z}$. Figure \ref{fig:temps_knn} shows the locations of the towns and cities, and the edges in the embedding $k$-nn graph $\mathcal{G}$, with $k=5$ chosen so that according to figure \ref{fig:temps_three_plots}c) the embedding and geographic $k$-nn graphs have around $50\%$ of edges in common. We see from figure \ref{fig:temps_knn} that in some regions, especially in central Europe, edges in the embedding $k$-nn graph generally correspond to geographic proximity, but elsewhere this correspondence does not hold. For example there are edges connecting Edinburgh, U.K., to cities in Norway which are not amongst its geographically nearest neighbours. Similarly, there are edges connecting Novorossiysk, Russia, to cities on the opposite shore of the Black Sea which are not amongst its geographically nearest neighbours. Conversely, geographic proximity does not always imply presence of an edge. For example, there are no edges between Baia Mare, Romania, and two geographically close cities directly to the east, on the other side of the Carpathian mountains.

\begin{figure}[h!]
\centering
\includegraphics[trim={0 0.3cm 0 0}, width=0.95\columnwidth]{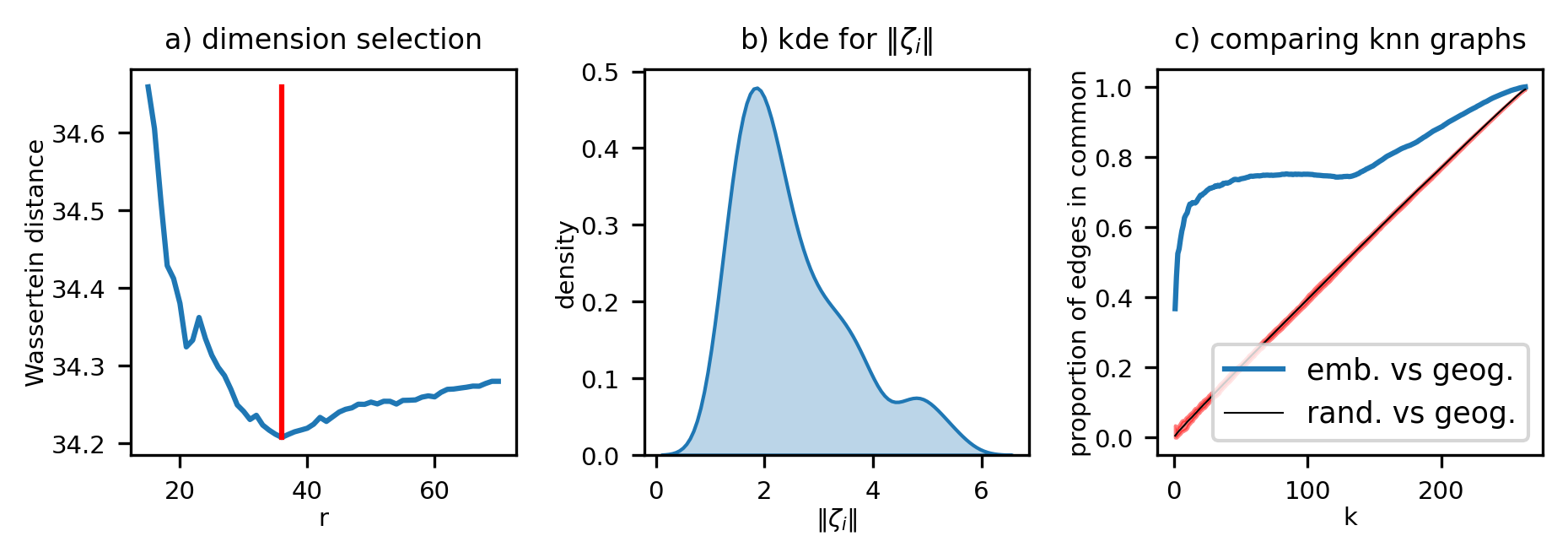}
\caption{\label{fig:temps_three_plots}Temperatures example. a) Wasserstein dimension selection; red line indicates minimum at $\hat{r}=36$. b) Kernel density estimate of the probability density of PC score magnitudes. c) The blue curve shows proportion of edges in common between embedding $k$-nn graph and geographic $k$-nn graph. The black line shows the mean proportion in common between the $k$-nn graph of a $100$ uniformly random embeddings and the geographic $k$-nn graph. The red band indicates the range between maximum and minimum proportions across these $100$ random embeddings.}
\end{figure}

Plotting the $k$-nn graph $\mathcal{G}$ in this way shows presence or absence of edges, but it doesn't convey the weight of these edges in the $k$-nn graph $\mathcal{G}$, which as per section \ref{subsec:nn_graph}, can approximate distances $d_{\mathcal{M}}^{\mathrm{geo}}$. Since the embedding is of dimension $\hat{r}=36$, it is challenging or perhaps impossible to construct a two-dimensional visualisation which faithfully conveys all aspects of its geometry.  However, the visualisation task is much simpler if we choose some town or city, and then visualise the shortest paths in the embedding $k$-nn graph from that city to all other cities --- the graph consisting of the union of all such paths is sometimes called a \emph{shortest path tree}. 

Figure \ref{fig:temps_paths} shows the shortest paths in $\mathcal{G}$ from Tallinn, Estonia, to all other towns and cities. Each such path is a sequence of towns or cities, and is visualised as a spline with knot points given by the locations of these towns and cities, with colour indicating length. Tallinn was chosen because of the different relationships between these shortest paths and geographic shortest paths which can be seen in different regions: the shortest paths in $\mathcal{G}$ which terminate at some towns and cities in central Europe, to the south-west of Tallinn, resemble geographic shortest paths, indicating a geometric relationship not far removed from isometry. By contrast, the red dots in figure \ref{fig:temps_paths} highlight the shortest path in $\mathcal{G}$ from Tallinn to Tripoli, Libya. This path passes through Sweden, Norway, the U.K., Ireland, France, Spain, back to France and then Italy. Clearly, this is not the geographically shortest path from Tallinn to Tripoli, indicating a strong deviation from isometry in these regions. Recalling from section \ref{sec:isometry} the relationship between weak stationarity and isometry, this deviation from isometry implies a pronounced lack of stationarity (with respect to geographic location) in these regions. This prompts us to consider what factors might disrupt temperature correlations along the geographically shortest path: for the case of Talinn to Tripoli, it seems reasonable to conjecture that changes in altitude, e.g. the Alps, and the Adriatic and Mediterranean seas might be such factors. We stress that our analysis by no means formally assesses the evidence that such factors are at play. Rather, it is through the process of inspecting figure \ref{fig:temps_paths} that we are led simply to consider such ideas as a step in data exploration. 
 
\begin{figure}[h!]
\includegraphics[width=1\columnwidth]{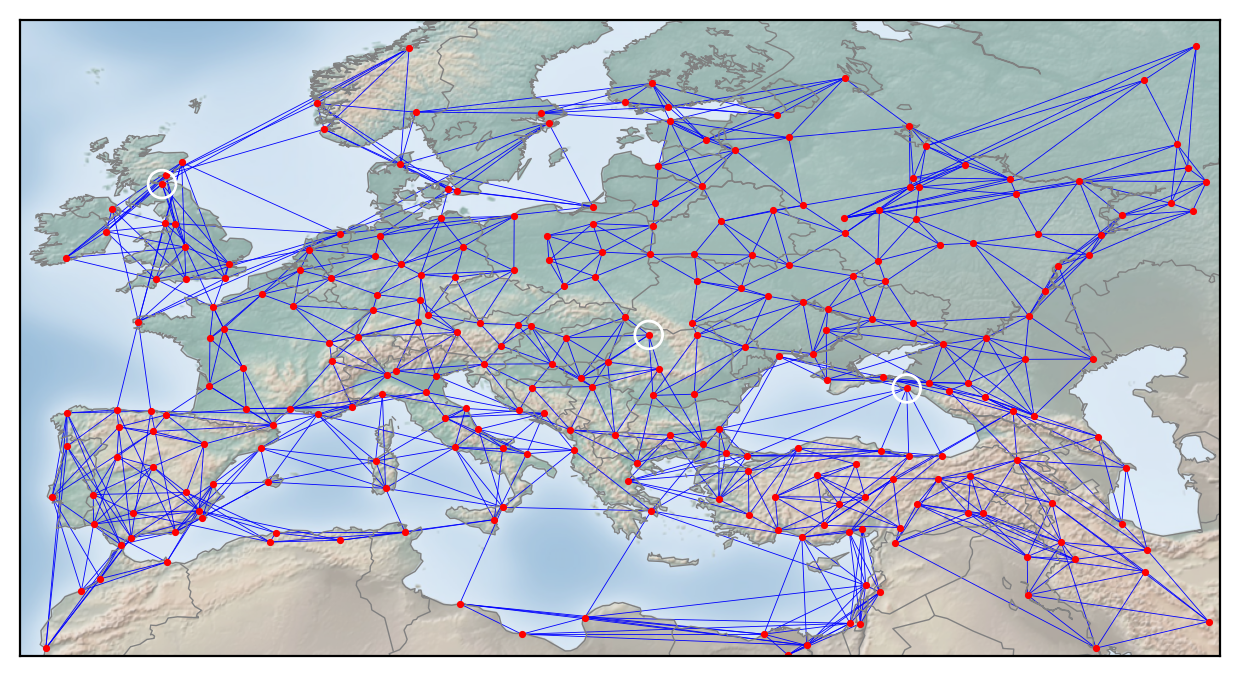}
\caption{\label{fig:temps_knn}Temperatures example. Locations of towns and cities are shown in red. The blue lines correspond to edges in the embedding $k$-nn graph $\mathcal{G}$, with $k=5$. The white circles highlight, from west to east: Edinburgh, U.K.; Baia Mare, Romania; and Novorossiysk, Russia. }
\end{figure}
\begin{figure}[h!]
\includegraphics[width=1\columnwidth]{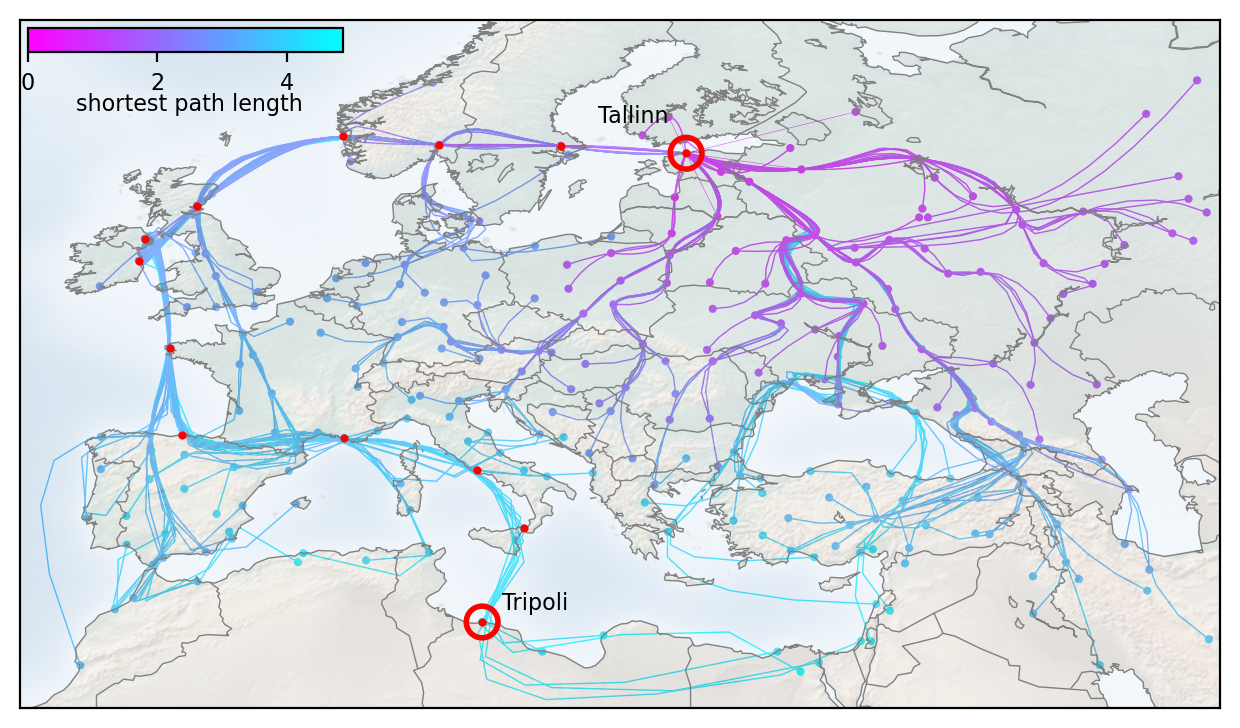}
\caption{\label{fig:temps_paths}Temperatures example. Shortest paths in the embedding $k$-nn graph $\mathcal{G}$ from Tallinn, Estonia, to all other towns and cities. Each shortest path is visualized as a spline, with knot points given by the geographic locations of its constituent towns and cities. The red dots highlight the shortest path from Tallinn to Tripoli, Libya.}
\end{figure}

\section{Connections and conclusions}\label{sec:discussion}

\paragraph{Geometric representation of high dimensional data.} In their seminal JRSSB paper, \citet{hall2005geometric} introduced the perspective that, if  $\Y_i$ and $\Y_j$ are i.i.d. random vectors whose elements satisfy suitable weak dependence and moment conditions, then $p^{-1}\|\Y_i-\Y_j\|^2$ converges to a constant as $p\to\infty$. This leads to a conclusion that i.i.d. high-dimensional data vectors tend to lie deterministically at the vertices of a simplex.  The LMM deviates from the assumption of i.i.d. data vectors; under the LMM, the ``noise-free'' data vectors $\Y_i-\sigma \E_i\equiv[X_1(Z_i) \cdots X_p(Z_i)]^{\top}$, $i=1,\ldots,n$, are exchangeable but not independent. This dependence, combined with the latent variables $Z_1,...,Z_n$, is key to the emergence of manifold structure in high dimensional data under the LMM. In this way, the LMM and our theoretical results extend and enrich the geometric perspective of \citet{hall2005geometric}, broadening the scope of high dimension low sample size (HDLSS) asymptotics \citep{shen2016general,aoshima2018survey}. 

\paragraph{Reflecting on the Manifold Hypothesis.}
Conventional interpretation of the Manifold Hypothesis as per the quote from \citep{cayton2005algorithms} in section \ref{sec:Introduction} is that data vectors $\Y_1,\ldots,\Y_n\in \mathbb{R}^p$ are samples from some distribution supported on a manifold embedded in $\mathbb{R}^p$, perhaps subject to noise disturbances. Our analysis of the LMM in section \ref{sec:connecting} provides a more nuanced perspective: $\Y_1,\ldots,\Y_n$ are noisy, random projections of samples on a manifold $\mathcal{M}$; the manifold itself is a high-dimensional distortion of some latent domain $\Zc$ and arises due to correlation over $\Zc$. Under appropriate assumptions, $\mathcal{M}$ is homeomorphic or isometric to $\Zc$. 

\paragraph{Infinite exchangeable arrays.} In how much generality is this perspective applicable? Inspired by remarks of \cite{udell2019big} in the context of latent variable models of low-rank matrices, we note the basic structure of the LMM,
\begin{equation}\label{eq:lmm_conclusion}
\Y_{ij} = X_{j}(Z_i)+\sigma\E_{ij},
\end{equation}
resembles a representation formula for exchangeable arrays due to \citet{aldous1981representations}: if $\Y$ is \emph{any} infinite two-dimensional array of random variables such that permutations of its rows or columns do not alter the distribution of $\Y$, then there exists a function $h$ such that the following equality in distribution holds
\begin{equation}\label{eq:aldous}
\Y_{ij}\stackrel{d}{=} h(\xi,Z_i,X_j,E_{ij})
\end{equation}
where $\xi$ and the $Z_i$'s, $X_j$'s and  $E_{ij}$'s are i.i.d. $\mathcal{U}[0,1]$-distributed random variables. Putting aside the fact that in the LMM the rows of $\Y$ are exchangeable but the columns need not be, the resemblance between  \eqref{eq:lmm_conclusion} and \eqref{eq:aldous} indicates that the LMM is rather general, albeit constrained to an additive form of error. The ability of PCA to extinguish noise, as characterised in theorem \ref{thm:consistency_summary}, seems closely tied to this additive structure.

\paragraph{PCA in high dimensions.}
The behaviour of PCA and principal component scores in high-dimensions has been the subject of intensive theoretical study, e.g.,  \citep{paul2007asymptotics, johnstone2009consistency, jung2009pca, yata2009pca, lee2010convergence,yata2012effective,jung2012boundary,shen2013surprising,shen2016general,hellton2017and}. A central theme in these works is analysis of the eigenvectors of the sample covariance matrix $n^{-1}\Y^\top\Y\in\mathbb{R}^{p\times p}$, which make up the columns of the matrix $\mathbf{V}_{\Y}$ appearing in \eqref{eq:pc_scores_defn}. It is usually assumed that the data follow a spiked covariance model (a special case of the LMM in which $\Zc$ is Euclidean, $f$ is linear and $X_1,\ldots,X_p$ are deterministic - see appendix \ref{sec:special_cases_of_LMM} for details), with consideration given to various scaling relationships involving $p,n$ and population covariance eigenvalues. In some situations with growing dimension $p\to\infty$, the sample covariance eigenvectors are inconsistent estimators of their population counterparts, such as when the eigenvalues are constant, $n$ is growing and $p/n\to c>0$ \citep{johnstone2009consistency}, or when $n$ is fixed and the eigenvalues grow sublinearly as $p\to\infty$ \citep{jung2009pca, jung2012boundary}.

Theorem \ref{thm:consistency_summary} 
 addresses high-dimensionality, but its proof does not entail establishing consistency of $\mathbf{V}_{\Y}$. Instead it starts with an elementary linear algebra argument (lemma \ref{lem:PC_identity} in section \ref{sec:Proof-and-supporting_consistency}) which shows that  $p^{-1/2}\mathbf{Y}\mathbf{V}_{\mathbf{Y}}=\mathbf{U}_{\mathbf{Y}}\boldsymbol{\Lambda}_{\mathbf{Y}}^{1/2}$,
where the columns
of $\mathbf{U}_{\mathbf{Y}}\in\mathbb{R}^{n\times r}$ are orthonormal
eigenvectors of $p^{-1}\Y\Y^{\top}\in\mathbb{R}^{n\times n}$ with associated eigenvalues on the diagonal of $\boldsymbol{\Lambda}_{\Y}$. The $\sqrt{n/p}$ term in \eqref{eq:consistency_summary} relates to the concentration behavior of the $n\times n$ matrix $p^{-1}\Y \Y^\top$ about its conditional expectation: $\mathbb{E}\left[(p^{-1}\Y\Y^\top)_{ij}|Z_i,Z_j\right]=p^{-1}\mathbb{E}\left[\langle\Y_i,\Y_j\rangle|Z_i,Z_j\right]$, c.f. \eqref{eq:Y_inner_prod_cond_exp}. The $1/\sqrt{n}$ term in \eqref{eq:consistency_summary} concerns approximations to certain integrals with respect to $\mu$, based on the samples $Z_1,\ldots,Z_n$, which arise when relating the rows of $\mathbf{U}_{\mathbf{Y}}\boldsymbol{\Lambda}_{\mathbf{Y}}^{1/2}$ to $\phi(Z_1),\ldots,\phi(Z_n)$.

The proof of theorem \ref{thm:uniform_consistency}, from which theorem \ref{thm:consistency_summary}  is derived, relies heavily on matrix decomposition techniques used by \citet{lyzinski2016community} in the study of spectral embedding of random graphs under a random dot product model.  The uniform (in $i=1,\ldots,n$) nature of theorem \ref{thm:consistency_summary} is directly inspired by the uniform consistency result of \citet{lyzinski2016community}[Thm. 15], which is an instance of convergence with respect to the $2\to\infty$ matrix norm, studied in detail by \citet{cape2019two}. We note more generally that singular vector estimation under low-rank assumptions is an active area of research. As a recent example, \citet{agterberg2022entrywise} obtained finite sample bounds and a Berry-Esseen type theorem for singular vectors under a model in which the signal is a deterministic low-rank matrix and heteroskedasticity and dependence is allowed in additive sub-Gaussian noise. 

Truncated spectral embedding of graphs under a model with an infinite rank kernel was studied by \cite{tang2013universally}, but their results concern the Frobenius norm, hence a weaker, non-uniform measure of error than the $\max_{i=1,\ldots,n}$ error in theorem 1. It remains to be seen if a uniform consistency result similar to theorem 1 can be obtained for the LMM with an infinite rank kernel.

\paragraph{Gaussian Process Latent Variable Models.}
  
The Gaussian
Process Latent Variable Model (GPLVM)  was devised by \citet{lawrence2003gaussian, lawrence2005probabilistic} in order to formulate  dimension reduction as a statistical inference problem. The GPLVM can be viewed as a special case of the LMM in which $\Zc$ is a subset of $\mathbb{R}^d$, the random functions $X_1,\ldots,X_p$ are i.i.d., zero-mean Gaussian processes, and the elements of the noise matrix $\E$ are Gaussian.   Under these assumptions the formula \eqref{eq:Phi_W_expansion} from proposition \ref{prop:Phi_W_expansion}, written in matrix form, is: 
$$
\Y \stackrel{m.s.} {=}p^{1/2}\Phib\W^\top +\sigma \E,
$$ 
with $\Phib\coloneqq[\phi(Z_1)|\cdots|\phi(Z_n)]^\top\in\mathbb{R}^{n\times r}$ and the elements of $\E$ and $p^{1/2}\W\in\mathbb{R}^{r\times p}$ are i.i.d. $\mathcal{N}(0,1)$. The latter property of $\mathbf{W}$ is equivalent to $X_1,\ldots,X_p$ being i.i.d., zero-mean Gaussian processes. After integrating out $\W$ and $\E$ analytically, the $p$ columns of $\Y$ are i.i.d. with common distribution $\mathcal{N}(\mathbf{0}_n,\Phib\Phib^\top +\sigma^2 \mathbf{I}_n)$. \citet[App. B]{lawrence2005probabilistic} considered maximum likelihood estimation of $\Phib$ when $r<\infty$, showing that that for the GPLVM and any orthogonal matrix $\mathbf{Q}$,
\begin{equation}\label{eq:mle}
\widehat{\Phib}^{\mathrm{MLE}}\coloneqq \mathbf{U}_\Y (\boldsymbol{\Lambda}_\Y - \sigma^2\mathbf{I}_n)^{1/2}\mathbf{Q}^\top
\end{equation}
is a maximum likelihood estimator of $\Phib$. Here $\boldsymbol{\Lambda}_\Y$ and $\mathbf{U}_\Y$ are as in the above discussion of PCA in high dimensions. To make the connection to the PCA embedding $\zeta_1,\ldots,\zeta_n$ recall that 
\begin{equation}\label{eq:pca_recap}
p^{-1/2}[\zeta_1|\cdots|\zeta_n]^\top \equiv p^{-1/2}\Y \mathbf{V}_{\mathbf{Y}} \equiv \mathbf{U}_\Y \boldsymbol{\Lambda}_\Y^{1/2},
\end{equation}
where the columns of $\mathbf{V}_{\mathbf{Y}}$ are orthonormal eigenvectors of $\Y^\top\Y$.
Comparing \eqref{eq:mle} to \eqref{eq:pca_recap}, we can interpret the rows of $p^{1/2}\widehat{\Phib}^{\mathrm{MLE}}$ as a modified PCA embedding derived from eigen-decomposition of $p^{-1}\Y\Y^\top - \sigma^2\mathbf{I}_n$ instead of $p^{-1}\Y\Y^\top$.  Computing $\widehat{\Phib}^{\mathrm{MLE}}$ in practice clearly requires $\sigma^2$ to be assumed known and $\boldsymbol{\Lambda}_\Y\succeq \mathbf{I}_r\sigma^2$. It may be possible to use some of our theoretical results to study consistency of $\widehat{\Phib}^{\mathrm{MLE}}$, but we leave this for future research.

Under the assumption that the kernel $f$ belongs to a given parametric family \citet{lawrence2005probabilistic}
proposed maximum a-posteriori estimation of $Z_{1},\ldots,Z_{n}$ as a form of nonlinear dimension reduction. Further developments of the GPLVM include hierarchical structures \citep{lawrence2007hierarchical}, 
variational inference \citep{titsias2010bayesian,JMLR:v17:damianou16a},  connections to Locally Linear Embedding \citep{lawrence2012unifying, roweis2000nonlinear}, 
pseudo-marginal MCMC techniques \citep{gadd2021pseudo}, and handling computational scaleability and missing data \citep{lalchand2022generalised}.

\paragraph{Linear then nonlinear dimension reduction.}
Nonlinear dimension reduction techniques are designed to extract low-dimensional
structure from data for purposes of exploration and visualisation.
These methods were pioneered by \citet{tenenbaum2000global} and \citet{roweis2000nonlinear},
who devised Isomap and Local Linear Embedding, respectively; subsequent
contributions include Semi-definite Embedding \citep{weinberger2004learning};
latent variable-based methods, \citep{lawrence2003gaussian,saul2020tractable};
Diffusion Maps \citep{coifman2005geometric}, Laplacian and Hessian
Eigenmaps \citep{belkin2003laplacian,donoho2003hessian}; Stochastic
Neighbour Embedding (SNE and $t$-SNE) \citep{hinton2002stochastic,van2008visualizing}
and Uniform Manifold Approximation and Projection (U-MAP). Several
such methods are easily accessible through the massively popular Python
package \verb|scikit-learn| \citep{scikit-learn}, and their impact
is exemplified by the fact that, at the time of writing, the $t$-SNE
paper of \citet{van2008visualizing} has over $48,000$  citations
according to Google Scholar. Each of these techniques work on different
principles, but in broad terms, they take as input a set of points
in high-dimensional Euclidean space, and output a set of points in
low-dimensional Euclidean space in a way which is designed to minimise some measure of distortion of pairwise distances or inner-products.

It has been advocated in the literature to reduce data to tens or hundreds of dimensions using PCA as a preprocessing step, before applying nonlinear
dimension reduction to obtain a two or three-dimensional representation. For example, in the context of $t$-SNE, 
\citet{van2008visualizing} state \emph{``This speeds up the computation
of pairwise distances between the data points and suppresses some noise
without severely distorting the interpoint distances''}. Similar
recommendations are given in  \citep{van2014accelerating, kobak2019art, saul2020tractable}.
Up until now however, there has been no detailed or rigorous statistical justification for this pre-processing. 

Our model and theory explain why a) the manifold might be concentrated within a low-dimensional subspace (e.g. section~\ref{sec:smoothness}); and b) why applying PCA can give us an extremely sharp view (i.e., uniformly consistent, Theorem~\ref{thm:consistency_summary}). This reinforces the message that nonlinear dimensionality reduction techniques need not be viewed as an alternative to PCA, but rather that the combination of the two, i.e., \emph{linear then nonlinear dimensionality reduction}, may be particularly effective. 

Incidentally, we recommend keeping the PCA embedding at hand to retain access to \emph{ambient} properties of the manifold, for example, to estimate reach \citep{aamari2019estimating}, inspect inner products (figure~\ref{fig:plan_paths}), or evaluate the fidelity of low-dimensional visualisations such as t-SNE.


\paragraph{Exploratory Data Analysis.} What are the limitations of the data analysis workflow we have proposed? This workflow is intentionally generic, and suitable for preliminary exploration of high-dimensional data and hypotheses about the data generating mechanism. It could serve as a first step before more detailed confirmatory analysis, in order to quantify uncertainty, perform formal hypothesis testing, or fit a parametric model, and so forth, but it clearly does not include those functionalities. We believe the methodology aligns with Tukey's philosophy of `exploratory data analysis' \citep{edatukey} although of course it is not `model-free'. There is no contradiction there, and indeed several works have warned of the dangers of conflating these ideas \citep{hullman2021designing}.

\bibliographystyle{plainnat}
\bibliography{refs,dimension_selection}
\newpage

\appendix
\section{Supporting results for section \ref{sec:Model}}\label{sec:Supporting-results-for-Model}

The following version of Mercer's theorem can be found
in \citep[Thm 4.49]{steinwart2008support}.
\begin{thm}
[Mercer's theorem] \label{thm:mercer}Let $\mathcal{Z}$ be a compact
metric space and let $f:\mathcal{Z}\times\mathcal{Z}\to\mathbb{R}$,
be a symmetric, positive semi-definite, continuous function. Let $\mu$
be a finite Borel measure supported on $\mathcal{Z}$. Then there
exists a countable collection of nonnegative real numbers $(\lambda_{k}^{f})_{k\geq1}$,
$\lambda_{1}^{f}\geq\lambda_{2}^{f}\geq\ldots$ and $\mathbb{R}$-valued
functions $(u_{k}^{f})_{k\geq1}$ which are orthonormal in $L_{2}(\mu)$,
such that:
\[
f(z,z^{\prime})=\sum_{k=1}^{\infty}\lambda_{k}^{f}u_{k}^{f}(z)u_{k}^{f}(z^{\prime}),\quad z,z^{\prime}\in\mathcal{Z},
\]
where the convergence is absolute and uniform.
\end{thm}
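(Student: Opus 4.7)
The plan is to follow the standard operator-theoretic route to Mercer's theorem, whose main ingredient is the spectral theorem for compact self-adjoint operators on a Hilbert space, and whose main technical step is upgrading an $L_2$ expansion to absolute and uniform convergence via Dini's theorem plus Cauchy--Schwarz.

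First I would introduce the integral operator $T_f : L_2(\mu) \to L_2(\mu)$ defined by $(T_f g)(z) \coloneqq \int_{\mathcal{Z}} f(z,z^\prime) g(z^\prime)\,\mathrm{d}\mu(z^\prime)$. Since $\mathcal{Z}$ is a compact metric space, $f$ is bounded and uniformly continuous on $\mathcal{Z}\times\mathcal{Z}$, hence $\int\int f^2\,\mathrm{d}(\mu\otimes\mu) < \infty$, so $T_f$ is Hilbert--Schmidt and in particular compact. Symmetry of $f$ gives self-adjointness, and the pointwise positive semi-definiteness of $f$ gives positivity of $T_f$ in the sense $\langle g, T_f g\rangle_{L_2(\mu)} \geq 0$ for all $g$ (by approximating the double integral with Riemann-type sums, using continuity of $f$). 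The spectral theorem for compact self-adjoint operators then supplies an at-most-countable family of nonnegative eigenvalues $\lambda_1^f \geq \lambda_2^f \geq \cdots \to 0$ with an $L_2(\mu)$-orthonormal basis of eigenfunctions $(u_k^f)$ of the closure of the range, extended to an o.n.b.\ of $L_2(\mu)$.

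Next I would show each eigenfunction $u_k^f$ with $\lambda_k^f > 0$ admits a continuous representative. This follows from $u_k^f = (\lambda_k^f)^{-1} T_f u_k^f$ together with the fact that, by uniform continuity of $f$ and Cauchy--Schwarz in the integrand, $T_f$ maps $L_2(\mu)$ into $C(\mathcal{Z})$. Having continuous eigenfunctions in hand, I would define the partial sums
\[
f_N(z,z^\prime) \coloneqq \sum_{k=1}^{N} \lambda_k^f u_k^f(z) u_k^f(z^\prime),
\]
which are continuous on $\mathcal{Z}\times\mathcal{Z}$, and show that the residual kernel $r_N \coloneqq f - f_N$ is again symmetric, continuous, and pointwise positive semi-definite; this last property is the key structural observation, and follows because $r_N$ corresponds to the compression of $T_f$ to the orthogonal complement of $\mathrm{span}(u_1^f,\ldots,u_N^f)$ combined with the fact that positivity of the operator together with continuity of the kernel forces pointwise PSD.

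The main obstacle is upgrading from $L_2$ to uniform convergence. I would handle this by first restricting to the diagonal: setting $g_N(z) \coloneqq \sum_{k=1}^N \lambda_k^f u_k^f(z)^2 = f_N(z,z)$, pointwise PSD of $r_N$ gives $r_N(z,z) \geq 0$, so $g_N$ is a monotone nondecreasing sequence of continuous functions with $g_N(z) \leq f(z,z)$. A standard argument (integrating the telescoped identity against $\mu$ and invoking the spectral $L_2$ expansion of the quadratic form) identifies $\lim_N g_N(z) = f(z,z)$ for $\mu$-a.e.\ $z$, and then continuity of the limit $f(\cdot,\cdot)|_{\mathrm{diag}}$ together with Dini's theorem on the compact space $\mathcal{Z}$ upgrades this to uniform convergence on the diagonal. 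Finally, applying Cauchy--Schwarz pointwise to the residual,
\[
\Bigl| \sum_{k=N+1}^{M} \lambda_k^f u_k^f(z) u_k^f(z^\prime) \Bigr| \leq \Bigl(\sum_{k=N+1}^{M} \lambda_k^f u_k^f(z)^2\Bigr)^{1/2} \Bigl(\sum_{k=N+1}^{M} \lambda_k^f u_k^f(z^\prime)^2\Bigr)^{1/2},
\]
together with the just-established uniform control on the diagonal, shows that the series defining $f_N(z,z^\prime)$ is uniformly Cauchy on $\mathcal{Z}\times\mathcal{Z}$ and absolutely convergent, with limit necessarily equal to $f(z,z^\prime)$ (matching the $L_2$ limit, which is unique up to $\mu\otimes\mu$-null sets, and both sides being continuous). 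This establishes the claimed representation with absolute and uniform convergence.
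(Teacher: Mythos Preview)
The paper does not supply its own proof of this theorem; it simply records the statement and cites \cite[Thm 4.49]{steinwart2008support}. Your outline is the standard operator-theoretic proof found in that and similar references, so in that sense there is nothing to compare against.

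Your sketch is essentially sound, but one step is stated too loosely to stand as written. You obtain $\lim_N g_N(z)=f(z,z)$ only for $\mu$-almost every $z$, and then invoke Dini's theorem using the continuity of $f(\cdot,\cdot)|_{\mathrm{diag}}$. Dini's theorem, however, requires pointwise convergence \emph{everywhere} to a continuous limit. From $\mu$-a.e.\ convergence together with full support of $\mu$ you know that the pointwise limit $h(z):=\sup_N g_N(z)$ agrees with the continuous function $z\mapsto f(z,z)$ on a dense set, but $h$ is a priori only lower semicontinuous (as an increasing limit of continuous functions), so density alone does not force $h\equiv f|_{\mathrm{diag}}$.

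The gap is closed by a short additional argument that you have most of the ingredients for. For each fixed $z$, the function $f(z,\cdot)\in L_2(\mu)$ has Fourier coefficient $\langle f(z,\cdot),u_k^f\rangle_{L_2(\mu)}=(T_f u_k^f)(z)=\lambda_k^f u_k^f(z)$, and $f(z,\cdot)\perp\ker T_f$ because $T_f g=0$ in $L_2(\mu)$ forces $(T_f g)(z)=\langle f(z,\cdot),g\rangle=0$ for every $z$ by the continuity of $T_f g$. Hence $f(z,\cdot)=\sum_k \lambda_k^f u_k^f(z)\,u_k^f(\cdot)$ in $L_2(\mu)$. Your Cauchy--Schwarz tail bound, combined with $h(z')\le f(z',z')\le \sup_{\mathcal Z} f(\cdot,\cdot)<\infty$, shows this series also converges uniformly in the second argument for each fixed $z$, so its sum is continuous in $z'$ and, agreeing with the continuous function $f(z,\cdot)$ $\mu$-a.e.\ on a space where $\mu$ has full support, must equal it everywhere. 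Evaluating at $z'=z$ gives $f(z,z)=h(z)$ for every $z$, and now Dini applies as you intended.
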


\section{Special cases of the LMM}\label{sec:special_cases_of_LMM}

\subsubsection*{Spiked covariance model}
The spiked covariance model \cite{johnstone2001distribution, paul2007asymptotics} is the de facto standard model under which to study the theoretical properties of PCA, and is derived as follows. Let $\mathbf{X}\in\mathbb{R}^{n\times p}$ be a matrix of random variables such that $\mathbb{E}[\mathbf{X}^\top\mathbf{X}]$ has rank $r$. Consider the eigendecomposition $n^{-1}\mathbb{E}[\mathbf{X}^\top\mathbf{X}]=\mathbf{V}\boldsymbol{\Lambda}\mathbf{V}^\top$, where $\mathbf{V}\in\mathbb{R}^{p\times r}$, and define $\mathbf{Z}\coloneqq\mathbf{X}\mathbf{V}\boldsymbol{\Lambda}^{-1/2}$. We have 
\begin{equation}
\mathbb{E}\left[\mathbf{Z}^{\top}\mathbf{Z}\right]=n\mathbf{I}_{r},\qquad\mathbf{V}^{\top}\mathbf{V}=\mathbf{I}_{r},\label{eq:spiked-cov-conditions}
\end{equation}
and $\mathbf{X}=\mathbf{Z}\boldsymbol{\Lambda}^{1/2}\mathbf{V}^\top $, a.s., where the latter equality can be checked by verifying
$$\mathbb{E}[\|\mathbf{X}-\mathbf{Z}\boldsymbol{\Lambda}^{1/2}\mathbf{V}^\top \|_{\mathrm{F}}^2]=\mathrm{tr}\,\mathbb{E}[(\mathbf{X}-\mathbf{Z}\boldsymbol{\Lambda}^{1/2}\mathbf{V}^\top )^\top(\mathbf{X}-\mathbf{Z}\boldsymbol{\Lambda}^{1/2}\mathbf{V}^\top )]=0.$$

The spiked covariance model takes the form:
$$
\Y =\mathbf{Z}\boldsymbol{\Lambda}^{1/2}\mathbf{V}^\top + \sigma\mathbf{E},
$$
where the elements
of $\mathbf{E}\in\mathbb{R}^{n\times p}$ are usually assumed to be zero-mean,
unit variance and uncorrelated.
The rows of $\mathbf{Z}\in\mathbb{R}^{n\times r}$ are  called individual-specific
random effects, and are usually assumed to be i.i.d. The following proposition shows that a spiked covariance model of precisely this form is a special case of the LMM.


\begin{prop}
\label{prop:spiked_cov_as_lmm}For any $r<\infty$, let the rows of $\mathbf{Z}\in\mathbb{R}^{n\times r}$ be i.i.d. random vectors such that the
first equality in (\ref{eq:spiked-cov-conditions}) holds, let $\boldsymbol{\Lambda}=\mathrm{diag}(\lambda_{1},\ldots,\lambda_{r})$
where $\lambda_{1},\ldots,\lambda_{r}$ are any strictly positive
real numbers and let $\mathbf{V}=[v_{1}|\cdots|v_{p}]^{\top}\in\mathbb{R}^{p\times r}$
be any deterministic matrix such that the second equality in (\ref{eq:spiked-cov-conditions})
holds. Then, if $\Y$ follows the Latent Metric Model
specified by: 
\begin{equation}
\mathcal{Z}\subset\mathbb{R}^{r},\qquad[Z_{1}|\cdots|Z_{n}]^{\top}\coloneqq\mathbf{Z}\qquad X_{j}(z)\coloneqq\left\langle v_{j},\boldsymbol{\Lambda}^{1/2}z\right\rangle ,\label{eq:spiked_covariance_as_lmm_Z_and_V}
\end{equation}
the mean correlation kernel associated with this Latent Metric Model
is: 
\[
f(z,z^{\prime})=\frac{1}{p}\left\langle z,\boldsymbol{\Lambda}z^{\prime}\right\rangle ,
\]
which has rank $r$, $\lambda_{k}^{f} = \lambda_{k}/p$,
$[u_{1}^{f}(z)\,\cdots\,u_{r}^{f}(z)]^{\top} =  z\in\mathbb{R}^{r}$ and the following identity holds:
$$\Y = \mathbf{Z}\boldsymbol{\Lambda}^{1/2}\mathbf{V}^\top + \sigma\mathbf{E}.
$$
\end{prop} 
\begin{proof}
The claimed expression for $f(z,z^\prime)$ holds by substituting the definition of $X_j(z)$ in \eqref{eq:spiked_covariance_as_lmm_Z_and_V} into the definition $f(z,z^\prime)\coloneqq{p^{-1}}\sum_{j=1}^p\mathbb{E}[X_j(z)X_j(z^\prime)]$ and using the assumption of the proposition that $\mathbf{V}^\top\mathbf{V}=\mathbf{I}_r$. The eigenfunctions $[u_{1}^{f}(z)\,\cdots\,u_{r}^{f}(z)]^{\top} =  z\in\mathbb{R}^{r}$ are orthonormal due to the assumption $\mathbb{E}\left[\mathbf{Z}^{\top}\mathbf{Z}\right]=n\mathbf{I}_{r}$ and the i.i.d. nature of the rows of $\mathbf{Z}$. The expression for $\mathbf{Y}$ in the statement holds by substituting \eqref{eq:spiked_covariance_as_lmm_Z_and_V} into the definition of $\mathbf{Y}$ under the LMM, i.e., $\mathbf{Y}_{ij}=X_j(Z_i)+\sigma\mathbf{E}_{ij}$.
\end{proof}
The relationship between the spiked covariance model (SCM) and the LMM can thus be summarised as follows:
\begin{itemize}[leftmargin=0.4cm]
\item  the metric space $(\mathcal{Z},d_{\Zc})$ in the LMM generalizes the Euclidean
domain of individual-specific random effects in the spiked covariance model;
\item the eigenfunctions $u_{k}^{f}$, $k\geq1$, in the LMM generalise
the linear dependence on individual-specific random effects in the
SCM;
\item the random functions $X_{j}$, $j=1,\ldots,p$, in the LMM generalise the  deterministic, linear functions $\left\langle v_{j},\boldsymbol{\Lambda}^{1/2}z\right\rangle $,
$j=1,\ldots,p$, which in light of (\ref{eq:spiked_covariance_as_lmm_Z_and_V})
are implicit in the SCM;
\item the LMM allows for possibly infinite rank, generalising the finite-rank nature of the SCM. 
\end{itemize}



\subsubsection*{Finite mixture model}
Consider the case where $\mathcal{Z}$ has finitely many elements,
say $\mathcal{Z}=\{1,\ldots,m\}$. For the following discussion it
is not important that we take these elements to be the numbers $1,\ldots,m$,
any $m$ distinct abstract elements will do. In this situation the LMM is a form of finite
mixture model with random mixture centres. Indeed we see from:
\[
\mathbf{Y}_{ij}=X_{j}(Z_{i})+\sigma\mathbf{E}_{ij}
\]
that $[X_{1}(z)\,\cdots\,X_{p}(z)]$ can be interpreted as the $p$-dimensional
random centre of a mixture component labeled by $z\in\mathcal{Z}$,
and the latent variable $Z_{i}$ indicates which mixture component
the $i$th row of the data matrix $\mathbf{Y}$ is drawn from. The
simple form of the noise in the LMM constrains
the generality of this mixture model: recall the elements of $\mathbf{E}$
are independent across columns; elements in the same column but distinct
rows are uncorrelated; all elements are unit variance. 

To make $\mathcal{Z}$ into a metric space we  consider the discrete
metric $d_{\mathcal{Z}}(z,z^{\prime})\coloneqq0$ for $z=z^{\prime},$
otherwise $d_{\mathcal{Z}}(z,z^{\prime})\coloneqq1$. The kernel $f$
is specified by the matrix $\mathbf{F}\in\mathbb{R}^{m\times m}$
with entries 
\[
\mathbf{F}_{kl}\coloneqq\frac{1}{p}\sum_{j=1}^{p}\mathbb{E}[X_{j}(k)X_{j}(l)],\quad k,l\in\{1,\ldots,m\}.
\]
In this situation \ref{ass:cont_covar} and \ref{ass:finite rank} hold immediately, and $r\leq m$.

Topological equivalence of $\mathcal{M}$ and $\mathcal{Z}$ in this
situation would mean that $\mathcal{M}$ consists of $m$ distinct
points $\{\phi(1),\ldots,\phi(m)\}$, each associated with exactly
one element of $\mathcal{Z}$. If such topological equivalence were to hold then theorem \ref{thm:consistency_summary}
would tell us that the PCA embedding vectors will be clustered around the
$m$ distinct points $\{\mathbf{Q}^{-1}\phi(1),\ldots,\mathbf{Q}^{-1}\phi(m)\}$,
with specifically $p^{-1/2}\zeta_{i}$ being close to $\mathbf{Q}^{-1}\phi(Z_{i})$.

To verify topological equivalence it remains to check \ref{ass:injective} holds. To this end, suppose that  $r=m$, i.e. $\mathbf{F}$
is full rank. Then it is not possible that any two rows of $\mathbf{F}$
are identical. That is, for $k,l\in\{1,\ldots m\}$ such that $k\neq l$
, there must exist some $\xi\in\{1,\ldots,m\}$ such that $f(k,\xi)=\mathbf{F}_{k\xi}\neq\mathbf{F}_{l\xi}=f(l,\xi)$. Thus assumption \ref{ass:injective} is satisfied and hence $\mathcal{M}$ is topologically equivalent to $\mathcal{Z}$ if $r=m$. 

In practical terms, we therefore see that in order to organise the $n$ rows of $\mathbf{Y}$ into $m$ clusters, one
can first reduce dimension to $r=m$ by computing the PCA embedding and
then apply some clustering technique to those embedding vectors. This two-step
procedure of PCA followed by clustering, sometimes described as spectral clustering, is very popular in the practice
of high-dimensional data analysis and is exactly what \citet{yata2020geometric} recommend in the conclusion of their study of PCA embedding for mixture models in a regime where the number of samples
is fixed and the dimension tends to infinity. It is already known that PCA, albeit under slightly different variations and assumptions, allows for ``perfect clustering'' in high-dimensional mixture models \citep{loffler2021optimality,agterberg2022entrywise}.


\begin{figure}[ht]
\includegraphics[width=1\columnwidth]{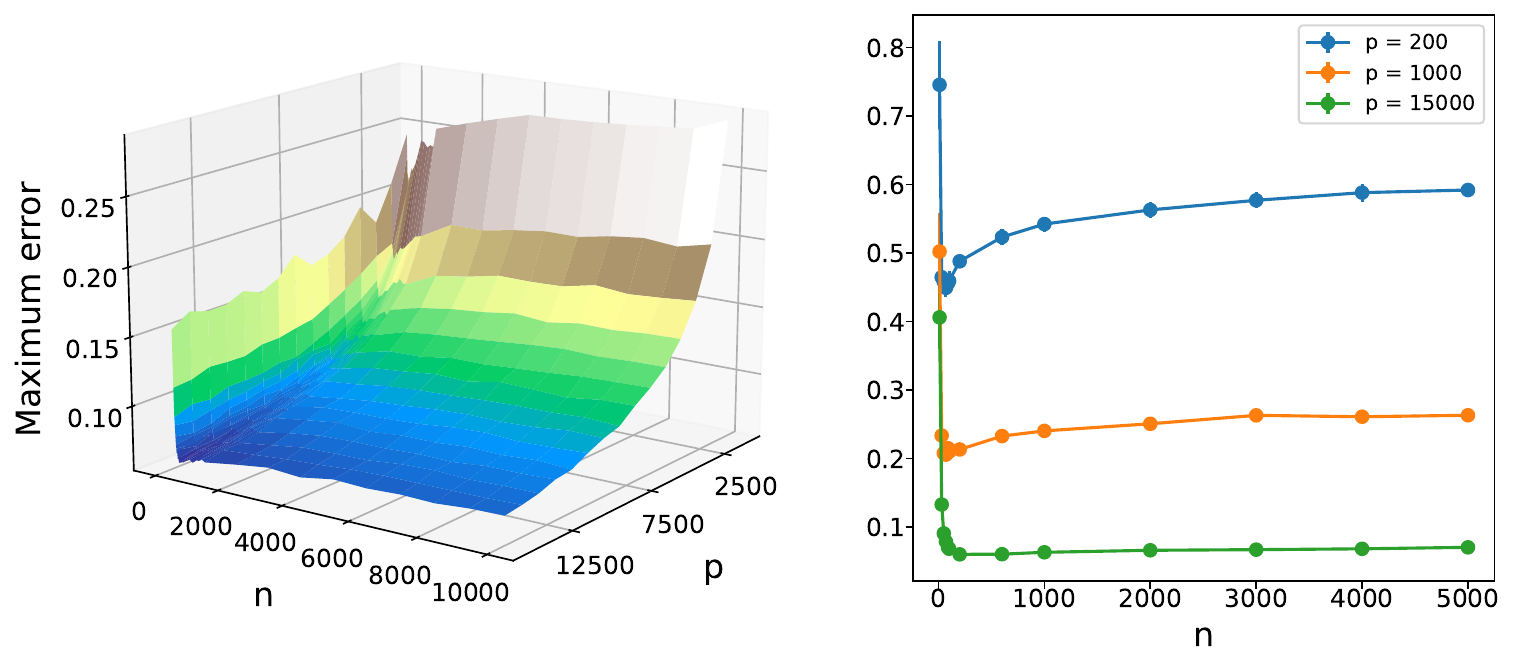}\caption{Mixture model example. Left: maximum error $\max_{i\protect\neq j}\left|p^{-1/2}\|\zeta_{i}-\zeta_{j}\|_{2}-\|\phi(Z_{i})-\phi(Z_{j})\|_{\ell_2}\right|$,
averaged over $50$ independent realisations from the model, as a
function of $n$ and $p$. Right: the same error for $p=200,1000,15000$,
as a function of $n$. \label{fig:Mixture-model-example.}}
\end{figure}

To illustrate the behaviour of the LMM and PCA embedding in this context, we consider a case in which $\mathcal{Z}=\{1,2,3\}$
and $\mu$ is the uniform distribution on $\mathcal{Z}$; for each
$j=1,\ldots,p$, $[X_{j}(1)\;X_{j}(2)\;X_{j}(3)]^{\top}\sim\mathcal{N}(\mathbf{0},\boldsymbol{\Sigma})$
where $\boldsymbol{\Sigma}$ is full-rank; and the elements of $\mathbf{E}$
are independent and identically distributed $\mathcal{N}(0,1)$ with
$\sigma=1$. Figure \ref{fig:Mixture-model-example.} shows the error 
$$\max_{i\protect\neq j}\left|p^{-1/2}\|\zeta_{i}-\zeta_{j}\|_{2}-\|\phi(Z_{i})-\phi(Z_{j})\|_{\ell_2}\right|,
$$
averaged over $50$ independent realisations from the model. The plot
on the left of the figure indicates that over the ranges considered,
for fixed $n$ the error decreases as $p$ increases. Theorem \ref{thm:consistency_summary}
is not informative about the converse situation, when $p$ is fixed
and $n$ increases: in this regime, the condition of theorem \ref{thm:uniform_consistency} involving
a lower bound on $n$ will eventually be satisfied, but the condition
involving a lower bound on $p/n$ will eventually be violated. We
examine this in the right plot of figure \ref{fig:Mixture-model-example.}.
We see that for fixed $p$, as $n$ increases the error initially
quickly decreases, but then the appears to very slowly increase $n\gg p$. We conjecture the former and is related to the $1/\sqrt{n}$ term in \eqref{eq:consistency_summary}.

\begin{figure}[ht]
\includegraphics[width=1\columnwidth]{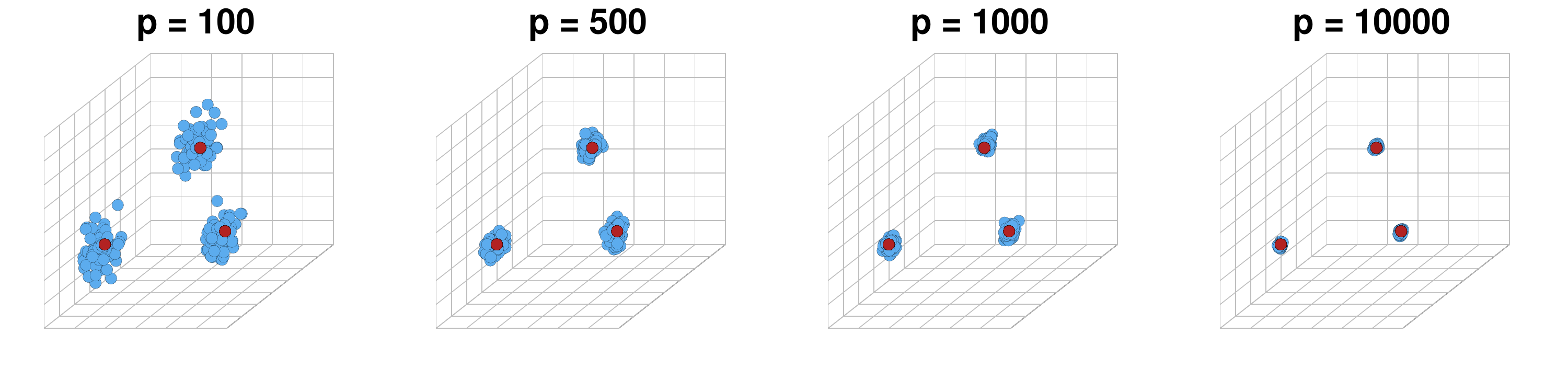}\\
\includegraphics[width=1\columnwidth]{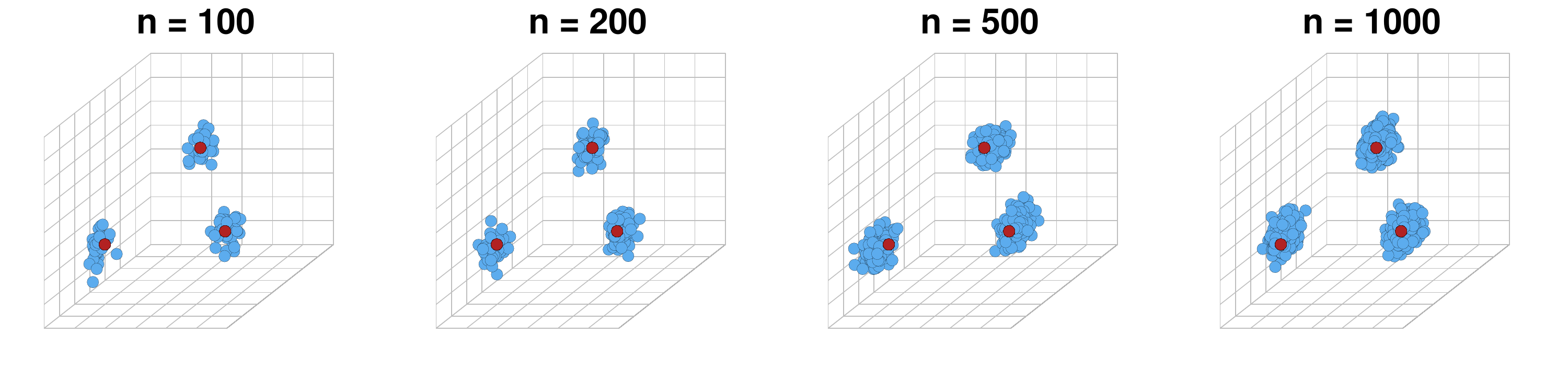}\caption{Mixture model example. PCA embedding $\{p^{-1/2}\zeta_{1},\ldots,p^{-1/2}\zeta_{n}\}$
(blue dots) and $\phi(1),\phi(2),\phi(3)$ (red dots). Top row: $n$ fixed to $200$ and $p$
varying. Bottom row $p$
fixed to $200$ and $n$ varying. \label{fig:Mixture-model-example_varying_n_and_p}}
\end{figure}

Figure \ref{fig:Mixture-model-example_varying_n_and_p} illustrates
how this error performance relates to the clustering of the PCA embedding vectors.
When $n$ is fixed, we see that as $p$ increases the embedding vectors are
increasingly tightly clustered around $\phi(1),\phi(2),\phi(3)$, in keeping with theorem \ref{thm:consistency_summary}.
When $p$ is fixed, we see that three clusters of embedding vectors are clearly
discernible, but the clusters appear not to shrink as $n$ grows.

Overall we conclude that, whilst theorem \ref{thm:consistency_summary}
shows that both $n$ and $p/n$ being large is sufficient to drive
the error to zero, our numerical results suggest that for fixed $p$ the error does not explode as
$n$ grows, and even when $n\gg p$ it may be that the PCA embedding still conveys
the topological or geometric structure of $\mathcal{M}$ and hence
$\mathcal{Z}$.

\section{Proofs and supporting material for section \ref{sec:connecting}}\label{sec:Proofs-and-supporting_for_manifold}

\begin{proof}[Proof of Proposition \ref{prop:Phi_W_expansion}]
Define 
\begin{equation}
\widetilde{\mathbf{W}}_{jk}\coloneqq\int_{\mathcal{Z}}X_{j}(z)u_{k}^{f}(z)\mu(\mathrm{d}z),\label{eq:W_tilde_defn}
\end{equation}
and note that
\begin{equation}
\widetilde{\mathbf{W}}_{jk}=p^{1/2}(\lambda_{k}^{f})^{1/2}\mathbf{W}_{jk},\label{eq:W_tilde_identity}
\end{equation}
where $\mathbf{W}_{jk}$ is defined in \eqref{eq:W_jk_defn}.

Pick any $r_{0}<  r$ and recall $r\in\{1,2\ldots,\}\cup\{\infty\}$
is the number of nonzero eigenvalues $(\lambda_{k}^{f})_{k\geq1}$.
We claim that, for any $z\in\mathcal{Z}$, the following equality
holds: 
\begin{equation}
\frac{1}{p}\sum_{j=1}^{p}\mathbb{E}\left[\left|X_{j}(z)-\sum_{k=1}^{r_{0}}u_{k}^{f}(z)\widetilde{\mathbf{W}}_{jk}\right|^{2}\right]=f(z,z)-\sum_{k=1}^{r_{0}}\lambda_{k}^{f}|u_{k}^{f}(z)|^{2}.\label{eq:L_2_decomp}
\end{equation}
To verify the equality (\ref{eq:L_2_decomp}), observe:
\begin{align*}
 & \frac{1}{p}\sum_{j=1}^{p}\mathbb{E}\left[\left|X_{j}(z)-\sum_{k=1}^{r_{0}}u_{k}^{f}(z)\widetilde{\mathbf{W}}_{jk}\right|^{2}\right]\\
 & =\frac{1}{p}\sum_{j=1}^{p}\mathbb{E}\left[\left|X_{j}(z)\right|^{2}\right]-\frac{2}{p}\sum_{j=1}^{p}\mathbb{E}\left[X_{j}(z)\sum_{k=1}^{r_{0}}u_{k}^{f}(z)\widetilde{\mathbf{W}}_{jk}\right]\\
 & \quad+\frac{1}{p}\sum_{j=1}^{p}\sum_{k=1}^{r_{0}}\sum_{\ell=1}^{r_{0}}\mathbb{E}\left[\widetilde{\mathbf{W}}_{jk}\widetilde{\mathbf{W}}_{j\ell}\right]u_{k}^{f}(z)u_{\ell}^{f}(z)\\
 & =f(z,z)-2\sum_{k=1}^{r_{0}}u_{k}^{f}(z)\int_{\mathcal{Z}}f(z,z^{\prime})u_{k}^{f}(z^{\prime})\mu(\mathrm{d}z^{\prime})\\
 & \quad+\sum_{k=1}^{r_{0}}\sum_{\ell=1}^{r_{0}}u_{k}^{f}(z)u_{\ell}^{f}(z)\int_{\mathcal{Z}}\int_{\mathcal{Z}}f(z^{\prime},z^{\prime\prime})u_{k}^{f}(z^{\prime})u_{\ell}^{f}(z^{\prime\prime})\mu(\mathrm{d}z^{\prime})\mu(\mathrm{d}z^{\prime\prime})\\
 & =f(z,z)-2\sum_{k=1}^{r_{0}}\lambda_{k}|u_{k}^{f}(z)|^{2}+\sum_{k=1}^{r_{0}}\lambda_{k}^{f}|u_{k}^{f}(z)|^{2}\\
 & =f(z,z)-\sum_{k=1}^{r_{0}}\lambda_{k}^{f}|u_{k}^{f}(z)|^{2},
\end{align*}
where the second equality uses (\ref{eq:W_tilde_defn}) and $f(z,z^{\prime})=p^{-1}\sum_{j=1}^{p}\mathbb{E}[X_{j}(z)X_{j}(z^{\prime})]$,
and the third equality uses the fact that $(u_{k}^{f},\lambda_{k}^{f})_{k\geq1}$,
by definition, are $L_{2}(\mu)$-orthonormal eigenfunctions and eigenvalues
of the integral operator associated with the kernel $f$ and the measure
$\mu$.

By Mercer's theorem (theorem \ref{thm:mercer}) the r.h.s. of (\ref{eq:L_2_decomp})
converges to zero as $r_{0}\to r$, uniformly in $z$. Each of the
summands on the l.h.s. of (\ref{eq:L_2_decomp}) is nonnegative, so
they must also converge to zero uniformly in $z$. Using this uniform
convergence and the fact that for any $j=1,\ldots,p$ and $i=1,\ldots,n$, the pair of random variables $X_{j}$ and $Z_{i}$ are statistically
independent,  we have:
\begin{align}
\lim_{r_{0}\to r}\mathbb{E}\left[\left|X_{j}(Z_{i})-\sum_{k=1}^{r_{0}}u_{k}^{f}(Z_{i})\widetilde{\mathbf{W}}_{jk}\right|^{2}\right] & =\lim_{r_{0}\to r}\int_{\mathcal{Z}}\mathbb{E}\left[\left|X_{j}(z)-\sum_{k=1}^{r_{0}}u_{k}^{f}(z)\widetilde{\mathbf{W}}_{jk}\right|^{2}\right]\mu(\mathrm{d}z)\nonumber \\
 & \leq\lim_{r_{0}\to r}\sup_{z}\mathbb{E}\left[\left|X_{j}(z)-\sum_{k=1}^{r_{0}}u_{k}^{f}(z)\widetilde{\mathbf{W}}_{jk}\right|^{2}\right]=0.\label{eq:X_j_Z_i_con}
\end{align}
Noting the identity (\ref{eq:W_tilde_identity}) and recalling $\phi(z)=[(\lambda_{1}^{f})^{1/2}u_{1}^{f}(z)\;(\lambda_{2}^{f})^{1/2}u_{2}^{f}(z)\;\cdots]^{\top}$,
we find that (\ref{eq:X_j_Z_i_con}) can equivalently be written:
\[
X_{j}(Z_{i})\stackrel{m.s.}{=}p^{1/2}\left\langle \phi(Z_{i}),\mathbf{W}_{j}\right\rangle _{\ell_2},
\]
where $\mathbf{W}_{j}$ is the $j$th row of $\mathbf{W}$. This completes
the proof of the first identity in \eqref{eq:Phi_W_expansion}.

The second identity in \eqref{eq:Phi_W_expansion} follows from the fact that $(u_{k}^{f},\lambda_{k}^{f})_{k\geq1}$
are orthonormal eigenfunctions/values:
\begin{align*}
\sum_{j=1}^{p}\mathbb{E}[\mathbf{W}_{jk}\mathbf{W}_{j\ell}] & =\frac{1}{(\lambda_{k}^{f}\lambda_{\ell}^{f})^{1/2}}\int_{\mathcal{Z}}\int_{\mathcal{Z}}u_{k}^{f}(z)\frac{1}{p}\sum_{j=1}^{p}\mathbb{E}\left[X_{j}(z)X_{j}(z^{\prime})\right]u_{\ell}^{f}(z^{\prime})\mu(\mathrm{d}z^{\prime})\mu(\mathrm{d}z)\\
 & =\frac{\lambda_{\ell}^{f}}{(\lambda_{k}^{f}\lambda_{\ell}^{f})^{1/2}}\int_{\mathcal{Z}}u_{k}^{f}(z)u_{\ell}^{f}(z)\mu(\mathrm{d}z)=\begin{cases}
1, & k=\ell\\
0, & k\neq\ell
\end{cases}.
\end{align*}
\end{proof}

We introduce the following assumption in order to prove proposition \ref{prop:mixing} below.
\begin{assump}\label{ass:mixing}
For mixing coefficients $\varphi$ satisfying $\sum_{k\geq 1}\varphi^{1/2}(k)<\infty$ and all $z,z^\prime\in\Zc$, the sequence $\{(X_j(z),X_j(z^\prime));j\geq 1\}$ is $\varphi$-mixing.
\end{assump}
\noindent Assumption \ref{ass:moments} is stated in section \ref{sec:Proof-and-supporting_consistency}.
\begin{prop}\label{prop:mixing} 
Assume \ref{ass:cont_covar}, \ref{ass:moments} and \ref{ass:mixing}, and let $q\geq 1$ and $\varphi$ be as therein. Then there exists a constant $C(\varphi)$ depending only on $\varphi$ such
that for any $\delta>0$ and any $i,j$, 
$$
\mathbb{P}\left(\left.\left|p^{-1}\left\langle \mathbf{Y}_{i},\mathbf{Y}_{j}\right\rangle -\langle \phi(Z_{i}),\phi(Z_{j})\rangle_{\ell_2}-\sigma^2\mathbf{I}[i=j]\right|\geq\delta\right|Z_i,Z_j\right)
 \leq\frac{1}{\delta^{2q}}\frac{1}{p^{q}}C(\varphi)M(q,\sigma)
$$
where 
\begin{align*}
M(q,\sigma) & \coloneqq\sup_{j\geq1}\sup_{z\in\mathcal{Z}}\mathbb{E}\left[|X_{j}(z)|^{4q}\right]\\
 & \quad+\sigma\sup_{i,j\geq 1}\mathbb{E}\left[\left|\mathbf{E}_{ij}\right|^{2q}\right] \sup_{j\geq1}\sup_{z\in\mathcal{Z}}\mathbb{E}\left[\left.|X_{j}(z)|^{2q}\right|\right]\\
 & \quad+\sigma^2\sup_{i,j\geq 1}\mathbb{E}\left[\left|\mathbf{E}_{ij}\right|^{4q}\right].
\end{align*}
\end{prop}
\begin{proof} Fix any $i,j$ and consider the decomposition:
\[
p^{-1}\left\langle \mathbf{Y}_{i},\mathbf{Y}_{j}\right\rangle -\langle \phi(Z_{i}),\phi(Z_{j})\rangle_{\ell_2}-\sigma^2\mathbf{I}[i=j]=\sum_{k=1}^{4}\Delta_{k}
\]
where 
\begin{align*}
 & \Delta_{1}\coloneqq p^{-1}\left\langle \mathbf{X}(Z_{i}),\mathbf{X}(Z_{j})\right\rangle -f(Z_{i},Z_{j})\\
 & \Delta_{2}\coloneqq p^{-1}\sigma\left\langle \mathbf{X}(Z_{i}),\mathbf{E}_{j}\right\rangle \\
 & \Delta_{3}\coloneqq p^{-1}\sigma\left\langle \mathbf{X}(Z_{j}),\mathbf{E}_{i}\right\rangle \\
 & \Delta_{4}\coloneqq p^{-1}\sigma^2\left\langle \mathbf{E}_{i},\mathbf{E}_{j}\right\rangle -\sigma^2\mathbf{I}[i=j]
\end{align*}
and $\mathbf{X}(z)\coloneqq[X_1(z)\,\ldots\, X_p(z)]^\top$.

Writing $\Delta_{1}$ as 
\[
\Delta_{1}=\frac{1}{p}\sum_{k=1}^{p}\Delta_{1,k},\qquad\Delta_{1,k}\coloneqq X_{k}(Z_{i})X_{k}(Z_{j})-\mathbb{E}\left[\left.X_{k}(Z_{i})X_{k}(Z_{j})\right|Z_{i},Z_j\right].
\]
 we see that $\Delta_{1}$ is an arithmetic mean of $p$ random variables,
each of which is conditionally mean-zero given $Z_{i},Z_{j}$.

For $\Delta_{2}$, we have 
\[
\Delta_{2}= \frac{\sigma}{p}\sum_{k=1}^{p}\Delta_{2,k},\qquad\Delta_{2,k}\coloneqq X_k(Z_i)\mathbf{E}_{jk},
\]
By definition of the LMM, 
the three collections of random variables, $(
Z_{1},\ldots,Z_{n})$,
$(X_1,\ldots,X_p)$ and $(\mathbf{E}_{1},\ldots,\mathbf{E}_{n})$
are mutually independent, and the elements of each vector $\mathbf{E}_{j}\in\mathbb{R}^{p}$
are mean-zero and independent. Therefore given $Z_i,Z_j$
and $X_1,\ldots,X_p$, $\Delta_{2}$ is an arithmetic mean of
  conditionally independent and conditionally mean-zero random
variables. The same decomposition holds for $\Delta_3$, with $i,j$ interchanged.

For $\Delta_{4}$, we have 
\[
\Delta_{4} =\frac{\sigma^2}{p}\sum_{1\leq k \leq p}\Delta_{4,k },\qquad\Delta_{4,k }\coloneqq\mathbf{E}_{ik}\mathbf{E}_{jk}-\mathbf{I}[i=j].
\]
Recalling from the definition of the LMM that the elements of $\mathbf{E}$ are mean zero, unit-variance,  uncorrelated across rows,  and independent across columns, we see that $\Delta_{4}$ is a sum of $p$ mean-zero and mutually independent random variables.

\sloppy The proof proceeds by using a moment inequality for mixing random variables \citep{xuejun2010complete}[Lemma 1.7] to bound $\mathbb{E}[|\Delta_1|^{2q}|Z_i,Z_j]$, and using the Marcinkiewicz--Zygmund inequality to bound  $\mathbb{E}[|\Delta_2|^{2q}|Z_i,Z_j,X_1,\ldots,X_p]$, $\mathbb{E}[|\Delta_3|^{2q}|Z_i,Z_j,X_1,\ldots,X_p]$ and $\mathbb{E}[|\Delta_4|^{2q}]$. These bounds are then combined and Markov's inequality applied. The details are similar to \cite{gray2023hierarchical}[Proof of proposition 2], so are omitted. 

\end{proof}

\begin{proof}[Proof of Proposition \ref{prop:homeomorphism}]
As explained above the statement of proposition \ref{prop:homeomorphism}, we only need to show that \ref{ass:injective} holds if and only if $\phi$ is one-to-one. For any $z,z^\prime\in\mathcal{Z}$ consider the identities:
\begin{align*}
\|\phi(z)-\phi(z^\prime)\|_{\ell_2}^2 &=\|\phi(z)\|_{\ell_2}^2 +\|\phi(z^\prime)\|_{\ell_2}^2 -2\langle\phi(z),\phi(z^\prime)\rangle_{\ell_2}\\
& = f(z,z)+f(z^\prime,z^\prime)-2f(z,z^\prime)\\
& = \frac{1}{p}\sum_{j=1}^p\mathbb{E}[|X_j(z)|^2]+\frac{1}{p}\sum_{j=1}^p\mathbb{E}[|X_j(z^\prime)|^2]-\frac{2}{p}\sum_{j=1}^p\mathbb{E}[X_j(z)X_j(z^\prime)]\\
&= \frac{1}{p}\sum_{j=1}^p\mathbb{E}[|X_j(z)-X_j(z^\prime)|^2].
\end{align*}
Hence $\phi(z)\neq \phi(z^\prime)$ if and only if $\sum_{j=1}^p\mathbb{E}[|X_j(z)-X_j(z^\prime)|^2]>0$, showing that \ref{ass:injective} is equivalent to $\phi$ being one-to-one, as required to complete the proof of the proposition.

To go further let us also show that $\phi$ being one-to-one is equivalent to the condition: 
\begin{equation}
\text{for any } z,z^\prime\in\mathcal{Z} \text{ such that } z\neq z^\prime, \text{ there exists } \xi \text{ such that } f(z,\xi)\neq f(z^\prime,\xi).\label{eq:inj_alt}
\end{equation}
We first show that \eqref{eq:inj_alt} implies $\phi$ is one-to-one. We
prove the contrapositive to this statement. So suppose that $\phi$ is not
one-to-one. Then there must exist $z\neq z^{\prime}\in\mathcal{Z}$
such that $\phi(z)=\phi(z^{\prime})$. This implies that for any $\xi$
in $\mathcal{Z}$, $f(z,\xi)=\left\langle \phi(z),\phi(\xi)\right\rangle _{\ell_2}=\left\langle \phi(z^{\prime}),\phi(\xi)\right\rangle _{\ell_2}=f(z^{\prime},\xi)$,
which is the converse of \eqref{eq:inj_alt}. 

In the other direction, suppose the converse of \eqref{eq:inj_alt} holds, i.e., the exists $z\neq z^\prime$ such that $f(z,\xi)=f(z^\prime,\xi)$ for all $\xi$. By considering the cases $\xi=z$ and $\xi=z^\prime$ we find $f(z,z)=f(z,z^\prime)=f(z^\prime,z^\prime)$. In turn,
$$
\|\phi(z)-\phi(z^\prime)\|^2_{\ell_2} = f(z,z)+f(z^\prime,z^\prime)-2f(z,z^\prime)=0,
$$
i.e., $\phi(z) =\phi(z^\prime)$, and hence $\phi$ is not one-to-one. 
\end{proof}




\subsection{Proofs and supporting material for section \ref{sec:isometry}} \label{subsec:isometry_proofs}
The purpose of this section is to state some definitions and intermediate results, building towards the proofs of propositions \ref{prop:sq_euc_kernels_alt} and \ref{prop:inner_prod_kernels_alt}. Recall that the term ``continuous path'' was used in \ref{ass:diffeo_Rd_alt}. From henceforth we just say ``path'' for short.

The following definitions are standard in metric geometry \cite{burago2001course}. For $x,x^\prime\in\mathcal{M},$ a \emph{path} in $\mathcal{M}$ with end-points
$x,x^\prime$ is a continuous function $\gamma:[0,1]\to\mathcal{M}$ such
that $\gamma_{0}=x$ and $\gamma_{1}=x^\prime$, where  $\mathcal{M}$ is equipped with the distance $\|\cdot-\cdot\|_{\ell_2}$. With $n\geq1$, a non-decreasing
sequence $t_{0},\ldots,t_{n}$ such that $t_{0}=0$ and $t_{n}=1$,
is called a \emph{partition.} Given a path $\gamma$ and a partition
$\mathcal{P}=(t_{0},\ldots,t_{n})$, define $\chi(\gamma,\mathcal{P})\coloneqq\sum_{k=1}^{n}\|\gamma_{t_{k}}-\gamma_{t_{k-1}}\|_{\ell_2}$.
The \emph{length} of $\gamma$  is $L(\gamma)\coloneqq\sup_{\mathcal{P}}\chi(\gamma,\mathcal{P})$,
where the supremum is over all possible partitions.


When $\Zc$ is a subset of $\mathbb{R}^d$, a \emph{path} $\eta$ in $\mathcal{Z}$ with end-points $z,z^\prime$
is a continuous function $\eta:[0,1]\to\mathcal{Z}$ such that $\eta_{0}=z,\eta_{1}=z^\prime$,
and with $\chi(\eta,\mathcal{P})\coloneqq\sum_{k=1}^{n}\|\eta_{t_{k}}-\eta_{t_{k-1}}\|_{\mathbb{R}^d}$
the length of $\eta$ is $L(\eta)\coloneqq\sup_{\mathcal{P}}\chi(\eta,\mathcal{P})$.

The \emph{shortest path lengths}, also known as \emph{geodesic distances}, in $\mathcal{M}$ and $\Zc$ are:
\begin{equation}\label{eq:d_M_and_d_Z_defn}
d_{\mathcal{M}}^{\mathrm{geo}}(x,x^\prime)\coloneqq \inf_{\gamma:\gamma_0=x,\gamma_1=x^\prime} L(\gamma)\qquad d_{\mathcal{Z}}^{\mathrm{geo}}(z,z^\prime)\coloneqq \inf_{\eta:\eta_0=z,\eta_1=z^\prime} L(\eta),
\end{equation}
where the infima are over all paths in respectively $\mathcal{M}$ and $\Zc$ with the indicated end-points.


\begin{assump}\label{ass:diffeo_Rd} Assume  \ref{ass:diffeo_Rd_alt} holds and with $d$ as therein, additionally assume there exists a closed ball $\widetilde{\mathcal{Z}}\subset\mathbb{R}^{d}$ centered on the origin such that:
$\mathcal{Z}\subset\widetilde{\mathcal{Z}}$; the definition of $f(z,z^\prime)$
can be extended from $\mathcal{Z}\times\mathcal{Z}$ to $\widetilde{\mathcal{Z}}\times\widetilde{\mathcal{Z}}$; $f$ is $C^2$ on $\widetilde{\mathcal{Z}}\times\widetilde{\mathcal{Z}}$ and the matrix $\mathbf{H}_{\xi}\in\mathbb{R}^{d\times d}$
with elements:
\[
(\mathbf{H}_{\xi})_{ij}\coloneqq\left.\frac{\partial^{2}f}{\partial z_{i}\partial z_{j}^{\prime}}\right|_{(\xi,\xi)}
\]
is positive-definite for all $\xi\in\Zc$.
\end{assump}
The statement of the following theorem, from \cite{whiteley2021matrix}, is paraphrased slightly in order to match the assumptions of interest here.
\begin{thm}[\cite{whiteley2021matrix}, Thm 1.]\label{thm:neurips21}Assume \ref{ass:cont_covar}, \ref{ass:injective}
and \ref{ass:diffeo_Rd}. Then $\phi$ is a bi-Lipschitz homeomorphism between $\mathcal{Z}$
and $\mathcal{M}$. Let $x,x^\prime$ be any two points in $\mathcal{M}$, and let $\gamma$ be any path in $\mathcal{M}$ of finite length, with end-points $x,x^\prime$. Define $\eta:[0,1]\to\mathcal{Z}$
by $\eta_{t}\coloneqq\phi^{-1}(\gamma_{t})$. Then $\eta$ is a path
in $\mathcal{Z}$ with $L(\eta)<\infty$. For any $\epsilon>0$ there
exists a partition $\mathcal{P}_{\epsilon}$ such that for any partition
$\mathcal{P}=(t_{0},\ldots,t_{n})$ satisfying $\mathcal{P}_{\epsilon}\subseteq\mathcal{P}$,
\begin{equation}
\left|L(\gamma)-\sum_{k=1}^{n}\left\langle \eta_{t_{k}}-\eta_{t_{k-1}},\+H_{\eta_{t_{k-1}}}(\eta_{t_{k}}-\eta_{t_{k-1}})\right\rangle ^{1/2}\right|\leq\epsilon.\label{eq:path_length}
\end{equation}
\end{thm}

\begin{proof}[Proof of proposition \ref{prop:sq_euc_kernels_alt}]
Under the assumptions of the proposition, by direct calculation $\mathbf{H}_{\xi}=-2 g^{\prime}(0)\mathbf{I}_{d}$
for all $\xi\in\Zc$ and \ref{ass:diffeo_Rd} holds.

Fix any $z,z^\prime\in\Zc$ and let $\eta$ be any finite length path in $\Zc$ with these end-points. By theorem \ref{thm:neurips21}, $\phi$ is Lipschitz, so $\gamma$ defined by $\gamma_t\coloneqq\phi(\eta_t)$ has finite length. Define $x\coloneqq \phi(z)$, $x^\prime\coloneqq \phi(z^\prime)$. Applying theorem \ref{thm:neurips21}, we have from \eqref{eq:path_length} that for any $\epsilon>0$ there exists a partition  $\mathcal{P}_\epsilon$ such that for any $\mathcal{P}=(t_0,\ldots,t_n)$ satisfying $\mathcal{P}_\epsilon \subseteq \mathcal{P}$, 

\begin{equation}\label{eq:L_1}
\left|L(\gamma)-\sqrt{-2 g^{\prime}(0)}\chi(\eta,\mathcal{P})\right|\leq \epsilon.
\end{equation}
Also, using the definition of path length $L(\eta)$ and the triangle inequality, there exists $\widetilde{\mathcal{P}}_{\epsilon}$ such that for  any partition $\mathcal{P}$ satisfying $\widetilde{\mathcal{P}}_\epsilon \subseteq \mathcal{P}$, we have:
\begin{equation}\label{eq:L_2}
\left|L(\eta)- \chi(\eta,\mathcal{P})\right|\leq\epsilon.
\end{equation}
Choosing $\mathcal{P}$ to be the union of $\mathcal{P}_{\epsilon}$ and $\widetilde{\mathcal{P}}_{\epsilon}$, i.e., if $\tau\in \mathcal{P}_{\epsilon}$ or $\widetilde{\mathcal{P}}_{\epsilon}$, then $\tau\in\mathcal{P}$, we find that \eqref{eq:L_1} and \eqref{eq:L_2} are satisfied simultaneously. Since $\epsilon$ was arbitrarily small,  
we find that $L(\gamma)=\sqrt{-2 g^{\prime}(0)}L(\eta)$.

By theorem \ref{thm:neurips21}, $\phi$ is a bi-Lipschitz homeomorphism, so $\tilde{\gamma}_t=\phi(\tilde{\eta}_t)$ defines a bijection between the set of finite-length paths $\tilde{\gamma}$ in $\mathcal{Z}$ with end-points $\phi(z),\phi(z^\prime)$ and the set of finite length paths $\tilde{\eta}$ in $\Zc$ with end-points $z,z^\prime$.
Therefore by taking the infimum over $\eta$ on both sides of $L(\gamma)=\sqrt{-2 g^{\prime}(0)}L(\eta)$ where $\gamma$ is defined by $\gamma_t=\phi(\eta_t)$ as above, we find that 
\begin{equation}
d_{\mathcal{M}}^{\mathrm{geo}}(\phi(z),\phi(z^\prime))=\sqrt{-2 g^{\prime}(0)}d_{\mathcal{Z}}^{\mathrm{geo}}(z,z^\prime)\label{eq:isometry_distances_equal}
\end{equation}
as required.
\end{proof}


\begin{proof}[Proof of proposition \ref{prop:inner_prod_kernels_alt}]
For the $\Zc$ in question, we have $g(\langle z,z^\prime\rangle_{\mathbb{R}^d}) =g(1-\|z-z^\prime\|_{\mathbb{R}^d}^2/2)$. The proof is completed by applying proposition \ref{prop:sq_euc_kernels_alt} and using the chain rule of differentiation.
\end{proof}

\section{Proof and supporting results for theorem \ref{thm:consistency_summary}}\label{sec:Proof-and-supporting_consistency}

Theorem \ref{thm:consistency_summary} is a corollary to theorem \ref{thm:uniform_consistency}. The proofs of both these theorems are in section \ref{subsec:Proof-of-main-theorem}. Section \ref{subsec:defns_and_prelims} contains definitions and notation used throughout section \ref{sec:Proof-and-supporting_consistency}. Various intermediate results used in the proof of theorem \ref{thm:uniform_consistency} are given in sections \ref{subsec:matrix_est}-\ref{subsec:matrix_conc}.

The following assumption is a more detailed version of \ref{ass:moments_q_equals_1}.
\begin{assump}\label{ass:moments}
For some $q\geq1$, $\sup_{j\geq 1}\sup_{z\in\mathcal{Z}}\mathbb{E}[|X_{j}(z)|^{4q}]<\infty$
and $\sup_{j \geq 1}\sup_{i\geq 1}\mathbb{E}[|\mathbf{E}_{ij}|^{4q}]<\infty$.
\end{assump}

\begin{thm}
\label{thm:uniform_consistency}Assume \ref{ass:cont_covar}, \ref{ass:independence}, \ref{ass:finite rank} and \ref{ass:moments}, and let 
$q\geq1$ and $r<\infty$ be as therein. For $\min(p,n)\geq r$, let $\mathbf{Y}\in\mathbb{R}^{n\times p}$ follow the LMM from section \ref{sec:Model} and let $\zeta_{1},\ldots,\zeta_{n}$
be the dimension-$r$ PCA embedding.
Then there exists a random orthogonal matrix $\mathbf{Q\in}\mathbb{R}^{r\times r}$ depending on $n$ and $p$ such
that for any $\delta\in(0,1)$ and $\epsilon\in(0,1]$, if 
\[
n\geq c_{1}\sigma^{2}r^{1/2}\left(1\vee\frac{\sigma^{2}r^{1/2}}{\epsilon^{2}}\right)\vee\log\left(\frac{r}{\delta}\right)\quad\text{and}\quad\frac{p}{n}\geq c_{2}(q)\frac{r}{\delta^{1/q}\epsilon^{2}},
\]
then 
\[
\max_{i=1,\ldots,n}\left\Vert p^{-1/2}\mathbf{Q}\zeta_{i}-\phi(Z_{i})\right\Vert_{2}\leq\epsilon
\]
with probability at least $1-\delta$. Here $c_{1}$ and $c_{2}(q)$ are constants depending on the suprema in \ref{ass:moments} and the quantity $\inf_{p\geq 1} \lambda_r^f$ which is strictly positive under \ref{ass:finite rank}; and $\|\cdot\|_{2}$ is the Euclidean norm.
\end{thm}

\subsection{Definitions and preliminaries}\label{subsec:defns_and_prelims}

Throughout section \ref{sec:Proof-and-supporting_consistency} the
probability measure $\mu$ in the LMM is considered fixed, $(\lambda_{k}^{f},u_{k}^{f})_{k\geq1}$
are as in section \ref{sec:Model}, and assumption \ref{ass:finite rank} is taken to hold, so that the rank of $f$ is finite, i.e., $r<\infty$.

\subsubsection{Notation concerning vectors and matrices in general}

We notationally index the eigenvalues of a generic symmetric matrix
$\mathbf{A}$ in a non-increasing but otherwise arbitrary order $\lambda_{1}(\mathbf{A})\geq\lambda_{2}(\mathbf{A})\geq\cdots$.
For a vector $x$ with elements $x_{i}$, $\|x\|_{\infty}\coloneqq\max_{i}|x_{i}|$
and $\|x\|_{2}\coloneqq\sqrt{\sum_{i}|x_{i}|^{2}},$ and the spectral
norm and Frobenius norm of matrices are denoted $\|\cdot\|_{2}$ and
$\|\cdot\|_{F}$.

\subsubsection{Some matrices of interest}

Let the matrix $\boldsymbol{\Phi}\in\mathbb{R}^{n\times r}$ be defined
by

\[
\mathbf{\boldsymbol{\Phi}}\coloneqq[\phi(Z_{1})|\cdots|\phi(Z_{n})]^{\top},
\]

Let $\boldsymbol{\Lambda}_{\mathbf{Y}}\in\mathbb{R}^{r\times r}$
be the diagonal matrix with diagonal elements the eigenvalues $\lambda_{1}(p^{-1}\mathbf{Y}\mathbf{Y}^{\top})$ $\ldots,\lambda_{r}(p^{-1}\mathbf{Y}\mathbf{Y}^{\top})$,
and let $\mathbf{U}_{\mathbf{Y}}\in\mathbb{R}^{n\times r}$ have as
its columns orthonormal eigenvectors associated with these eigenvalues.
Since $\boldsymbol{\Phi}\in\mathbb{R}^{n\times r}$ and $r\leq\text{\ensuremath{\min}}(p,n)$,
the matrix $\boldsymbol{\Phi}\boldsymbol{\Phi}^{\top}$ has rank at
most $r$. Let $\boldsymbol{\Lambda}_{\boldsymbol{\Phi}}\in\mathbb{R}^{r\times r}$
be the diagonal matrix with diagonal elements which are the eigenvalues
$\lambda_{1}(\boldsymbol{\Phi}\boldsymbol{\Phi}^{\top}),\ldots,\lambda_{r}(\boldsymbol{\Phi}\boldsymbol{\Phi}^{\top})$,
and let $\mathbf{U}_{\boldsymbol{\Phi}}\in\mathbb{R}^{n\times r}$
have as its columns orthonormal eigenvectors associated with these
eigenvalues. Let $\mathbf{F}_{1}\boldsymbol{\Sigma}\mathbf{F}_{2}^{\top}$
denote the full singular value decomposition of $\mathbf{U}_{\boldsymbol{\Phi}}^{\top}\mathbf{U}_{\mathbf{Y}}$
and define the random orthogonal matrix $\mathbf{F}_{\star}\coloneqq\mathbf{F}_{1}\mathbf{F}_{2}^{\top}$.

\subsubsection{Some events of interest}

With $U_{j}$ denoting the $j$th column of $\mathbf{U}_{\boldsymbol{\Phi}}$,
define:

\begin{align*}
 & A_{1}(\epsilon)\coloneqq\left\{ \|p^{-1}\mathbf{Y}\mathbf{Y}^{\top}-\boldsymbol{\Phi}\boldsymbol{\Phi}^{\top}-\sigma^{2}\mathbf{I}_{n}\|_{2}\leq\epsilon n\right\} \\
 & A_{2}(\epsilon)\coloneqq\bigcap_{i=1}^{n}B_{\mathbf{Y},i}(\epsilon)\cap\bigcap_{i=1}^{r}B_{\boldsymbol{\Phi},i}(\epsilon)\\
 & A_{3}(\epsilon)\coloneqq\left\{ \max_{j=1,\ldots,r}\|(p^{-1}\mathbf{Y}\mathbf{Y}^{\top}-\boldsymbol{\Phi}\boldsymbol{\Phi}^{\top}-\sigma^{2}\mathbf{I}_{n})U_{j}\|_{\infty}\leq\epsilon n^{1/2}\right\} \\
 & A_{\mathrm{rank}}\coloneqq\left\{ \mathrm{rank}(\mathbf{Y}\mathbf{Y}^{\top})\geq r\right\} \cap\left\{ \mathrm{rank}(\boldsymbol{\Phi}\boldsymbol{\Phi}^{\top})=r\right\} \\
 & B_{\mathbf{Y},i}(\epsilon)\coloneqq\begin{cases}
\left\{ \lambda_{i}^{f}(1-\epsilon)\leq\frac{1}{n}\lambda_{i}(p^{-1}\mathbf{Y}\mathbf{Y}^{\top})\leq\lambda_{i}^{f}(1+\epsilon)\right\} , & 1\leq i\leq r,\\
\left\{ \frac{1}{n}\lambda_{i}(p^{-1}\mathbf{Y}\mathbf{Y}^{\top})\leq\epsilon\lambda_{r}^{f}\right\} , & r+1\leq i\leq n.
\end{cases}\\
 & B_{\boldsymbol{\Phi},i}(\epsilon)\coloneqq\left\{ (1-\epsilon)\lambda_{i}^{f}\leq\frac{1}{n}\lambda_{i}(\boldsymbol{\Phi}\boldsymbol{\Phi}^{\top})\leq(1+\epsilon)\lambda_{i}^{f}\right\} ,\qquad1\leq i\leq r.
\end{align*}

\subsection{Proofs of theorems \ref{thm:uniform_consistency} and  \ref{thm:consistency_summary}} \label{subsec:Proof-of-main-theorem}

\begin{proof}
[Proof of theorem \ref{thm:uniform_consistency}] Let $\mathbf{F}_{1}\boldsymbol{\Sigma}\mathbf{F}_{2}^{\top}$
be the full singular value decomposition of $\mathbf{U}_{\boldsymbol{\Phi}}^{\top}\mathbf{U}_{\mathbf{Y}}$
and define the random orthogonal matrix $\mathbf{F}_{\star}\coloneqq\mathbf{F}_{1}\mathbf{F}_{2}^{\top}$.
On the event $A_{\text{rank}}$ we have $\mathbf{U}_{\boldsymbol{\Phi}}\boldsymbol{\Lambda}_{\boldsymbol{\Phi}}\mathbf{U}_{\boldsymbol{\Phi}}^{\top}=\boldsymbol{\Phi}\boldsymbol{\Phi}^{\top}$,
and applying lemma \ref{lem:orthogonal_transformation} we find there
exists a random orthogonal matrix $\widehat{\mathbf{Q}}$ such that
$\mathbf{U}_{\boldsymbol{\Phi}}\boldsymbol{\Lambda}_{\boldsymbol{\Phi}}^{1/2}=\boldsymbol{\Phi}\widehat{\mathbf{Q}}$,
hence $[\mathbf{U}_{\boldsymbol{\Phi}}\boldsymbol{\Lambda}_{\boldsymbol{\Phi}}^{1/2}\mathbf{F}_{\star}]_{i}=\phi(Z_{i})^{\top}\mathbf{Q}$
for all $i=1,\ldots n$, where $\mathbf{Q}\coloneqq\widehat{\mathbf{Q}}\mathbf{F}_{\star}$
is orthogonal and $[\cdot]_{i}$ denotes the $i$th row of a matrix.
Lemma \ref{lem:PC_identity} shows that $[\mathbf{U}_{\mathbf{Y}}\boldsymbol{\Lambda}_{\mathbf{Y}}^{1/2}]_{i}=p^{-1/2}\zeta_{i}$.
Combining these observations we have shown that on the event $A_{\mathrm{rank}}$,
\begin{equation}
\|p^{-1/2}\mathbf{Q}\zeta_{i}-\phi(Z_{i})\|_{2}=\|[\mathbf{U}_{\mathbf{Y}}\boldsymbol{\Lambda}_{\mathbf{Y}}^{1/2}-\mathbf{U}_{\boldsymbol{\Phi}}\boldsymbol{\Lambda}_{\boldsymbol{\Phi}}^{1/2}\mathbf{F}_{\star}]_{i}\|_{2},\qquad i=1,\ldots,n.\label{eq:consistency_first_step}
\end{equation}

Now fix any $\epsilon_{1}>0$, $\epsilon_{2}\in(0,1/2)$ and $\epsilon_{3}>0$.
Note that the event $A_{\mathrm{rank}}$ is a superset of $A_{2}(\epsilon_{2})$
and thus $A_{1}(\epsilon_{1})\cap A_{2}(\epsilon_{2})\cap A_{3}(\epsilon_{3})\subseteq A_{\mathrm{rank}}$.
Throughout the remainder of the proof of theorem \ref{thm:uniform_consistency}
we shall establish various identities and inequalities involving random
variables, random matrices, etc; all such identifies and inequalities
to be understood as holding on the event $A_{1}(\epsilon_{1})\cap A_{2}(\epsilon_{2})\cap A_{3}(\epsilon_{3})$,
although we shall avoid making this explicit in our notation in order
to avoid repetition. For example, for two random matrices say $\mathbf{A}$
and $\mathbf{B}$, we write ``$\mathbf{A}=\mathbf{B}$'' as shorthand
for ``$\mathbf{A}(\omega)=\mathbf{B}(\omega)$ for all $\omega\in A_{1}(\epsilon_{1})\cap A_{2}(\epsilon_{2})\cap A_{3}(\epsilon_{3})$''
and similarly for two random variables say $X,Y$, we write ``$X\leq Y$''
as shorthand for ``$X(\omega)\leq Y(\omega)$ for all $\omega\in A_{1}(\epsilon_{1})\cap A_{2}(\epsilon_{2})\cap A_{3}(\epsilon_{3})$''.

Noting that on the event $A_{\mathrm{rank}}$, the matrices $\boldsymbol{\Lambda}_{\mathbf{Y}}^{-1/2}$
and $\boldsymbol{\Lambda}_{\boldsymbol{\Phi}}^{-1/2}$ are well-defined,
let us introduce:
\begin{align*}
\mathbf{C}_{1} & \coloneqq\mathbf{F}_{\star}\boldsymbol{\Lambda}_{\mathbf{Y}}^{1/2}-\boldsymbol{\Lambda}_{\boldsymbol{\Phi}}^{1/2}\mathbf{F}_{\star}\\
\mathbf{C}_{2} & \coloneqq(\mathbf{U}_{\boldsymbol{\Phi}}^{\top}\mathbf{U}_{\mathbf{Y}}-\mathbf{F}_{\star})\boldsymbol{\Lambda}_{\mathbf{Y}}^{1/2}\\
\mathbf{C}_{3} & \coloneqq\mathbf{U}_{\mathbf{Y}}-\mathbf{U}_{\boldsymbol{\Phi}}\mathbf{F}_{\star}=\mathbf{U}_{\mathbf{Y}}-\mathbf{U}_{\boldsymbol{\Phi}}\mathbf{U}_{\boldsymbol{\Phi}}^{\top}\mathbf{U}_{\mathbf{Y}}+\mathbf{U}_{\boldsymbol{\Phi}}(\mathbf{U}_{\boldsymbol{\Phi}}^{\top}\mathbf{U}_{\mathbf{Y}}-\mathbf{F}_{\star})\\
\mathbf{D}_{1} & \coloneqq\mathbf{U}_{\boldsymbol{\Phi}}\mathbf{C}_{1}\\
\mathbf{D}_{2} & \coloneqq\mathbf{U}_{\boldsymbol{\Phi}}\mathbf{C}_{2}\\
\mathbf{D}_{3} & \coloneqq(\mathbf{I}-\mathbf{U}_{\boldsymbol{\Phi}}\mathbf{U}_{\boldsymbol{\Phi}}^{\top})(p^{-1}\mathbf{Y}\mathbf{Y}^{\top}-\boldsymbol{\Phi}\boldsymbol{\Phi}^{\top})\mathbf{C}_{3}\boldsymbol{\Lambda}_{\mathbf{Y}}^{-1/2}\\
\mathbf{D}_{4} & \coloneqq-\mathbf{U}_{\boldsymbol{\Phi}}\mathbf{U}_{\boldsymbol{\Phi}}^{\top}(p^{-1}\mathbf{Y}\mathbf{Y}^{\top}-\boldsymbol{\Phi}\boldsymbol{\Phi}^{\top})\mathbf{U}_{\boldsymbol{\Phi}}\mathbf{F}_{\star}\boldsymbol{\Lambda}_{\mathbf{Y}}^{-1/2}\\
\mathbf{D}_{5} & \coloneqq(p^{-1}\mathbf{Y}\mathbf{Y}^{\top}-\boldsymbol{\Phi}\boldsymbol{\Phi}^{\top})\mathbf{U}_{\boldsymbol{\Phi}}(\mathbf{F}_{\star}\boldsymbol{\Lambda}_{\mathbf{Y}}^{-1/2}-\boldsymbol{\Lambda}_{\boldsymbol{\Phi}}^{-1/2}\mathbf{F}_{\star})
\end{align*}
We now claim that:
\begin{equation}
\mathbf{U}_{\mathbf{Y}}\boldsymbol{\Lambda}_{\mathbf{Y}}^{1/2}-\mathbf{U}_{\boldsymbol{\Phi}}\boldsymbol{\Lambda}_{\boldsymbol{\Phi}}^{1/2}\mathbf{F}_{\star}=(p^{-1}\mathbf{Y}\mathbf{Y}^{\top}-\boldsymbol{\Phi}\boldsymbol{\Phi}^{\top})\mathbf{U}_{\boldsymbol{\Phi}}\boldsymbol{\Lambda}_{\boldsymbol{\Phi}}^{-1/2}\mathbf{F}_{\star}+\sum_{i=1}^{5}\mathbf{D}_{i},\label{eq:U_Lam_decomp}
\end{equation}
which up to some notational differences, is the same decomposition
used by \citet[Proof of Thm 18.]{lyzinski2016community} in the analysis
of spectral methods for community detection in graphs. To verify the
decomposition (\ref{eq:U_Lam_decomp}), observe:

\begin{align}
\mathbf{U}_{\mathbf{Y}}\boldsymbol{\Lambda}_{\mathbf{Y}}^{1/2}-\mathbf{U}_{\boldsymbol{\Phi}}\boldsymbol{\Lambda}_{\boldsymbol{\Phi}}^{1/2}\mathbf{F}_{\star} & =\mathbf{U}_{\mathbf{Y}}\boldsymbol{\Lambda}_{\mathbf{Y}}^{1/2}-\mathbf{U}_{\boldsymbol{\Phi}}\mathbf{F}_{\star}\boldsymbol{\Lambda}_{\mathbf{Y}}^{1/2}\nonumber \\
 & \quad+\mathbf{U}_{\boldsymbol{\Phi}}\mathbf{C}_{1}\nonumber \\
 & =(\mathbf{I}_{n}-\mathbf{U}_{\boldsymbol{\Phi}}\mathbf{U}_{\boldsymbol{\Phi}}^{\top})\mathbf{U}_{\mathbf{Y}}\boldsymbol{\Lambda}_{\mathbf{Y}}^{1/2}\nonumber \\
 & \quad+\mathbf{U}_{\boldsymbol{\Phi}}\mathbf{C}_{2}\nonumber \\
 & \quad+\mathbf{U}_{\boldsymbol{\Phi}}\mathbf{C}_{1}\nonumber \\
 & =(\mathbf{I}_{n}-\mathbf{U}_{\boldsymbol{\Phi}}\mathbf{U}_{\boldsymbol{\Phi}}^{\top})(p^{-1}\mathbf{Y}\mathbf{Y}^{\top}-\boldsymbol{\Phi}\boldsymbol{\Phi}^{\top})\mathbf{U}_{\mathbf{Y}}\boldsymbol{\Lambda}_{\mathbf{Y}}^{-1/2}\label{eq:add_zero}\\
 & \quad+\mathbf{U}_{\boldsymbol{\Phi}}\mathbf{C}_{2}\nonumber \\
 & \quad+\mathbf{U}_{\boldsymbol{\Phi}}\mathbf{C}_{1}\nonumber \\
 & =(p^{-1}\mathbf{Y}\mathbf{Y}^{\top}-\boldsymbol{\Phi}^{\top}\boldsymbol{\Phi})\mathbf{U}_{\boldsymbol{\Phi}}\mathbf{F}_{\star}\boldsymbol{\Lambda}_{\mathbf{Y}}^{-1/2}\nonumber \\
 & \quad-\mathbf{U}_{\boldsymbol{\Phi}}\mathbf{U}_{\boldsymbol{\Phi}}^{\top}(p^{-1}\mathbf{Y}\mathbf{Y}^{\top}-\boldsymbol{\Phi}\boldsymbol{\Phi}^{\top})\mathbf{U}_{\boldsymbol{\Phi}}\mathbf{F}_{\star}\boldsymbol{\Lambda}_{\mathbf{Y}}^{-1/2}\nonumber \\
 & \quad+(\mathbf{I}_n-\mathbf{U}_{\boldsymbol{\Phi}}\mathbf{U}_{\boldsymbol{\Phi}}^{\top})(p^{-1}\mathbf{Y}\mathbf{Y}^{\top}-\boldsymbol{\Phi}\boldsymbol{\Phi}^{\top})\mathbf{C}_{3}\boldsymbol{\Lambda}_{\mathbf{Y}}^{-1/2}\nonumber \\
 & \quad+\mathbf{U}_{\boldsymbol{\Phi}}\mathbf{C}_{2}\nonumber \\
 & \quad+\mathbf{U}_{\boldsymbol{\Phi}}\mathbf{C}_{1}\nonumber \\
 & =(p^{-1}\mathbf{Y}\mathbf{Y}^{\top}-\boldsymbol{\Phi}\boldsymbol{\Phi}^{\top})\mathbf{U}_{\boldsymbol{\Phi}}\boldsymbol{\Lambda}_{\boldsymbol{\Phi}}^{-1/2}\mathbf{F}_{\star}\nonumber \\
 & \quad+(p^{-1}\mathbf{Y}\mathbf{Y}^{\top}-\boldsymbol{\Phi}\boldsymbol{\Phi}^{\top})\mathbf{U}_{\boldsymbol{\Phi}}(\mathbf{F}_{\star}\boldsymbol{\Lambda}_{\mathbf{Y}}^{-1/2}-\boldsymbol{\Lambda}_{\boldsymbol{\Phi}}^{-1/2}\mathbf{F}_{\star})\nonumber \\
 & \quad-\mathbf{U}_{\boldsymbol{\Phi}}\mathbf{U}_{\boldsymbol{\Phi}}^{\top}(p^{-1}\mathbf{Y}\mathbf{Y}^{\top}-\boldsymbol{\Phi}\boldsymbol{\Phi}^{\top})\mathbf{U}_{\boldsymbol{\Phi}}\mathbf{F}_{\star}\boldsymbol{\Lambda}_{\mathbf{Y}}^{-1/2}\nonumber \\
 & \quad+(\mathbf{I}_n-\mathbf{U}_{\boldsymbol{\Phi}}\mathbf{U}_{\boldsymbol{\Phi}}^{\top})(p^{-1}\mathbf{Y}\mathbf{Y}^{\top}-\boldsymbol{\Phi}\boldsymbol{\Phi}^{\top})\mathbf{C}_{3}\boldsymbol{\Lambda}_{\mathbf{Y}}^{-1/2}\nonumber \\
 & \quad+\mathbf{U}_{\boldsymbol{\Phi}}\mathbf{C}_{2}\nonumber \\
 & \quad+\mathbf{U}_{\boldsymbol{\Phi}}\mathbf{C}_{1}\nonumber \\
 & =(p^{-1}\mathbf{Y}\mathbf{Y}^{\top}-\boldsymbol{\Phi}\boldsymbol{\Phi}^{\top})\mathbf{U}_{\boldsymbol{\Phi}}\boldsymbol{\Lambda}_{\boldsymbol{\Phi}}^{-1/2}\mathbf{F}_{\star}+\mathbf{D}_{5}+\mathbf{D}_{4}+\mathbf{D}_{3}+\mathbf{D}_{2}+\mathbf{D}_{1}
\end{align}
where (\ref{eq:add_zero}) holds because $\mathbf{U}_{\mathbf{Y}}\boldsymbol{\Lambda}_{\mathbf{Y}}^{1/2}=p^{-1}\mathbf{Y}^{\top}\mathbf{Y}\mathbf{U}_{\mathbf{Y}}\boldsymbol{\Lambda}_{\mathbf{Y}}^{-1/2}$
and $\mathbf{U}_{\boldsymbol{\Phi}}\mathbf{U}_{\boldsymbol{\Phi}}^{\top}\boldsymbol{\Phi}\boldsymbol{\Phi}^{\top}=\mathbf{\boldsymbol{\Phi}}\boldsymbol{\Phi}^{\top}$
.

The proof proceeds by bounding the Frobenius norm of each matrix $\mathbf{D}_{i}$,
$i=1,\ldots,5$.. Using lemma \ref{lem:order_swtich},
\begin{align}
\|\mathbf{D}_{1}\|_{F} & =\|\mathbf{C}_{1}\|_{F}\nonumber \\
 & \leq\frac{r^{1/2}}{2n^{1/2}(1-\epsilon_{2})^{1/2}(\lambda_{r}^{f})^{1/2}}\left[n\frac{(\epsilon_{1}+n^{-1}\sigma^{2})^{2}}{\lambda_{r}^{f}(1-2\epsilon_{2})}\left(1+2\frac{\lambda_{1}^{f}}{\lambda_{r}^{f}}\left(\frac{1+\epsilon_{2}}{1-2\epsilon_{2}}\right)\right)+n\epsilon_{1}+\sigma^{2}\right]\nonumber \\
 & =\frac{r^{1/2}n^{1/2}(\epsilon_{1}+n^{-1}\sigma^{2})}{2(1-\epsilon_{2})^{1/2}(\lambda_{r}^{f})^{1/2}}\left[\frac{(\epsilon_{1}+n^{-1}\sigma^{2})}{\lambda_{r}^{f}(1-2\epsilon_{2})}\left(1+2\frac{\lambda_{1}^{f}}{\lambda_{r}^{f}}\left(\frac{1+\epsilon_{2}}{1-2\epsilon_{2}}\right)\right)+1\right].\label{eq:D_1_Frob}
\end{align}
Using lemma \ref{lem:davis_kahan},
\begin{align}
\|\mathbf{D}_{2}\|_{F} & \leq r^{1/2}\|\mathbf{C}_{2}\|_{2}\nonumber \\
 & =r^{1/2}n^{1/2}[\lambda_{1}^{f}(1+\epsilon_{2})]^{1/2}\left[\frac{\epsilon_{1}+n^{-1}\sigma^{2}}{\lambda_{r}^{f}(1-2\epsilon_{2})}\right]^{2}.\label{eq:D_2_Frob}
\end{align}
Again using lemma \ref{lem:davis_kahan} and the fact that $\mathbf{U}_{\mathbf{Y}}-\mathbf{U}_{\boldsymbol{\Phi}}\mathbf{U}_{\boldsymbol{\Phi}}^{\top}\mathbf{U}_{\mathbf{Y}}=(\mathbf{U}_{\mathbf{Y}}\mathbf{U}_{\mathbf{Y}}^{\top}-\mathbf{U}_{\boldsymbol{\Phi}}\mathbf{U}_{\boldsymbol{\Phi}}^{\top})\mathbf{U}_{\mathbf{Y}}$,
\begin{align}
\|\mathbf{D}_{3}\|_{F} & \leq 2r^{1/2}\|p^{-1}\mathbf{Y}\mathbf{Y}^{\top}-\boldsymbol{\Phi}\boldsymbol{\Phi}^{\top}\|_{2}\|\mathbf{C}_{3}\|_{2}\|\boldsymbol{\Lambda}_{\mathbf{Y}}^{-1/2}\|_{2}\nonumber \\
 & \leq 2r^{1/2}\frac{(\epsilon_{1}n+\sigma^{2})}{n^{1/2}\left[\lambda_{r}^{f}(1-\epsilon_{2})\right]^{1/2}}\left(\|\mathbf{U}_{\mathbf{Y}}\mathbf{U}_{\mathbf{Y}}^{\top}-\mathbf{U}_{\boldsymbol{\Phi}}\mathbf{U}_{\boldsymbol{\Phi}}^{\top}\|_{2}+\|\mathbf{U}_{\boldsymbol{\Phi}}^{\top}\mathbf{U}_{\mathbf{Y}}-\mathbf{F}_{\star}\|_{2}\right)\nonumber \\
 & \leq2r^{1/2}n^{1/2}\frac{(\epsilon_{1}+n^{-1}\sigma^{2})^{2}}{\left[\lambda_{r}^{f}(1-\epsilon_{2})\right]^{3/2}}\left(1+\frac{\epsilon_{1}+n^{-1}\sigma^{2}}{\lambda_{r}^{f}(1-\epsilon_{2})}\right)\label{eq:D_3_Frob}
\end{align}
Directly:
\begin{align}
\|\mathbf{D}_{4}\|_{F} & \leq r^{1/2}\|\mathbf{D}_{4}\|_{2}\nonumber \\
 & \leq r^{1/2}\|p^{-1}\mathbf{Y}\mathbf{Y}^{\top}-\boldsymbol{\Phi}\boldsymbol{\Phi}^{\top}\|_{2}\|\boldsymbol{\Lambda}_{\mathbf{Y}}^{-1/2}\|_{2}\nonumber \\
 & \leq r^{1/2}\frac{(\epsilon_{1}n+\sigma^{2})}{n^{1/2}\left[\lambda_{r}^{f}(1-\epsilon_{2})\right]^{1/2}}\nonumber \\
 & =r^{1/2}n^{1/2}\frac{(\epsilon_{1}+n^{-1}\sigma^{2})}{\left[\lambda_{r}^{f}(1-\epsilon_{2})\right]^{1/2}}\label{eq:D_4_Frob}
\end{align}
Using lemma \ref{lem:order_swtich},

\begin{align}
\|\mathbf{D}_{5}\|_{F} & =\|(p^{-1}\mathbf{Y}\mathbf{Y}^{\top}-\boldsymbol{\Phi}\boldsymbol{\Phi}^{\top})\mathbf{U}_{\boldsymbol{\Phi}}(\mathbf{F}_{\star}\boldsymbol{\Lambda}_{\mathbf{Y}}^{-1/2}-\boldsymbol{\Lambda}_{\boldsymbol{\Phi}}^{-1/2}\mathbf{F}_{\star})\|_{F}\nonumber \\
 & \leq r^{1/2}\|p^{-1}\mathbf{Y}\mathbf{Y}^{\top}-\boldsymbol{\Phi}\boldsymbol{\Phi}^{\top}\|_{2}\|\mathbf{F}_{\star}\boldsymbol{\Lambda}_{\mathbf{Y}}^{-1/2}-\boldsymbol{\Lambda}_{\boldsymbol{\Phi}}^{-1/2}\mathbf{F}_{\star}\|_{F}\nonumber \\
 & \leq r^{1/2}(\epsilon_{1}n+\sigma^{2})\frac{\|\mathbf{F}_{\star}\boldsymbol{\Lambda}_{\mathbf{Y}}-\boldsymbol{\Lambda}_{\boldsymbol{\Phi}}\mathbf{F}_{\star}\|_{F}}{2n^{3/2}(\lambda_{r}^{f})^{3/2}(1-\epsilon_{2})^{3/2}}\nonumber \\
 & \leq\frac{rn^{2}(\epsilon_{1}+n^{-1}\sigma^{2})}{2n^{3/2}(\lambda_{r}^{f})^{3/2}(1-\epsilon_{2})^{3/2}}\left[\frac{(\epsilon_{1}+n^{-1}\sigma^{2})^{2}}{\lambda_{r}^{f}(1-2\epsilon_{2})}\left(1+2\frac{\lambda_{1}^{f}}{\lambda_{r}^{f}}\left(\frac{1+\epsilon_{2}}{1-2\epsilon_{2}}\right)\right)+\epsilon_{1}+\frac{\sigma^{2}}{n}\right]\nonumber \\
 & =\frac{rn^{1/2}(\epsilon_{1}+n^{-1}\sigma^{2})^{2}}{2(\lambda_{r}^{f})^{3/2}(1-\epsilon_{2})^{3/2}}\left[\frac{(\epsilon_{1}+n^{-1}\sigma^{2})}{\lambda_{r}^{f}(1-2\epsilon_{2})}\left(1+2\frac{\lambda_{1}^{f}}{\lambda_{r}^{f}}\left(\frac{1+\epsilon_{2}}{1-2\epsilon_{2}}\right)\right)+1\right]\label{eq:D_5_Frob}
\end{align}
Having obtained the above bounds on $\|\mathbf{D}_{i}\|_{F}$, for
$i=1,\ldots,5$, we turn to the first term on the r.h.s. of (\ref{eq:U_Lam_decomp}).
Writing $[\cdot]_{i}$ to indicate the $i$th row of a matrix,
\begin{align}
&\max_{i=1,\ldots,n}\|[(p^{-1}\mathbf{Y}\mathbf{Y}^{\top}-\boldsymbol{\Phi}\boldsymbol{\Phi}^{\top})\mathbf{U}_{\boldsymbol{\Phi}}\boldsymbol{\Lambda}_{\boldsymbol{\Phi}}^{-1/2}\mathbf{F}_{\star}]_{i}\|_{2}\\ & =\max_{i=1,\ldots,n}\|[(p^{-1}\mathbf{Y}\mathbf{Y}^{\top}-\boldsymbol{\Phi}\boldsymbol{\Phi}^{\top})\mathbf{U}_{\boldsymbol{\Phi}}\boldsymbol{\Lambda}_{\boldsymbol{\Phi}}^{-1/2}]_{i}\|_{2}\nonumber \\
 & \leq\frac{1}{n^{1/2}(\lambda_{r}^{f})^{1/2}(1-\epsilon_{2})^{1/2}}\max_{i=1,\ldots,n}\|[(p^{-1}\mathbf{Y}\mathbf{Y}^{\top}-\boldsymbol{\Phi}\boldsymbol{\Phi}^{\top})\mathbf{U}_{\boldsymbol{\Phi}}]_{i}\|_{2}\nonumber \\
 & \leq\frac{r^{1/2}}{n^{1/2}(\lambda_{r}^{f})^{1/2}(1-\epsilon_{2})^{1/2}}\max_{j=1,\ldots,r}\|(p^{-1}\mathbf{Y}\mathbf{Y}^{\top}-\boldsymbol{\Phi}\boldsymbol{\Phi}^{\top})U_{j}\|_{\infty}\nonumber \\
 & \leq\frac{r^{1/2}\epsilon_{3}}{(\lambda_{r}^{f})^{1/2}(1-\epsilon_{2})^{1/2}}.\label{eq:Y-Phi_2_to_inf}
\end{align}
 where $U_{j}$ is the $j$th column of $\mathbf{U}_{\boldsymbol{\Phi}}$.

Recall that at the start of the proof we fixed arbitrary values $\epsilon_{1}>0$,
$\epsilon_{2}\in(0,1/2)$ and $\epsilon_{3}>0$. We now need to work
with a specific numerical value for $\epsilon_{2}$, so let us take
it to be $1/4$. Elementary manipulations of the bounds (\ref{eq:D_1_Frob})-(\ref{eq:D_5_Frob})
then show that there exists $\tilde{c}_{0}$ depending only on the constants $c_\lambda^{\mathrm{max}},c_\lambda^{\mathrm{min}}$ in lemma \ref{lem:c_lambda_constants}
such that
\begin{align*}
\|\mathbf{D}_{1}\|_{F} & \leq\tilde{c}_{0}r^{1/2}n^{1/2}\left(\epsilon_{1}+\frac{\sigma^{2}}{n}\right)\left(\epsilon_{1}+\frac{\sigma^{2}}{n}+1\right)\\
\|\mathbf{D}_{2}\|_{F} & \leq\tilde{c}_{0}r^{1/2}n^{1/2}\left(\epsilon_{1}+\frac{\sigma^{2}}{n}\right)^{2}\\
\|\mathbf{D}_{3}\|_{F} & \leq\tilde{c}_{0}r^{1/2}n^{1/2}\left(\epsilon_{1}+\frac{\sigma^{2}}{n}\right)^{2}\left(\epsilon_{1}+\frac{\sigma^{2}}{n}+1\right)\\
\|\mathbf{D}_{4}\|_{F} & \leq\tilde{c}_{0}r^{1/2}n^{1/2}\left(\epsilon_{1}+\frac{\sigma^{2}}{n}\right)\\
\|\mathbf{D}_{5}\|_{F} & \leq\tilde{c}_{0}rn^{1/2}\left(\epsilon_{1}+\frac{\sigma^{2}}{n}\right)^{2}\left(\epsilon_{1}+\frac{\sigma^{2}}{n}+1\right).
\end{align*}
 Now assuming 
\begin{equation}
n\geq2\sigma^{2}r^{1/2}\label{eq:n_lb_proof}
\end{equation}
 i.e, $n^{-1}\sigma^{2}r^{1/2}\leq1/2$, and assuming
\begin{equation}
\epsilon_{1}r^{1/2}\leq1/2\label{eq:epsilon_1_ub}
\end{equation}
 we have
\[
\left(\epsilon_{1}+\frac{\sigma^{2}}{n}\right)r^{1/2}\leq1.
\]
Applying this inequality in the above bound on $\|\mathbf{D}_{5}\|_{F}$
and allowing $\tilde{c}_{0}$ to increase where necessary we obtain:
\[
\max_{i=1,\ldots,5}\|\mathbf{D}_{i}\|_{F}\leq\tilde{c}_{0}r^{1/2}n^{1/2}\left(\epsilon_{1}+\frac{\sigma^{2}}{n}\right)
\]
Combining this estimate with (\ref{eq:Y-Phi_2_to_inf}) and again
allowing $\tilde{c}_{0}$ to increase as needed, 
\begin{align}
\max_{i=1,\ldots,n}\|[\mathbf{U}_{\mathbf{Y}}\boldsymbol{\Lambda}_{\mathbf{Y}}^{1/2}-\mathbf{U}_{\boldsymbol{\Phi}}\boldsymbol{\Lambda}_{\boldsymbol{\Phi}}^{1/2}\mathbf{F}_{\star}]_{i}\|_{2} & \leq\max_{i=1,\ldots,n}\|[(p^{-1}\mathbf{Y}\mathbf{Y}^{\top}-\boldsymbol{\Phi}\boldsymbol{\Phi}^{\top})\mathbf{U}_{\boldsymbol{\Phi}}\boldsymbol{\Lambda}_{\boldsymbol{\Phi}}^{-1/2}\mathbf{F}_{\star}]_{i}\|_{2}+\sum_{i=1}^{5}\|\mathbf{D}_{i}\|_{F}\nonumber \\
 & \leq r^{1/2}\tilde{c}_{0}n^{1/2}\left(\epsilon_{1}+\frac{\sigma^{2}}{n}\right)+r^{1/2}\tilde{c}_{0}\epsilon_{3}.\label{eq:UY-Uphi_est}
\end{align}
Now fix any $\epsilon\in(0,1]$ and let us strengthen (\ref{eq:n_lb_proof})
to 
\begin{equation}
n\geq\left(2\sigma^{2}r^{1/2}\right)\vee\left(\frac{9}{\epsilon^{2}}\tilde{c}_{0}^{2}r\sigma^{4}\right)\label{eq:n_lb_proof_2}
\end{equation}
so that $r^{1/2}\tilde{c}_{0}n^{-1/2}\sigma^{2}\leq\epsilon/3$. Then
setting $\epsilon_{1}\coloneqq\epsilon/(3n^{1/2}r^{1/2}\tilde{c}_{0})$
(which satisfies (\ref{eq:epsilon_1_ub}) since $\tilde{c}_{0}\geq1$),
$\epsilon_{3}\coloneqq\epsilon/(3r^{1/2}\tilde{c}_{0})$ and recalling
that we have already chosen $\epsilon_{2}\coloneqq1/4$ we have as
a consequence of (\ref{eq:UY-Uphi_est}),
\begin{multline*}    
\mathbb{P}\left(\max_{i=1,\ldots,n}\|[\mathbf{U}_{\mathbf{Y}}\boldsymbol{\Lambda}_{\mathbf{Y}}^{1/2}-\mathbf{U}_{\boldsymbol{\Phi}}\boldsymbol{\Lambda}_{\boldsymbol{\Phi}}^{1/2}\mathbf{F}_{\star}]_{i}\|_{2}\leq\epsilon\right)\\
\geq1-\mathbb{P}(A_{1}(\epsilon/[3n^{1/2}r^{1/2}\tilde{c}_{0}])^{c})-\mathbb{P}(A_{2}(1/4)^{c})-\mathbb{P}(A_{3}(\epsilon/[3r^{1/2}\tilde{c}_{0}])^{c}).
\end{multline*}
Now fix any $\delta\in(0,1)$. By lemma \ref{lem:Y-phi-I_conc}, proposition
\ref{prop:eigenvalue_perturbation} and lemma \ref{lem:two_to_infty},
there exists constants $\tilde{c}_{1}(q)$, $\tilde{c}_{2}$ and $\tilde{c}_{3}(q)$
(depending only on the constants $c_{\lambda}^{\mathrm{max}}$, $c_{\lambda}^{\mathrm{min}}$ from lemma \ref{lem:c_lambda_constants} and the constants $c_X(2q)$, $c_E(2q)$ from lemma \ref{lem:Y-phi-I_conc}) such that
\begin{align*}
 & \frac{p}{n}\geq\tilde{c}_{1}(q)^{1/q}\frac{r}{\delta^{1/q}\epsilon^{2}}\quad\Rightarrow\quad\mathbb{P}(A_{1}(\epsilon/[3n^{1/2}r^{1/2}\tilde{c}_{0}])^{c})\leq\frac{\delta}{3}.\\
 & n\geq\tilde{c}_{2}\left[\sigma^{2}\vee\log\left(\frac{r}{\delta}\right)\right]\text{ and }p\geq\frac{\tilde{c}_{2}}{\delta^{1/q}}\quad\Rightarrow\quad\mathbb{P}(A_{2}(1/4)^{c})\leq\frac{\delta}{3}.\\
 & \frac{p}{n^{1/q}}\geq\tilde{c}_{3}(q)^{1/q}\frac{r^{1+1/q}}{\delta^{1/q}\epsilon^{2}}\quad\Rightarrow\quad\mathbb{P}(A_{3}(\epsilon/[3r^{1/2}\tilde{c}_{0}])^{c})\leq\frac{\delta}{3}.
\end{align*}
Combining these conditions with (\ref{eq:n_lb_proof_2}) and appropriately
defining $c_{1}$ and $c_{2}$ gives the conditions in the statement
of the theorem. Recalling (\ref{eq:consistency_first_step}), the
proof is complete.
\end{proof}

\begin{proof}[Proof of theorem \ref{thm:consistency_summary}]
If \ref{ass:moments_q_equals_1} holds, then \ref{ass:moments} holds with $q=1$. We may then apply theorem \ref{thm:uniform_consistency} in the case $q=1$, and in order for the lower bound conditions on $n$ and $p/n$ in the statement of theorem \ref{thm:uniform_consistency} to be satisfied for some given $\delta$ and $\epsilon$, it is sufficient that:
\begin{equation}\label{eq:n_and_p/n_conds_proof}
n\geq \frac{-\check{c}_1\log \delta}{\epsilon^2} \qquad\text{and}\qquad \frac{p}{n}\geq \frac{\check{c}_2}{\epsilon^2\delta},
\end{equation}
for suitable constants $\check{c}_1>0$ and $\check{c}_2>0$ depending on $\sigma,c_1,c_2(q)$ and $\sup_{p\geq 1} r$, noting the latter supremum is finite under \ref{ass:finite rank}.

To complete the proof we need to show that for any $\delta \in(0,1)$ there exists $\epsilon_0>0$ and $M>0$ such that if $(1/\sqrt{n}+\sqrt{n/p})^{-1}>M$, then: 
\begin{equation}\label{eq:big_o_P_explicit}
\mathbb{P}\left[\max_{i=1,\ldots,n}\|p^{-1/2}\mathbf{Q}\zeta_i - \phi(Z_i)\|_2 > \epsilon_0\left(\frac{1}{\sqrt{n}}+\sqrt{\frac{n}{p}}\right)\right] < \delta.
\end{equation}
So to proceed, fix any $\delta\in(0,1)$, define $\epsilon_0\coloneqq \sqrt{-\check{c}_1\log\delta} \vee \sqrt{\check{c}_2/\delta}$, $M\coloneqq  \epsilon_0$ and $\epsilon\coloneqq \epsilon_0(1/\sqrt{n}+\sqrt{n/p})$. 

Assume that $ (1/\sqrt{n}+\sqrt{n/p} )^{-1} \geq M$ and notice that in this situation $\epsilon \in(0,1]$, which is a requirement of theorem \ref{thm:uniform_consistency}. It follows from the definition of $\epsilon_0$ that:
\begin{align*}
\epsilon_0^2 & \geq -\check{c}_1\log\delta \geq \frac{-\check{c}_1\log\delta}{\left(1+n/\sqrt{p}\right)^2}=\frac{-\check{c}_1\log\delta}{n\left(1/\sqrt{n}+\sqrt{n/p}\right)^2},
\end{align*}
and rearranging then using the above definition of $\epsilon$ gives:
$$
n\geq \frac{-\check{c}_1\log\delta}{\epsilon^2},
$$
i.e., the first inequality in \eqref{eq:n_and_p/n_conds_proof} holds. Similarly
$$
\epsilon_0^2 \geq  \frac{\check{c}_2}{\delta}
\geq \frac{\check{c}_2}{\left(\sqrt{p}/n+1\right)^2\delta} =\frac{\check{c}_2}{\frac{p}{n}\left(1/\sqrt{n}+\sqrt{n/p}\right)^2\delta}  
$$
hence
$$
\frac{p}{n}\geq \frac{\check{c}_2}{\epsilon^2\delta},
$$
i.e., the second inequality in \eqref{eq:n_and_p/n_conds_proof} holds. Thus by theorem \ref{thm:uniform_consistency},
$$
\mathbb{P}\left[\max_{i=1,\ldots,n}\|p^{-1/2}\mathbf{Q}\zeta_i-\phi(Z_i)\|_2 > \left(\frac{1}{\sqrt{n}}+\sqrt{\frac{n}{p}}\right)\epsilon_0\right] < \delta.
$$
which is \eqref{eq:big_o_P_explicit}.

\end{proof}

\subsection{Matrix estimates}\label{subsec:matrix_est}
\begin{lem}
\label{lem:davis_kahan} Assume \ref{ass:cont_covar} and \ref{ass:finite rank}.
Then for any $\epsilon_{1}>0$ and $\epsilon_{2}\in(0,1/2)$, on the
event
\[
A_{1}(\epsilon_{1})\cap A_{2}(\epsilon_{2})
\]
we have
\[
\|\mathbf{U}_{\mathbf{Y}}\mathbf{U}_{\mathbf{Y}}^{\top}-\mathbf{U}_{\boldsymbol{\Phi}}^{\top}\mathbf{U}_{\boldsymbol{\Phi}}\|_{2}\leq\frac{\epsilon_{1}+n^{-1}\sigma^{2}}{\lambda_{r}^{f}(1-2\epsilon_{2})}
\]
and 
\[
\|\mathbf{U}_{\boldsymbol{\Phi}}^{\top}\mathbf{U}_{\mathbf{Y}}-\mathbf{F}_{\star}\|_{2}\leq\left[\frac{\epsilon_{1}+n^{-1}\sigma^{2}}{\lambda_{r}^{f}(1-2\epsilon_{2})}\right]^{2}.
\]
\end{lem}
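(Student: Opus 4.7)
The plan is to prove the first bound via a direct application of the Davis--Kahan $\sin\Theta$ theorem, and then to deduce the second bound from the first via the relationship between principal angles and the singular values of $\mathbf{U}_{\boldsymbol{\Phi}}^{\top}\mathbf{U}_{\mathbf{Y}}$.

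For the first bound, I would apply the operator-norm Davis--Kahan theorem with $A \coloneqq \boldsymbol{\Phi}\boldsymbol{\Phi}^{\top}$ and $B \coloneqq p^{-1}\mathbf{Y}\mathbf{Y}^{\top}$, whose top-$r$ eigenspaces are respectively spanned by the columns of $\mathbf{U}_{\boldsymbol{\Phi}}$ and $\mathbf{U}_{\mathbf{Y}}$. The required perturbation bound comes by inserting $\sigma^{2}\mathbf{I}_{n}$ and invoking $A_{1}(\epsilon_{1})$:
\[
\|A - B\|_{2} \leq \|p^{-1}\mathbf{Y}\mathbf{Y}^{\top} - \boldsymbol{\Phi}\boldsymbol{\Phi}^{\top} - \sigma^{2}\mathbf{I}_{n}\|_{2} + \sigma^{2} \leq \epsilon_{1} n + \sigma^{2}.
\]
The spectral gap comes from $A_{2}(\epsilon_{2})$: the definition of $B_{\boldsymbol{\Phi},r}(\epsilon_{2})$ yields $\lambda_{r}(A) \geq n\lambda_{r}^{f}(1-\epsilon_{2})$, while $B_{\mathbf{Y},r+1}(\epsilon_{2})$ yields $\lambda_{r+1}(B) \leq n\epsilon_{2}\lambda_{r}^{f}$, so the separation between the retained eigenvalues of $A$ and the discarded eigenvalues of $B$ is at least $n\lambda_{r}^{f}(1-2\epsilon_{2})$. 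Combining Davis--Kahan with the standard identity $\|\mathbf{U}_{\mathbf{Y}}\mathbf{U}_{\mathbf{Y}}^{\top} - \mathbf{U}_{\boldsymbol{\Phi}}\mathbf{U}_{\boldsymbol{\Phi}}^{\top}\|_{2} = \|\sin\Theta(\mathbf{U}_{\boldsymbol{\Phi}}, \mathbf{U}_{\mathbf{Y}})\|_{2}$ and dividing the perturbation by the gap then gives the first bound (with the obvious understanding that $\mathbf{U}_{\boldsymbol{\Phi}}^{\top}\mathbf{U}_{\boldsymbol{\Phi}}$ in the lemma statement should be the projection $\mathbf{U}_{\boldsymbol{\Phi}}\mathbf{U}_{\boldsymbol{\Phi}}^{\top}$).

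For the second bound, write the full singular value decomposition $\mathbf{U}_{\boldsymbol{\Phi}}^{\top}\mathbf{U}_{\mathbf{Y}} = \mathbf{F}_{1}\boldsymbol{\Sigma}\mathbf{F}_{2}^{\top}$, so that $\mathbf{F}_{\star} = \mathbf{F}_{1}\mathbf{F}_{2}^{\top}$ and
\[
\mathbf{U}_{\boldsymbol{\Phi}}^{\top}\mathbf{U}_{\mathbf{Y}} - \mathbf{F}_{\star} = \mathbf{F}_{1}(\boldsymbol{\Sigma} - \mathbf{I}_{r})\mathbf{F}_{2}^{\top}.
\]
Since the columns of $\mathbf{U}_{\boldsymbol{\Phi}}$ and $\mathbf{U}_{\mathbf{Y}}$ are orthonormal, the singular values of $\mathbf{U}_{\boldsymbol{\Phi}}^{\top}\mathbf{U}_{\mathbf{Y}}$ all lie in $[0,1]$ and equal the cosines of the principal angles between their column spans; hence the spectral norm above equals $1 - \sigma_{r}(\mathbf{U}_{\boldsymbol{\Phi}}^{\top}\mathbf{U}_{\mathbf{Y}}) = 1 - \cos\theta_{r}$, where $\theta_{r}$ is the largest principal angle. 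Applying $1 - \sqrt{1-x} \leq x$ with $x = \sin^{2}\theta_{r}$, together with $\sin\theta_{r} = \|\sin\Theta\|_{2} = \|\mathbf{U}_{\mathbf{Y}}\mathbf{U}_{\mathbf{Y}}^{\top} - \mathbf{U}_{\boldsymbol{\Phi}}\mathbf{U}_{\boldsymbol{\Phi}}^{\top}\|_{2}$ and inserting the first bound, delivers the second bound as the square of the first.

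The main obstacle is technical rather than conceptual: I need to use the correct operator-norm formulation of Davis--Kahan (there are several variants with different gap conventions) and, crucially, take the gap as $\lambda_{r}(A) - \lambda_{r+1}(B)$ (the separation between the kept eigenvalues of $A$ and the discarded eigenvalues of $B$) rather than an intrinsic gap of $A$ alone; only this choice lets the $\sigma^{2}$ contribution be absorbed into the numerator via $A_{1}$ and produces the factor $(1-2\epsilon_{2})$ --- coming from the two-sided use of $A_{2}$ --- in the denominator.
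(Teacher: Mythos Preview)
Your proposal is correct and follows essentially the same route as the paper: apply the Davis--Kahan $\sin\Theta$ theorem with gap $\lambda_{r}(\boldsymbol{\Phi}\boldsymbol{\Phi}^{\top})-\lambda_{r+1}(p^{-1}\mathbf{Y}\mathbf{Y}^{\top})\geq n\lambda_{r}^{f}(1-2\epsilon_{2})$ and perturbation $\|p^{-1}\mathbf{Y}\mathbf{Y}^{\top}-\boldsymbol{\Phi}\boldsymbol{\Phi}^{\top}\|_{2}\leq\epsilon_{1}n+\sigma^{2}$, then deduce the second bound from the SVD of $\mathbf{U}_{\boldsymbol{\Phi}}^{\top}\mathbf{U}_{\mathbf{Y}}$ via $1-\sigma_{i}\leq 1-\sigma_{i}^{2}=\sin^{2}\theta_{i}$ (equivalently your $1-\sqrt{1-x}\leq x$). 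Your observation that $\mathbf{U}_{\boldsymbol{\Phi}}^{\top}\mathbf{U}_{\boldsymbol{\Phi}}$ in the statement is a typo for the projection $\mathbf{U}_{\boldsymbol{\Phi}}\mathbf{U}_{\boldsymbol{\Phi}}^{\top}$ is also consistent with the paper's own proof.
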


\begin{proof}
In outline, the proof follows \citet[Proof of Prop. 16]{lyzinski2016community},
although we work with the spectral rather than Frobenius norm. On
the event in the statement we have: 
\[
|\lambda_{r}(\boldsymbol{\Phi}\boldsymbol{\Phi}^{\top})-\lambda_{r+1}(p^{-1}\mathbf{Y}\mathbf{Y}^{\top})|\geq n\lambda_{r}^{f}(1-2\epsilon_{2})>0
\]
 and with $\sigma_{i}$ denoting the $i$th singular value of $U_{\boldsymbol{\Phi}}^{\top}U_{\mathbf{Y}}$
and $\sigma_{i}=\cos(\theta_{i})$, the Davis-Kahan $\sin(\theta)$
theorem gives:
\begin{align}
\|\mathbf{U}_{\mathbf{Y}}\mathbf{U}_{\mathbf{Y}}^{\top}-\mathbf{U}_{\boldsymbol{\Phi}}^{\top}\mathbf{U}_{\boldsymbol{\Phi}}\|_{2}=\max_{i}|\sin(\theta_{i})| & \leq\frac{\|p^{-1}\mathbf{Y}\mathbf{Y}^{\top}-\boldsymbol{\Phi}\boldsymbol{\Phi}^{\top}\|_{2}}{|\lambda_{r}(\boldsymbol{\Phi}\boldsymbol{\Phi}^{\top})-\lambda_{r+1}(p^{-1}\mathbf{Y}\mathbf{Y}^{\top})|}\nonumber \\
 & \leq\frac{\epsilon_{1}+n^{-1}\sigma^{2}}{\lambda_{r}^{f}(1-2\epsilon_{2})}.\label{eq:DAvis_Kahan_app}
\end{align}
Therefore
\begin{align*}
\|\mathbf{U}_{\boldsymbol{\Phi}}^{\top}\mathbf{U}_{\mathbf{Y}}-\mathbf{F}_{\star}\|_{2} & =\|\mathbf{F}_{1}\boldsymbol{\Sigma}\mathbf{F}_{2}^{\top}-\mathbf{F}_{1}\mathbf{F}_{2}^{\top}\|_{2}\\
 & =\|\mathbf{F}_{1}(\boldsymbol{\Sigma}-\mathbf{I}_{r})\mathbf{F}_{2}^{\top}\|_{2}\\
 & =\|\boldsymbol{\Sigma}-\mathbf{I}_{r}\|_{2}\\
 & =\max_{i=1,\ldots,r}|1-\sigma_{i}|\\
 & \leq\max_{i=1,\ldots,r}|1-\sigma_{i}^{2}|=\max_{i=1,\ldots,r}|\sin(\theta_{i})|^{2}\\
 & \le\left[\frac{\epsilon_{1}+n^{-1}\sigma^{2}}{\lambda_{r}^{f}(1-2\epsilon_{2})}\right]^{2}
\end{align*}
where for the first inequality uses $\|\mathbf{U}_{\boldsymbol{\Phi}}^{\top}\mathbf{U}_{\mathbf{Y}}\|_{2}\leq1$
and the second inequality is from (\ref{eq:DAvis_Kahan_app}).
\end{proof}
\begin{lem}
\label{lem:order_swtich}Assume \ref{ass:cont_covar} and  \ref{ass:finite rank}.
For any $\epsilon_{1}>0$, $\epsilon_{2}\in(0,1/2)$, on the event
\[
A_{1}(\epsilon_{1})\cap A_{2}(\epsilon_{2})
\]
we have
\begin{align*}
\|\mathbf{F}_{\star}\boldsymbol{\Lambda}_{\mathbf{Y}}-\boldsymbol{\Lambda}_{\boldsymbol{\Phi}}\mathbf{F}_{\star}\|_{F} & \leq r^{1/2}\left[n\frac{(\epsilon_{1}+n^{-1}\sigma^{2})^{2}}{\lambda_{r}^{f}(1-2\epsilon_{2})}\left(1+2\frac{\lambda_{1}^{f}}{\lambda_{r}^{f}}\left(\frac{1+\epsilon_{2}}{1-2\epsilon_{2}}\right)\right)+n\epsilon_{1}+\sigma^{2}\right],\\
\|\mathbf{F}_{\star}\boldsymbol{\Lambda}_{\mathbf{Y}}^{1/2}-\boldsymbol{\Lambda}_{\boldsymbol{\Phi}}^{1/2}\mathbf{F}_{\star}\|_{F} & \leq\frac{\|\mathbf{F}_{\star}\boldsymbol{\Lambda}_{\mathbf{Y}}-\boldsymbol{\Lambda}_{\boldsymbol{\Phi}}\mathbf{F}_{\star}\|_{F}}{2n^{1/2}(1-\epsilon_{2})^{1/2}(\lambda_{r}^{f})^{1/2}},\\
\|\mathbf{F}_{\star}\boldsymbol{\Lambda}_{\mathbf{Y}}^{-1/2}-\boldsymbol{\Lambda}_{\boldsymbol{\Phi}}^{-1/2}\mathbf{F}_{\star}\|_{F} & \leq\frac{\|\mathbf{F}_{\star}\boldsymbol{\Lambda}_{\mathbf{Y}}^{1/2}-\boldsymbol{\Lambda}_{\boldsymbol{\Phi}}^{1/2}\mathbf{F}_{\star}\|_{F}}{n(1-\epsilon_{2})\lambda_{r}^{f}}.
\end{align*}
\end{lem}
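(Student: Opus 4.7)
My plan is to prove the three bounds in order: bound (1) requires a commutator-style decomposition that invokes lemma~\ref{lem:davis_kahan} and the eigenvalue bounds in $A_{2}(\epsilon_{2})$, while (2) and (3) follow from a clean entrywise identity exploiting the fact that $\boldsymbol{\Lambda}_{\mathbf{Y}}$ and $\boldsymbol{\Lambda}_{\boldsymbol{\Phi}}$ are diagonal. Throughout I work on the event $A_{1}(\epsilon_{1})\cap A_{2}(\epsilon_{2})$ so all quantities are well-defined and the eigenvalues are controlled.

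For bound (1), the starting point is the pair of eigenvalue identities $\mathbf{U}_{\boldsymbol{\Phi}}^{\top}(p^{-1}\mathbf{Y}\mathbf{Y}^{\top})\mathbf{U}_{\mathbf{Y}}=\mathbf{U}_{\boldsymbol{\Phi}}^{\top}\mathbf{U}_{\mathbf{Y}}\boldsymbol{\Lambda}_{\mathbf{Y}}$ and $\mathbf{U}_{\boldsymbol{\Phi}}^{\top}\boldsymbol{\Phi}\boldsymbol{\Phi}^{\top}\mathbf{U}_{\mathbf{Y}}=\boldsymbol{\Lambda}_{\boldsymbol{\Phi}}\mathbf{U}_{\boldsymbol{\Phi}}^{\top}\mathbf{U}_{\mathbf{Y}}$. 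Subtracting and adding back $\sigma^{2}\mathbf{I}_{n}$ to isolate the quantity bounded on $A_{1}(\epsilon_{1})$ gives
\[
\mathbf{U}_{\boldsymbol{\Phi}}^{\top}\mathbf{U}_{\mathbf{Y}}\boldsymbol{\Lambda}_{\mathbf{Y}}-\boldsymbol{\Lambda}_{\boldsymbol{\Phi}}\mathbf{U}_{\boldsymbol{\Phi}}^{\top}\mathbf{U}_{\mathbf{Y}}=\mathbf{U}_{\boldsymbol{\Phi}}^{\top}(p^{-1}\mathbf{Y}\mathbf{Y}^{\top}-\boldsymbol{\Phi}\boldsymbol{\Phi}^{\top}-\sigma^{2}\mathbf{I}_{n})\mathbf{U}_{\mathbf{Y}}+\sigma^{2}\mathbf{U}_{\boldsymbol{\Phi}}^{\top}\mathbf{U}_{\mathbf{Y}},
\]
whose spectral norm is at most $n\epsilon_{1}+\sigma^{2}$, using $\|\mathbf{U}_{\boldsymbol{\Phi}}^{\top}\mathbf{U}_{\mathbf{Y}}\|_{2}\leq 1$. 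Next, replace $\mathbf{U}_{\boldsymbol{\Phi}}^{\top}\mathbf{U}_{\mathbf{Y}}$ by $\mathbf{F}_{\star}$ via the telescoping
\[
\mathbf{F}_{\star}\boldsymbol{\Lambda}_{\mathbf{Y}}-\boldsymbol{\Lambda}_{\boldsymbol{\Phi}}\mathbf{F}_{\star}=(\mathbf{F}_{\star}-\mathbf{U}_{\boldsymbol{\Phi}}^{\top}\mathbf{U}_{\mathbf{Y}})\boldsymbol{\Lambda}_{\mathbf{Y}}+(\mathbf{U}_{\boldsymbol{\Phi}}^{\top}\mathbf{U}_{\mathbf{Y}}\boldsymbol{\Lambda}_{\mathbf{Y}}-\boldsymbol{\Lambda}_{\boldsymbol{\Phi}}\mathbf{U}_{\boldsymbol{\Phi}}^{\top}\mathbf{U}_{\mathbf{Y}})+\boldsymbol{\Lambda}_{\boldsymbol{\Phi}}(\mathbf{U}_{\boldsymbol{\Phi}}^{\top}\mathbf{U}_{\mathbf{Y}}-\mathbf{F}_{\star}).
\]
On $A_{2}(\epsilon_{2})$ the spectral norms of $\boldsymbol{\Lambda}_{\mathbf{Y}},\boldsymbol{\Lambda}_{\boldsymbol{\Phi}}$ are at most $n\lambda_{1}^{f}(1+\epsilon_{2})$; combining this with the bound $\|\mathbf{U}_{\boldsymbol{\Phi}}^{\top}\mathbf{U}_{\mathbf{Y}}-\mathbf{F}_{\star}\|_{2}\leq[(\epsilon_{1}+n^{-1}\sigma^{2})/(\lambda_{r}^{f}(1-2\epsilon_{2}))]^{2}$ from lemma~\ref{lem:davis_kahan}, the triangle inequality, and $\|\cdot\|_{F}\leq r^{1/2}\|\cdot\|_{2}$ on $r\times r$ matrices yields the asserted estimate after routine rearrangement.

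For (2) and (3), the observation that $\boldsymbol{\Lambda}_{\mathbf{Y}}$ and $\boldsymbol{\Lambda}_{\boldsymbol{\Phi}}$ are diagonal makes the entrywise expressions very simple. Writing $a_{i}\coloneqq(\boldsymbol{\Lambda}_{\boldsymbol{\Phi}})_{ii}$, $b_{j}\coloneqq(\boldsymbol{\Lambda}_{\mathbf{Y}})_{jj}$, for any real power $s$ one has
\[
(\mathbf{F}_{\star}\boldsymbol{\Lambda}_{\mathbf{Y}}^{s}-\boldsymbol{\Lambda}_{\boldsymbol{\Phi}}^{s}\mathbf{F}_{\star})_{ij}=(\mathbf{F}_{\star})_{ij}(b_{j}^{s}-a_{i}^{s}).
\]
For $s=\tfrac{1}{2}$, write $b_{j}^{1/2}-a_{i}^{1/2}=(b_{j}-a_{i})/(b_{j}^{1/2}+a_{i}^{1/2})$; since both $a_{i},b_{j}\geq n\lambda_{r}^{f}(1-\epsilon_{2})$ on $A_{2}(\epsilon_{2})$, the denominator is at least $2[n\lambda_{r}^{f}(1-\epsilon_{2})]^{1/2}$, and summing squared entries gives bound (2). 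For $s=-\tfrac{1}{2}$, use $b_{j}^{-1/2}-a_{i}^{-1/2}=(a_{i}^{1/2}-b_{j}^{1/2})/(a_{i}^{1/2}b_{j}^{1/2})$ and the lower bound $a_{i}^{1/2}b_{j}^{1/2}\geq n\lambda_{r}^{f}(1-\epsilon_{2})$ to obtain bound (3) from bound (2).

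The main obstacle is the algebraic bookkeeping in bound (1): keeping the structure of the three summands and the squared Davis--Kahan factor aligned so that the final bound factorises as stated. Bounds (2) and (3), by contrast, are essentially one-line consequences of the diagonal structure combined with the uniform eigenvalue lower bound on $A_{2}(\epsilon_{2})$, so no further effort is required beyond the conjugate-factor identity.
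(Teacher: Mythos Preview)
Your proposal is correct and follows essentially the same route as the paper: the same three-term telescoping decomposition for bound (1) combined with lemma~\ref{lem:davis_kahan} and the eigenvalue control from $A_{2}(\epsilon_{2})$, and the same entrywise conjugate-factor identities for bounds (2) and (3).

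The one minor difference is in the treatment of the middle summand $\mathbf{U}_{\boldsymbol{\Phi}}^{\top}\mathbf{U}_{\mathbf{Y}}\boldsymbol{\Lambda}_{\mathbf{Y}}-\boldsymbol{\Lambda}_{\boldsymbol{\Phi}}\mathbf{U}_{\boldsymbol{\Phi}}^{\top}\mathbf{U}_{\mathbf{Y}}=\mathbf{U}_{\boldsymbol{\Phi}}^{\top}(p^{-1}\mathbf{Y}\mathbf{Y}^{\top}-\boldsymbol{\Phi}\boldsymbol{\Phi}^{\top})\mathbf{U}_{\mathbf{Y}}$. You bound this directly by $n\epsilon_{1}+\sigma^{2}$, whereas the paper first splits $\mathbf{U}_{\mathbf{Y}}=\mathbf{R}+\mathbf{U}_{\boldsymbol{\Phi}}\mathbf{U}_{\boldsymbol{\Phi}}^{\top}\mathbf{U}_{\mathbf{Y}}$ with $\mathbf{R}\coloneqq\mathbf{U}_{\mathbf{Y}}-\mathbf{U}_{\boldsymbol{\Phi}}\mathbf{U}_{\boldsymbol{\Phi}}^{\top}\mathbf{U}_{\mathbf{Y}}$ and bounds the two pieces separately, picking up an additional non-negative contribution $n(\epsilon_{1}+n^{-1}\sigma^{2})^{2}/[\lambda_{r}^{f}(1-2\epsilon_{2})]$. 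Your direct route therefore gives a slightly \emph{tighter} spectral-norm bound than the one stated; the asserted inequality follows a fortiori, so your ``routine rearrangement'' step is actually a trivial weakening rather than an algebraic identity.
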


\begin{proof}
Using a decomposition idea from \citep[proof of lemma 17]{lyzinski2016community},
with 
\[
\mathbf{R}\coloneqq\mathbf{U}_{\mathbf{Y}}-\mathbf{U}_{\boldsymbol{\Phi}}\mathbf{U}_{\boldsymbol{\Phi}}^{\top}\mathbf{U}_{\mathbf{Y}},
\]
we have
\begin{align*}
\mathbf{F}_{\star}\boldsymbol{\Lambda}_{\mathbf{Y}}-\boldsymbol{\Lambda}_{\boldsymbol{\Phi}}\mathbf{F}_{\star} & =(\mathbf{F}_{\star}-\mathbf{U}_{\boldsymbol{\Phi}}^{\top}\mathbf{U}_{\mathbf{Y}})\boldsymbol{\Lambda}_{\mathbf{Y}}+\mathbf{U}_{\boldsymbol{\Phi}}^{\top}(p^{-1}\mathbf{Y}\mathbf{Y}^{\top}-\boldsymbol{\Phi}\boldsymbol{\Phi}^{\top})\mathbf{R}\\
 & +\mathbf{U}_{\boldsymbol{\Phi}}^{\top}(p^{-1}\mathbf{Y}\mathbf{Y}^{\top}-\boldsymbol{\Phi}\boldsymbol{\Phi}^{\top})\mathbf{U}_{\boldsymbol{\Phi}}\mathbf{U}_{\boldsymbol{\Phi}}^{\top}\mathbf{U}_{\mathbf{Y}}\\
 & +\boldsymbol{\Lambda}_{\boldsymbol{\Phi}}(\mathbf{U}_{\boldsymbol{\Phi}}^{\top}\mathbf{U}_{\mathbf{Y}}-\mathbf{F}_{\star})
\end{align*}
hence
\begin{align}
\|\mathbf{F}_{\star}\boldsymbol{\Lambda}_{\mathbf{Y}}-\boldsymbol{\Lambda}_{\boldsymbol{\Phi}}\mathbf{F}_{\star}\|_{2} & \leq\|\mathbf{U}_{\boldsymbol{\Phi}}^{\top}\mathbf{U}_{\mathbf{Y}}-\mathbf{F}_{\star}\|_{2}(\|\boldsymbol{\Lambda}_{\mathbf{Y}}\|_{2}+\|\boldsymbol{\Lambda}_{\Phi}\|_{2})\label{eq:F_lam-lam_F_1}\\
 & +\|\mathbf{U}_{\boldsymbol{\Phi}}^{\top}(p^{-1}\mathbf{Y}\mathbf{Y}^{\top}-\boldsymbol{\Phi}\boldsymbol{\Phi}^{\top})\mathbf{R}\|_{2}\label{eq:F_lam-lam_F_2}\\
 & +\|\mathbf{U}_{\boldsymbol{\Phi}}^{\top}(p^{-1}\mathbf{Y}\mathbf{Y}^{\top}-\boldsymbol{\Phi}\boldsymbol{\Phi}^{\top})\mathbf{U}_{\boldsymbol{\Phi}}\mathbf{U}_{\boldsymbol{\Phi}}^{\top}\mathbf{U}_{\mathbf{Y}}\|_{2}\label{eq:F_lam-lam_F_3}
\end{align}
For the term on the r.h.s. of (\ref{eq:F_lam-lam_F_1}), on the event
in the statement of the present lemma and using lemma \ref{lem:davis_kahan}
we have:
\[
\|\mathbf{U}_{\boldsymbol{\Phi}}^{\top}\mathbf{U}_{\mathbf{Y}}-\mathbf{F}_{\star}\|_{2}(\|\boldsymbol{\Lambda}_{\mathbf{Y}}\|_{2}+\|\boldsymbol{\Lambda}_{\Phi}\|_{2})\leq\left[\frac{\epsilon_{1}+n^{-1}\sigma^{2}}{\lambda_{r}^{f}(1-2\epsilon_{2})}\right]^{2}2n\lambda_{1}^{f}(1+\epsilon_{2}).
\]
For the term in (\ref{eq:F_lam-lam_F_2}), using $\mathbf{R}=(\mathbf{U}_{\mathbf{Y}}\mathbf{U}_{\mathbf{Y}}^{\top}-\mathbf{U}_{\boldsymbol{\Phi}}^{\top}\mathbf{U}_{\boldsymbol{\Phi}})\mathbf{U}_{\mathbf{Y}},$
we have again on the event in the statement of the present lemma and
using lemma \ref{lem:davis_kahan},
\begin{align*}
\|\mathbf{U}_{\boldsymbol{\Phi}}^{\top}(p^{-1}\mathbf{Y}\mathbf{Y}^{\top}-\boldsymbol{\Phi}\boldsymbol{\Phi}^{\top})\mathbf{R}\|_{2} & \leq\|p^{-1}\mathbf{Y}\mathbf{Y}^{\top}-\boldsymbol{\Phi}\boldsymbol{\Phi}^{\top}\|_{2}\|\mathbf{R}\|_{2}\\
 & \leq(\|p^{-1}\mathbf{Y}\mathbf{Y}^{\top}-\boldsymbol{\Phi}\boldsymbol{\Phi}^{\top}-\sigma^{2}\mathbf{I}_{n}\|_{2}+\sigma^{2})\|\mathbf{U}_{\mathbf{Y}}\mathbf{U}_{\mathbf{Y}}^{\top}-\mathbf{U}_{\boldsymbol{\Phi}}^{\top}\mathbf{U}_{\boldsymbol{\Phi}}\|_{2}\\
 & \leq(\epsilon_{1}n+\sigma^{2})\left(\frac{\epsilon_{1}+n^{-1}\sigma^{2}}{\lambda_{r}^{f}(1-2\epsilon_{2})}\right)=n\frac{(\epsilon_{1}+n^{-1}\sigma^{2})^{2}}{\lambda_{r}^{f}(1-2\epsilon_{2})}.
\end{align*}
For the term in (\ref{eq:F_lam-lam_F_3}),
\begin{align*}
\|\mathbf{U}_{\boldsymbol{\Phi}}^{\top}(p^{-1}\mathbf{Y}\mathbf{Y}^{\top}-\boldsymbol{\Phi}\boldsymbol{\Phi}^{\top})\mathbf{U}_{\boldsymbol{\Phi}}\mathbf{U}_{\boldsymbol{\Phi}}^{\top}\mathbf{U}_{\mathbf{Y}}\|_{2} & \leq\left(\|(p^{-1}\mathbf{Y}\mathbf{Y}^{\top}-\boldsymbol{\Phi}\boldsymbol{\Phi}^{\top}-\sigma^{2}\mathbf{I}_{n})\|_{2}+\sigma^{2}\right)\|\mathbf{U}_{\boldsymbol{\Phi}}^{\top}\mathbf{U}_{\mathbf{Y}}\|_{2}\\
 & \leq n\epsilon_{1}+\sigma^{2}.
\end{align*}
The bound on $\|\mathbf{F}_{\star}\boldsymbol{\Lambda}_{\mathbf{Y}}-\boldsymbol{\Lambda}_{\boldsymbol{\Phi}}\mathbf{F}_{\star}\|_{F}$
given in the statement holds by combining the above spectral norm
bounds.

For the bound on $\|\mathbf{F}_{\star}\boldsymbol{\Lambda}_{\mathbf{Y}}^{1/2}-\boldsymbol{\Lambda}_{\boldsymbol{\Phi}}^{1/2}\mathbf{F}_{\star}\|_{F}$
we use the fact that the elements of $\mathbf{F}_{\star}\boldsymbol{\Lambda}_{\mathbf{Y}}^{1/2}-\boldsymbol{\Lambda}_{\boldsymbol{\Phi}}^{1/2}\mathbf{F}_{\star}$
can be written:
\begin{align*}
(\mathbf{F}_{\star}\boldsymbol{\Lambda}_{\mathbf{Y}}^{1/2}-\boldsymbol{\Lambda}_{\boldsymbol{\Phi}}^{1/2}\mathbf{F}_{\star})_{ij} & =(\mathbf{F}_{\star})_{ij}\lambda_{j}(p^{-1}\mathbf{Y}\mathbf{Y}^{\top})^{1/2}-\lambda_{i}(\boldsymbol{\Phi}\boldsymbol{\Phi}^{\top})^{1/2}(\mathbf{F}_{\star})_{ij}\\
 & =(\mathbf{F}_{\star})_{ij}\frac{[\lambda_{j}(p^{-1}\mathbf{Y}\mathbf{Y}^{\top})-\lambda_{i}(\boldsymbol{\Phi}\boldsymbol{\Phi}^{\top})]}{\lambda_{j}(p^{-1}\mathbf{Y}\mathbf{Y}^{\top})^{1/2}+\lambda_{i}(\boldsymbol{\Phi}\boldsymbol{\Phi}^{\top})^{1/2}}
\end{align*}
hence
\[
|(\mathbf{F}_{\star}\boldsymbol{\Lambda}_{\mathbf{Y}}^{1/2}-\boldsymbol{\Lambda}_{\boldsymbol{\Phi}}^{1/2}\mathbf{F}_{\star})_{ij}|\leq\frac{|(\mathbf{F}_{\star}\boldsymbol{\Lambda}_{\mathbf{Y}}-\boldsymbol{\Lambda}_{\Phi}\mathbf{F}_{\star})_{ij}|}{2n^{1/2}(1-\epsilon_{2})^{1/2}(\lambda_{r}^{f})^{1/2}},
\]
and so 
\[
\|\mathbf{F}_{\star}\boldsymbol{\Lambda}_{\mathbf{Y}}^{1/2}-\boldsymbol{\Lambda}_{\boldsymbol{\Phi}}^{1/2}\mathbf{F}_{\star}\|_{F}\leq\frac{\|\mathbf{F}_{\star}\boldsymbol{\Lambda}_{\mathbf{Y}}-\boldsymbol{\Lambda}_{\Phi}\mathbf{F}_{\star}\|_{F}}{2n^{1/2}(1-\epsilon_{2})^{1/2}(\lambda_{r}^{f})^{1/2}}.
\]
The bound on $\|\mathbf{F}_{\star}\boldsymbol{\Lambda}_{\mathbf{Y}}^{-1/2}-\boldsymbol{\Lambda}_{\boldsymbol{\Phi}}^{-1/2}\mathbf{F}_{\star}\|_{F}$
in the statement is obtained in a similar manner using the fact that
for any $a,b>0$, $a^{-1/2}-b^{-1/2}=(b^{1/2}-a^{1/2})/(a^{1/2}b^{1/2})$
.
\end{proof}

\subsection{Some linear algebra}
\begin{lem}
\label{lem:AA^T}For any $m_{1},m_{2}\geq1$, $\mathbf{A}\in\mathbb{R}^{m_{2}\times m_{1}}$,
$q\leq\min\{m_{1},m_{2}\}$ and strictly positive real numbers $\lambda_{1},\ldots,\lambda_{q}$,
\\
a) there exists $\mathbf{U}\in\mathbb{R}^{m_{2}\times q}$ such that
$\mathbf{U}^{\top}\mathbf{U}=\mathbf{I}_{q}$ and $\mathbf{A}\mathbf{A}^{\top}\mathbf{U}=\mathbf{U}\boldsymbol{\Lambda}$,
if and only if there exists $\mathbf{V}\in\mathbb{R}^{m_{1}\times q}$
such that $\mathbf{V}^{\top}\mathbf{V}=\mathbf{I}_{q}$ and $\mathbf{A}^{\top}\mathbf{A}\mathbf{V}=\mathbf{V}\boldsymbol{\Lambda}$,
where $\boldsymbol{\Lambda}\coloneqq\mathrm{diag}(\lambda_{1},\ldots,\lambda_{q})$;
\\
b) when $\mathbf{V}$ with the properties stated in part a) exists,
a choice of $\mathbf{U}$ which has the properties stated in part
a) is $\mathbf{U}=\mathbf{A}\mathbf{V}\boldsymbol{\Lambda}^{-1/2}$;\\
c) $\lambda_{i}(\mathbf{A}^{\top}\mathbf{A})=\lambda_{i}(\mathbf{A}\mathbf{A}^{\top})$,
for $i=1,\ldots,\min\{m_{1},m_{2}\}$.\\
d) the rank of $\mathbf{A}^{\top}\mathbf{A}$ is equal to that of
$\mathbf{A}\mathbf{A}^{\top}$;
\end{lem}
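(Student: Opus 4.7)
The plan is to establish the bidirectional correspondence in part a) by using the explicit construction in part b) as the workhorse, then deduce parts c) and d) as essentially immediate corollaries via the singular value decomposition.

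For the ``only if'' direction of part a), I would start with a matrix $\mathbf{V}$ satisfying the stated properties and define $\mathbf{U} \coloneqq \mathbf{A}\mathbf{V}\boldsymbol{\Lambda}^{-1/2}$, which is well-defined because $\lambda_1, \ldots, \lambda_q > 0$. Verifying orthonormality is a direct computation:
\[
\mathbf{U}^\top \mathbf{U} = \boldsymbol{\Lambda}^{-1/2}\mathbf{V}^\top(\mathbf{A}^\top\mathbf{A})\mathbf{V}\boldsymbol{\Lambda}^{-1/2} = \boldsymbol{\Lambda}^{-1/2}\mathbf{V}^\top\mathbf{V}\boldsymbol{\Lambda}\,\boldsymbol{\Lambda}^{-1/2} = \mathbf{I}_q,
\]
using the eigen-relation $\mathbf{A}^\top\mathbf{A}\mathbf{V} = \mathbf{V}\boldsymbol{\Lambda}$. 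The same eigen-relation immediately gives
\[
\mathbf{A}\mathbf{A}^\top\mathbf{U} = \mathbf{A}(\mathbf{A}^\top\mathbf{A}\mathbf{V})\boldsymbol{\Lambda}^{-1/2} = \mathbf{A}\mathbf{V}\boldsymbol{\Lambda}^{1/2} = (\mathbf{A}\mathbf{V}\boldsymbol{\Lambda}^{-1/2})\boldsymbol{\Lambda} = \mathbf{U}\boldsymbol{\Lambda}.
\]
This simultaneously proves the ``only if'' direction of part a) and the explicit formula in part b). The ``if'' direction is entirely symmetric: given $\mathbf{U}$ with the stated properties, set $\mathbf{V} \coloneqq \mathbf{A}^\top\mathbf{U}\boldsymbol{\Lambda}^{-1/2}$ and repeat the same calculation with the roles of $\mathbf{A}$ and $\mathbf{A}^\top$ swapped.

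For part c), I would invoke the singular value decomposition $\mathbf{A} = \mathbf{U}_A\boldsymbol{\Sigma}_A\mathbf{V}_A^\top$, which yields $\mathbf{A}^\top\mathbf{A} = \mathbf{V}_A(\boldsymbol{\Sigma}_A^\top\boldsymbol{\Sigma}_A)\mathbf{V}_A^\top$ and $\mathbf{A}\mathbf{A}^\top = \mathbf{U}_A(\boldsymbol{\Sigma}_A\boldsymbol{\Sigma}_A^\top)\mathbf{U}_A^\top$. The diagonal matrices $\boldsymbol{\Sigma}_A^\top\boldsymbol{\Sigma}_A$ and $\boldsymbol{\Sigma}_A\boldsymbol{\Sigma}_A^\top$ share the same squared singular values on their first $\min\{m_1,m_2\}$ diagonal entries (the remainder being zeros), establishing $\lambda_i(\mathbf{A}^\top\mathbf{A}) = \lambda_i(\mathbf{A}\mathbf{A}^\top)$ for $i = 1, \ldots, \min\{m_1,m_2\}$. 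Part d) then follows at once, since the rank of a symmetric positive semi-definite matrix equals the number of strictly positive eigenvalues, and this count agrees for the two matrices by part c).

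There is no real obstacle here; the only point requiring a small amount of care is ensuring the indexing conventions in part c) are respected when $m_1 \neq m_2$, which is handled automatically by the SVD argument since the ``extra'' eigenvalues on whichever side is larger are all zero. Parts a), b) and the SVD-based proof of c) and d) could alternatively be packaged as direct applications of the identity $\mathbf{A}\mathbf{A}^\top(\mathbf{A}\mathbf{v}) = \mathbf{A}(\mathbf{A}^\top\mathbf{A}\mathbf{v}) = \lambda(\mathbf{A}\mathbf{v})$, which transports eigenvectors of $\mathbf{A}^\top\mathbf{A}$ with nonzero eigenvalues to eigenvectors of $\mathbf{A}\mathbf{A}^\top$ with the same eigenvalues, but the SVD version is cleaner for part c) where multiplicity needs to be tracked.
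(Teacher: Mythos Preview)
Your argument for parts a) and b) is essentially identical to the paper's: construct $\mathbf{U}=\mathbf{A}\mathbf{V}\boldsymbol{\Lambda}^{-1/2}$, verify orthonormality and the eigen-relation directly, then invoke symmetry for the reverse direction. (Minor point: you have the ``if'' and ``only if'' labels swapped relative to the statement as written, but since you do both directions this is harmless.)

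The one genuine difference is part c). You prove it independently via the SVD, whereas the paper deduces it from part a): once a) is established, the nonzero eigenvalues of $\mathbf{A}^\top\mathbf{A}$ and $\mathbf{A}\mathbf{A}^\top$ must coincide (with multiplicity), and the remaining eigenvalues up to index $\min\{m_1,m_2\}$ are zero on both sides. Your SVD route is slightly more self-contained and handles multiplicity transparently; the paper's route is more economical since it recycles what was just proved. Both are standard and correct, and part d) follows from c) identically in either approach.
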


\begin{proof}
Assume the existence of $\mathbf{V}$ with the properties stated in
part a). Taking $\mathbf{U}\coloneqq\mathbf{A}\mathbf{V}\boldsymbol{\Lambda}^{-1/2}$
we have
\begin{align*}
\mathbf{U}^{\top}\mathbf{U} & \coloneqq\boldsymbol{\Lambda}^{-1/2}\mathbf{V}^{\top}\mathbf{A}^{\top}\mathbf{A}\mathbf{V}\boldsymbol{\Lambda}^{-1/2}\\
 & =\boldsymbol{\Lambda}^{-1/2}\mathbf{V}^{\top}\mathbf{V}\boldsymbol{\Lambda}\boldsymbol{\Lambda}^{-1/2}\\
 & =\boldsymbol{\Lambda}^{-1/2}\boldsymbol{\Lambda}\boldsymbol{\Lambda}^{-1/2}=\mathbf{I}_{q}
\end{align*}
and 
\begin{align*}
\mathbf{A}\mathbf{A}^{\top}\mathbf{U} & =\mathbf{A}\mathbf{A}^{\top}\mathbf{A}\mathbf{V}\boldsymbol{\Lambda}^{-1/2}\\
 & =\mathbf{A}\mathbf{V}\boldsymbol{\Lambda}\boldsymbol{\Lambda}^{-1/2}\\
 & =\mathbf{U}\boldsymbol{\Lambda}.
\end{align*}
The implication in the other direction for part a) holds by interchanging
$\mathbf{A}^{\top}$ and $\mathbf{U}$ with respectively $\mathbf{A}$
and $\mathbf{V}$. We have thus proved parts a) and b) of the lemma.
Part a) implies that the non-zero eigenvalues of $\mathbf{A}^{\top}\mathbf{A}$
are equal to those of $\mathbf{A}\mathbf{A}^{\top}$, which establishes
the claim of part c). Part d) follows from part c).
\end{proof}

\begin{lem}
\label{lem:orthogonal_transformation}For any $m_{1}\leq m_{2}$ and
$\mathbf{A}\in\mathbb{R}^{m_{2}\times m_{1}}$ such that $\mathbf{A}$
has rank $m_{1}$, there exists an orthogonal matrix $\mathbf{Q}\in\mathbb{R}^{m_{1}\times m_{1}}$
such that $\mathbf{U}\boldsymbol{\Lambda}^{1/2}=\mathbf{A}\mathbf{Q}$,
where $\boldsymbol{\Lambda}=\mathrm{diag}\{\lambda_{1}(\mathbf{A}\mathbf{A}^{\top}),\cdots,\lambda_{m_{1}}(\mathbf{A}\mathbf{A}^{\top})\}$
and the columns of $\mathbf{U}\in\mathbb{R}^{m_{2}\times m_{1}}$
are orthonormal eigenvectors of $\mathbf{A}\mathbf{A}^{\top}$ with eigenvalues $\lambda_{1}(\mathbf{A}\mathbf{A}^{\top}),$ $\ldots,\lambda_{m_{1}}(\mathbf{A}\mathbf{A}^{\top})$.
\end{lem}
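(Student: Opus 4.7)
The plan is to build $\mathbf{Q}$ explicitly from $\mathbf{A}$, $\mathbf{U}$ and $\boldsymbol{\Lambda}$, and then verify the two required properties (orthogonality of $\mathbf{Q}$ and the identity $\mathbf{U}\boldsymbol{\Lambda}^{1/2}=\mathbf{A}\mathbf{Q}$) by direct computation, in the spirit of the construction used in lemma~\ref{lem:AA^T}(b). Note first that since $\mathbf{A}\mathbf{A}^{\top}$ is positive semi-definite of rank $m_{1}$, the listed eigenvalues $\lambda_{1}(\mathbf{A}\mathbf{A}^{\top}),\ldots,\lambda_{m_{1}}(\mathbf{A}\mathbf{A}^{\top})$ are all strictly positive; hence $\boldsymbol{\Lambda}^{1/2}$ and $\boldsymbol{\Lambda}^{-1/2}$ are well-defined diagonal matrices. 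The key starting point is the eigenvalue equation $\mathbf{A}\mathbf{A}^{\top}\mathbf{U}=\mathbf{U}\boldsymbol{\Lambda}$, which holds by assumption on the columns of $\mathbf{U}$.

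I would then define
\[
\mathbf{Q}\coloneqq\mathbf{A}^{\top}\mathbf{U}\boldsymbol{\Lambda}^{-1/2},
\]
a matrix whose size matches that required for $\mathbf{A}\mathbf{Q}$ to make sense and to coincide with $\mathbf{U}\boldsymbol{\Lambda}^{1/2}$. Verifying orthogonality is then a one-line computation:
\[
\mathbf{Q}^{\top}\mathbf{Q}=\boldsymbol{\Lambda}^{-1/2}\mathbf{U}^{\top}(\mathbf{A}\mathbf{A}^{\top})\mathbf{U}\boldsymbol{\Lambda}^{-1/2}=\boldsymbol{\Lambda}^{-1/2}\mathbf{U}^{\top}\mathbf{U}\boldsymbol{\Lambda}\,\boldsymbol{\Lambda}^{-1/2}=\mathbf{I},
\]
where we used the eigenvalue equation and the orthonormality of the columns of $\mathbf{U}$. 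Since $\mathbf{Q}$ is square with orthonormal columns, it is orthogonal.

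The desired identity then follows from another short calculation:
\[
\mathbf{A}\mathbf{Q}=\mathbf{A}\mathbf{A}^{\top}\mathbf{U}\boldsymbol{\Lambda}^{-1/2}=\mathbf{U}\boldsymbol{\Lambda}\,\boldsymbol{\Lambda}^{-1/2}=\mathbf{U}\boldsymbol{\Lambda}^{1/2},
\]
again using the eigenvalue equation $\mathbf{A}\mathbf{A}^{\top}\mathbf{U}=\mathbf{U}\boldsymbol{\Lambda}$. There is essentially no obstacle here: the whole argument is driven by the eigenvalue equation together with the invertibility of $\boldsymbol{\Lambda}$ on the range spanned by $\mathbf{U}$. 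The only subtlety worth flagging is that the construction works for \emph{any} admissible choice of orthonormal eigenvector matrix $\mathbf{U}$ — so $\mathbf{Q}$ depends on this choice, but existence of an orthogonal $\mathbf{Q}$ achieving $\mathbf{U}\boldsymbol{\Lambda}^{1/2}=\mathbf{A}\mathbf{Q}$ is guaranteed regardless of how eigenvector multiplicities are resolved.
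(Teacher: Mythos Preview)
Your proof is correct and follows essentially the same construction as the paper: define $\mathbf{Q}\coloneqq\mathbf{A}^{\top}\mathbf{U}\boldsymbol{\Lambda}^{-1/2}$, verify $\mathbf{Q}^{\top}\mathbf{Q}=\mathbf{I}$ via the eigenvalue equation, and obtain $\mathbf{A}\mathbf{Q}=\mathbf{U}\boldsymbol{\Lambda}^{1/2}$ the same way. Your version is in fact slightly tidier, since you observe that a square matrix with $\mathbf{Q}^{\top}\mathbf{Q}=\mathbf{I}$ is automatically orthogonal, whereas the paper separately checks $\mathbf{Q}\mathbf{Q}^{\top}=\mathbf{I}$ via a reduced SVD of $\mathbf{A}$.
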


\begin{proof}
We have $\mathbf{A}\mathbf{A}^{\top}=\mathbf{U}\boldsymbol{\Lambda}\mathbf{U}^{\top}$,
hence $\mathbf{U}\boldsymbol{\Lambda}^{1/2}=\mathbf{A}\mathbf{A}^{\top}\mathbf{U}\boldsymbol{\Lambda}^{-1/2}$.
Take $\mathbf{Q}\coloneqq\mathbf{A}^{\top}\mathbf{U}\boldsymbol{\Lambda}^{-1/2}\in\mathbb{R}^{m_{1}\times m_{1}}$.
We then find:

\[
\mathbf{Q}^{\top}\mathbf{Q}=\boldsymbol{\Lambda}^{-1/2}\mathbf{U}^{\top}\mathbf{A}\mathbf{A}^{\top}\mathbf{U}\boldsymbol{\Lambda}^{-1/2}=\boldsymbol{\Lambda}^{-1/2}\mathbf{U}^{\top}\mathbf{U}\boldsymbol{\Lambda}\mathbf{U}^{\top}\mathbf{U}\boldsymbol{\Lambda}^{-1/2}=\mathbf{I}_{m_{1}}
\]
and 
\begin{equation}
\mathbf{Q}\mathbf{Q}^{\top}=\mathbf{A}^{\top}\mathbf{U}\boldsymbol{\Lambda}^{-1}\mathbf{U}^{\top}\mathbf{A}.\label{eq:QQ^T}
\end{equation}
Consider the reduced  singular value decomposition $\mathbf{A}=\mathbf{U}\boldsymbol{\Lambda}^{1/2}\mathbf{V}^{\top}$
where $\mathbf{V}\in\mathbb{R}^{m_{1}\times m_{1}}$ has orthonormal
columns. Substituting into the r.h.s. of (\ref{eq:QQ^T}),
\[
\mathbf{Q}\mathbf{Q}^{\top}=\mathbf{V}\boldsymbol{\Lambda}^{1/2}\mathbf{U}^{\top}\mathbf{U}\boldsymbol{\Lambda}^{-1}\mathbf{U}^{\top}\mathbf{U}\boldsymbol{\Lambda}^{1/2}\mathbf{V}^{\top}=\mathbf{V}\mathbf{V}^{\top}=\mathbf{I}_{m_{1}}.
\]
\end{proof}

\subsection{Some properties of the LMM}
\begin{lem}
\label{lem:PC_identity}On the event that the rank of $\mathbf{Y}^{\top}\mathbf{Y}$
is at least $r$, $p^{-1/2}\mathbf{Y}\mathbf{V}_{\mathbf{Y}}=\mathbf{U}_{\mathbf{Y}}\boldsymbol{\Lambda}_{\mathbf{Y}}^{1/2}$,
where the columns
of $\mathbf{U}_{\mathbf{Y}}\in\mathbb{R}^{n\times r}$ are orthonormal
eigenvectors of $p^{-1}\mathbf{Y}\mathbf{Y}^{\top}$  with associated eigenvalues on the diagonal of the diagonal matrix $\boldsymbol{\Lambda}_{\mathbf{Y}}\in\mathbb{R}^{r\times r}$.
\end{lem}

\begin{proof}
Apply lemma \ref{lem:AA^T}, part b).
\end{proof}
\noindent Thus by computing the PCA embedding $\zeta_1,\ldots,\zeta_n$ and rescaling by $p^{-1/2}$, we are,
in effect, computing the $n$ rows of $\mathbf{U}_{\mathbf{Y}}\boldsymbol{\Lambda}_{\mathbf{Y}}^{1/2}$,
where $\mathbf{U}_{\mathbf{Y}}\boldsymbol{\Lambda}_{\mathbf{Y}}^{1/2}(\mathbf{U}_{\mathbf{Y}}\boldsymbol{\Lambda}_{\mathbf{Y}}^{1/2})^{\top}=\mathbf{U}_{\mathbf{Y}}\boldsymbol{\Lambda}_{\mathbf{Y}}\mathbf{U}_{\mathbf{Y}}^{\top}$
is a rank$-r$ approximation to $p^{-1}\mathbf{Y}\mathbf{Y}^{\top}$.

\begin{lem}
\label{lem:conditiona_expectation_of_Y^TY}Assume \ref{ass:cont_covar} and \ref{ass:finite rank}. Then $p^{-1}\mathbb{E}[\mathbf{Y}\mathbf{Y}^{\top}|Z_{1},\ldots,Z_{n}]=\boldsymbol{\Phi}\boldsymbol{\Phi}^{\top}+\sigma^{2}\mathbf{I}_{n}$.
\end{lem}

\begin{proof}
Let $\mathbf{X}\in\mathbb{R}^{n\times p}$
be the matrix with entries $\mathbf{X}_{ij}\coloneqq X_{j}(Z_{i})$.
According to the model specification in section \ref{sec:Model},
$\mathbf{X}$ and $\mathbb{\mathbf{E}}$ are independent, and $\mathbb{E}[\mathbf{E}\mathbf{E}^{\top}]=p\mathbf{I}_{n}$.
Thus: 
\begin{align*}
\mathbb{E}[\mathbf{Y}\mathbf{Y}^{\top}|Z_{1},\ldots,Z_{n}] & =\mathbb{E}[\mathbf{X}\mathbf{X}^{\top}|Z_{1},\ldots,Z_{n}]+\sigma\mathbb{E}[\mathbf{\mathbf{X}}\mathbf{E}^{\top}|Z_{1},\ldots,Z_{n}]\\
 & \quad+\sigma\mathbb{E}[\mathbf{E}\mathbf{X}^{\top}|Z_{1},\ldots,Z_{n}]+\sigma^{2}\mathbb{E}[\mathbf{E}\mathbf{E}^{\top}|Z_{1},\ldots,Z_{n}]\\
 & =p\boldsymbol{\Phi}\boldsymbol{\Phi}^{\top}+p\sigma^{2}\mathbf{I}_{n}.
\end{align*}
\end{proof}

\begin{lem}\label{lem:c_lambda_constants}
Assume \ref{ass:cont_covar}, \ref{ass:finite rank} and \ref{ass:moments}. Then there exists a constant  $c_{\lambda}^{\mathrm{max}}<\infty$ depending only on the first supremum in \ref{ass:moments},  and a constant $c_{\lambda}^{\mathrm{min}}>0$ such that
$$
\sup_{p\geq 1}\left\{\sup_z f(z,z)+\lambda_1^f\right\}\leq c_{\lambda}^{\mathrm{max}},\qquad\inf_{p\geq 1} \lambda_r^f \geq c_{\lambda}^{\mathrm{min}}.
$$
\end{lem}
\begin{proof}
The existence of $c_{\lambda}^{\mathrm{min}}$ as required is an immediate consequence of \ref{ass:finite rank}. Using \ref{ass:moments} and Jensen's inequality gives:
$$
\sup_z f(z,z) = \sup_z \frac{1}{p}\sum_{j=1}^p\mathbb{E}[|X_j(z)|^2]\leq \sup_z \frac{1}{p}\sum_{j=1}^p\mathbb{E}[|X_j(z)|^{4q}]^{2/4q}<\infty.  
$$
The existence of $c_{\lambda}^{\mathrm{max}}$ as required follows from the above inequalities combined with:
$$
\lambda_1^f\leq \sum_{k=1}^\infty \lambda_k^f = \sum_{k=1}^\infty \lambda_k^f \mathbb{E}[|u_k^f(Z_1)|^2] = \mathbb{E}[f(Z_1,Z_1)]\leq \sup_z f(z,z). 
$$
\end{proof}

\subsection{Matrix concentration results}\label{subsec:matrix_conc}

The following matrix-valued version of the Bernstein inequality can
be found in, e.g., \citep[Thm 1.6.2]{tropp2015introduction}
\begin{thm}
[Matrix Bernstein inequality]\label{thm:matrix_bern}Let $\mathbf{M}_{1},\ldots,\mathbf{M}_{n}$
be independent random matrices with common dimensions $m_{1}\times m_{2}$
satisfying $\mathbb{E}[\mathbf{M}_{i}]=0$ and $\|\mathbf{M}_{i}\|_{2}\leq L$
for each $1\leq i\leq n$ and some constant $L$. Let $\mathbf{M}\coloneqq\sum_{i=1}^{n}\mathbf{M}_{i}$
and $v(\mathbf{M})=\max\left\{ \|\mathbb{E}[\mathbf{M}\mathbf{M}^{\top}]\|_{2},\|\mathbb{E}[\mathbf{M}^{\top}\mathbf{M}]\|_{2}\right\} $.
Then for all $t\geq0$, 
\[
\mathbb{P}\left(\|\mathbf{M}\|_{2}\geq t\right)\leq(m_{1}+m_{2})\exp\left(\frac{-t^{2}/2}{v(\mathbf{M})+Lt/3}\right).
\]
\end{thm}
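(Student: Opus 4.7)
My plan is to follow the matrix Laplace transform method (Ahlswede--Winter, refined by Oliveira and Tropp). First I would reduce the rectangular operator-norm tail bound to a largest-eigenvalue bound for a symmetric matrix via the Hermitian dilation:
\[
\mathbf{S}_i \coloneqq \begin{bmatrix} \mathbf{0} & \mathbf{M}_i \\ \mathbf{M}_i^{\top} & \mathbf{0} \end{bmatrix} \in \mathbb{R}^{(m_1+m_2)\times(m_1+m_2)}.
\]
The spectrum of $\mathbf{S}_i$ consists of $\pm$ the singular values of $\mathbf{M}_i$ together with zeros, so $\lambda_{\max}(\mathbf{S}_i) = \|\mathbf{M}_i\|_2 \le L$ and $\mathbb{E}[\mathbf{S}_i] = 0$. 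Setting $\mathbf{S} \coloneqq \sum_i \mathbf{S}_i$, one has $\|\mathbf{M}\|_2 = \lambda_{\max}(\mathbf{S})$, while $\mathbb{E}[\mathbf{S}^2]$ is block-diagonal with diagonal blocks $\mathbb{E}[\mathbf{M}\mathbf{M}^{\top}]$ and $\mathbb{E}[\mathbf{M}^{\top}\mathbf{M}]$, giving $\|\mathbb{E}[\mathbf{S}^2]\|_2 = v(\mathbf{M})$. It therefore suffices to prove the analogous eigenvalue Bernstein bound for an independent sum of centered, symmetric matrices bounded in spectral norm by $L$.

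Second, I would invoke the matrix Chernoff bound: since $\exp(\theta \lambda_{\max}(\mathbf{S})) \le \mathrm{tr}\,\exp(\theta \mathbf{S})$ for $\theta > 0$ on positive semidefinite inputs, Markov's inequality gives
\[
\mathbb{P}(\lambda_{\max}(\mathbf{S}) \ge t) \le e^{-\theta t}\, \mathbb{E}\,\mathrm{tr}\,\exp(\theta \mathbf{S}).
\]
To control the joint matrix MGF I would apply Lieb's concavity theorem, which, via a Jensen-type argument over the independent summands, yields
\[
\mathbb{E}\,\mathrm{tr}\,\exp\left(\sum_{i=1}^{n} \theta \mathbf{S}_i\right) \le \mathrm{tr}\,\exp\left(\sum_{i=1}^{n} \log \mathbb{E}\, e^{\theta \mathbf{S}_i}\right),
\]
converting the analysis into one of individual matrix cumulants $\log \mathbb{E}\,e^{\theta \mathbf{S}_i}$.

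Third, I would establish a semidefinite Bernstein-type bound on each matrix cumulant: for $0 < \theta < 3/L$,
\[
\log \mathbb{E}\, e^{\theta \mathbf{S}_i} \preceq g(\theta)\, \mathbb{E}[\mathbf{S}_i^2], \qquad g(\theta) \coloneqq \frac{\theta^{2}/2}{1 - L\theta/3}.
\]
This is proved by expanding $e^{\theta \mathbf{S}_i}$ as a power series, using $\mathbf{S}_i^{k} \preceq L^{k-2}\mathbf{S}_i^{2}$ for $k \ge 2$ (which follows by the spectral mapping theorem for symmetric $\mathbf{S}_i$ with $\|\mathbf{S}_i\|_2 \le L$), summing the resulting geometric-type series to obtain $\mathbb{E}\, e^{\theta \mathbf{S}_i} \preceq \mathbf{I} + g(\theta)\,\mathbb{E}[\mathbf{S}_i^2]$, then taking matrix logarithms via operator monotonicity of $\log$ together with $\log(\mathbf{I}+\mathbf{A}) \preceq \mathbf{A}$. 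Combining this with the preceding inequalities and the monotonicity of $\mathbf{A} \mapsto \mathrm{tr}\,\exp(\mathbf{A})$ in the semidefinite order yields
\[
\mathbb{P}(\|\mathbf{M}\|_2 \ge t) \le (m_1 + m_2)\exp\!\bigl(g(\theta)\, v(\mathbf{M}) - \theta t\bigr).
\]
Finally I would optimize over $\theta$: the choice $\theta = t/(v(\mathbf{M}) + Lt/3)$ lies in $(0, 3/L)$ and produces the stated exponent $-(t^2/2)/(v(\mathbf{M}) + Lt/3)$.

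The main obstacle, I expect, is step three. Unlike the scalar case, nonlinear functions of matrices do not generally respect semidefinite inequalities (for example the matrix exponential is \emph{not} operator monotone), so the passage from the scalar Bernstein MGF bound to its matrix analogue must be done through operator monotonicity of the logarithm and through the semidefinite power-series estimate $\mathbf{S}_i^{k} \preceq L^{k-2}\mathbf{S}_i^{2}$; it is precisely the Hermitian dilation in step one that makes these spectral-calculus arguments, and Lieb's theorem itself, applicable.
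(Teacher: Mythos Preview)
The paper does not prove this theorem; it simply quotes it from \citet[Thm~1.6.2]{tropp2015introduction}. Your outline is precisely Tropp's argument---Hermitian dilation, the Laplace-transform/Lieb subadditivity step, the semidefinite Bernstein cumulant bound, and the explicit choice of $\theta$---and it is correct as written.
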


\begin{lem}
\label{lem:phiphi^T_conc}Assume \ref{ass:cont_covar} and \ref{ass:finite rank}.
For any $t\geq0$, 
\[
\mathbb{P}\left(\|n^{-1}\boldsymbol{\Phi}^{\top}\boldsymbol{\Phi}-n^{-1}\mathbb{E}[\boldsymbol{\Phi}^{\top}\boldsymbol{\Phi}]\|_{2}\geq t\right)\leq2r\exp\left(\frac{-t^{2}n/2}{(c_{\lambda}^{\mathrm{max}})^{2}+c_{\lambda}^{\mathrm{max}}t/3}\right),
\]
where $c_{\lambda}^{\mathrm{max}}$ is as in lemma \ref{lem:c_lambda_constants}.
\end{lem}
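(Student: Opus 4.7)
The plan is to apply the matrix Bernstein inequality (theorem \ref{thm:matrix_bern}) to the $r\times r$ matrix
\[
\mathbf{M}\coloneqq\boldsymbol{\Phi}^{\top}\boldsymbol{\Phi}-\mathbb{E}[\boldsymbol{\Phi}^{\top}\boldsymbol{\Phi}]=\sum_{i=1}^{n}\mathbf{M}_{i},\qquad \mathbf{M}_{i}\coloneqq\phi(Z_{i})\phi(Z_{i})^{\top}-\mathbb{E}[\phi(Z_{i})\phi(Z_{i})^{\top}],
\]
then rescale and substitute $s=nt$ at the end. Since $Z_{1},\ldots,Z_{n}$ are i.i.d., the summands $\mathbf{M}_{i}$ are independent, identically distributed, zero-mean, symmetric random $r\times r$ matrices, which is the required set-up for theorem \ref{thm:matrix_bern}.

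The first ingredient is the uniform bound $\|\mathbf{M}_{i}\|_{2}\le c_{f}$. I will observe that $\|\phi(Z_{i})\phi(Z_{i})^{\top}\|_{2}=\|\phi(Z_{i})\|_{2}^{2}=f(Z_{i},Z_{i})\le\sup_{z\in\mathcal{Z}}f(z,z)$, which is finite by the continuity assumption \ref{ass:cont_covar} and compactness of $\mathcal{Z}$. For the centering term, the $L_{2}(\mu)$-orthonormality of the Mercer eigenfunctions $(u_{k}^{f})_{k\ge 1}$ gives
\[
\mathbb{E}[\phi(Z_{1})\phi(Z_{1})^{\top}]_{k\ell}=\sqrt{\lambda_{k}^{f}\lambda_{\ell}^{f}}\,\mathbb{E}[u_{k}^{f}(Z_{1})u_{\ell}^{f}(Z_{1})]=\lambda_{k}^{f}\delta_{k\ell},
\]
so the centering is diagonal with spectral norm exactly $\lambda_{1}^{f}$. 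The triangle inequality then yields $\|\mathbf{M}_{i}\|_{2}\le\sup_{z}f(z,z)+\lambda_{1}^{f}=c_{f}$, which plays the role of $L$ in theorem \ref{thm:matrix_bern}.

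The second ingredient is the variance parameter. Since $\mathbf{M}$ is symmetric, $v(\mathbf{M})=\|\mathbb{E}[\mathbf{M}^{2}]\|_{2}$, and by independence together with $\mathbb{E}[\mathbf{M}_{i}]=0$ the cross terms vanish, giving $\mathbb{E}[\mathbf{M}^{2}]=n\,\mathbb{E}[\mathbf{M}_{1}^{2}]$. Using $(\phi(Z_{1})\phi(Z_{1})^{\top})^{2}=\|\phi(Z_{1})\|_{2}^{2}\,\phi(Z_{1})\phi(Z_{1})^{\top}=f(Z_{1},Z_{1})\,\phi(Z_{1})\phi(Z_{1})^{\top}$, I find
\[
\mathbb{E}[\mathbf{M}_{1}^{2}]=\mathbb{E}\bigl[f(Z_{1},Z_{1})\phi(Z_{1})\phi(Z_{1})^{\top}\bigr]-\bigl(\mathbb{E}[\phi(Z_{1})\phi(Z_{1})^{\top}]\bigr)^{2},
\]
and both terms are positive semidefinite (the second being a square of a symmetric matrix). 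Bounding the first by $\sup_{z}f(z,z)\cdot\lambda_{1}^{f}$ and the second in spectral norm by $(\lambda_{1}^{f})^{2}$, and using $\lambda_{1}^{f}\le c_{f}$, I obtain $\|\mathbb{E}[\mathbf{M}_{1}^{2}]\|_{2}\le\lambda_{1}^{f}\cdot c_{f}\le c_{f}^{2}$, hence $v(\mathbf{M})\le n c_{f}^{2}$.

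With these two bounds, theorem \ref{thm:matrix_bern} (noting $m_{1}+m_{2}=2r$) gives $\mathbb{P}(\|\mathbf{M}\|_{2}\ge s)\le 2r\exp\{-s^{2}/2\,/\,(nc_{f}^{2}+c_{f}s/3)\}$. Setting $s=nt$ and dividing both sides of the event by $n$ yields exactly the claimed inequality. The only mildly subtle step is the spectral-norm calculation for the centering and the variance, which relies on the Mercer orthonormality rather than on any additional regularity of the fields; the rest is bookkeeping.
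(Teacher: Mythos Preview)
Your proof is correct and follows essentially the same route as the paper's: apply the matrix Bernstein inequality (theorem~\ref{thm:matrix_bern}) to the centered rank-one summands $\phi(Z_i)\phi(Z_i)^\top-\mathbb{E}[\phi(Z_i)\phi(Z_i)^\top]$, bound each summand in spectral norm by $c_f$ via the triangle inequality and the identity $\|\mathbb{E}[\phi(Z_1)\phi(Z_1)^\top]\|_2=\lambda_1^f$, and bound the variance proxy by $c_f^2$ per summand. The only cosmetic difference is that the paper absorbs the $1/n$ into the summands from the outset, whereas you apply Bernstein to the unscaled sum and substitute $s=nt$ at the end; your intermediate variance estimate $\lambda_1^f c_f$ is in fact slightly sharper than the paper's, but both reduce to the same final exponent.
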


\begin{proof}
Apply theorem \ref{thm:matrix_bern} with $\mathbf{M}_{i}=\frac{1}{n}\phi(Z_{i})\phi(Z_{i})^{\top}-\mathbb{E}[\frac{1}{n}\phi(Z_{i})\phi(Z_{i})^{\top}],$
\begin{align*}
\|\mathbf{M}_{i}\|_{2} & \leq\frac{1}{n}\|\phi(Z_{i})\phi(Z_{i})^{\top}\|_{2}+\frac{1}{n}\|\mathbb{E}[\phi(Z_{i})\phi(Z_{i})^{\top}]\|_{2}\\
 & =\frac{1}{n}\|\phi(Z_{i})\|_{2}^{2}+\frac{1}{n}\lambda_{1}^{f}\\
 & =\frac{1}{n}f(Z_{i},Z_{i})+\frac{1}{n}\lambda_{1}^{f}\\
 & \leq\frac{1}{n}c_{\lambda}^{\mathrm{max}}=:L
\end{align*}
and
\begin{align*}
v(\mathbf{M}) & =\left\Vert \mathbb{E}\left[\left(\sum_{i}\mathbf{M}_{i}\right)\left(\sum_{i}\mathbf{M}_{i}\right)\right]\right\Vert _{2}\\
 & =\left\Vert \mathbb{E}\left[\sum_{i}\mathbf{M}_{i}\mathbf{M}_{i}\right]\right\Vert _{2}\\
 & \leq\frac{1}{n}\left\Vert \mathbb{E}\left[\phi(Z_{1})\phi(Z_{1})^{\top}\phi(Z_{1})\phi(Z_{1})^{\top}\right]\right\Vert _{2}+\frac{1}{n}\|\mathbb{E}[\phi(Z_{1})\phi(Z_{1})^{\top}]^{2}\|_{2}\\
 & \leq\frac{1}{n}\mathbb{E}\left[\left\Vert \phi(Z_{1})\phi(Z_{1})^{\top}\phi(Z_{1})\phi(Z_{1})^{\top}\right\Vert _{2}\right]+\frac{1}{n}\|\mathbb{E}[\phi(Z_{1})\phi(Z_{1})^{\top}]^{2}\|_{2}\\
 & =\frac{1}{n}\mathbb{E}\left[\left\Vert \phi(Z_{1})\phi(Z_{1})^{\top}\right\Vert _{2}^{2}\right]+\frac{1}{n}(\lambda_{1}^{f})^{2}\\
 & =\frac{1}{n}\mathbb{E}\left[\|\phi(Z_{1})\|_{2}^{4}\right]+\frac{1}{n}(\lambda_{1}^{f})^{2}\leq\frac{1}{n}(c_{\lambda}^{\mathrm{max}})^{2}.
\end{align*}
\end{proof}
\begin{lem}
\label{lem:Y-phi-I_conc}Assume  \ref{ass:cont_covar}, \ref{ass:finite rank}
and \ref{ass:moments} with some $q\geq1$. Then for any $t>0$, 
\[
\mathbb{P}\left(\|p^{-1}\mathbf{Y}\mathbf{Y}^{\top}-\boldsymbol{\Phi}\boldsymbol{\Phi}^{\top}-\sigma^{2}\mathbf{I}_{n}\|_{2}\geq t\right)\leq(16)^{q}(2q-1)^{q}\frac{n^{2q}}{t^{2q}}\frac{1}{p^{q}}\left(c_{X}(2q)^{1/2q}+\sigma^{2}c_{E}(2q)^{1/2q}\right)^{2q}
\]
 where 
\[
c_{X}(q)\coloneqq\sup_{j\geq1}\sup_{z\in\mathcal{Z}}\mathbb{E}\left[\left|X_{j}(z)\right|^{2q}\right],\qquad c_{E}(q)\coloneqq\sup_{j\geq1} \sup_{i\geq1}\mathbb{E}\left[\left|\mathbf{E}_{ij}\right|^{2q}\right].
\]
\end{lem}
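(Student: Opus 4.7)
The plan is to exploit the product structure $\mathbf{Y}\mathbf{Y}^\top = \sum_{j=1}^p y_j y_j^\top$, where $y_j \in \mathbb{R}^n$ is the $j$th column of $\mathbf{Y}$, to express the target matrix as a normalised sum of $p$ conditionally independent, centred $n\times n$ matrices. Since the random fields $X_1,\ldots,X_p$ are independent and the columns of $\mathbf{E}$ are independent (with $\mathbf{E}$ independent of the $X_j$), the columns $y_j$ are conditionally independent given $Z_1,\ldots,Z_n$. By lemma \ref{lem:conditiona_expectation_of_Y^TY}, $\mathbb{E}[y_j y_j^\top \mid Z_1,\ldots,Z_n]$ summed and rescaled gives $\boldsymbol{\Phi}\boldsymbol{\Phi}^\top+\sigma^2\mathbf{I}_n$, so
\[
\mathbf{M} \coloneqq p^{-1}\mathbf{Y}\mathbf{Y}^\top-\boldsymbol{\Phi}\boldsymbol{\Phi}^\top-\sigma^2\mathbf{I}_n = p^{-1}\sum_{j=1}^p \bigl(y_j y_j^\top - \mathbb{E}[y_j y_j^\top \mid Z_1,\ldots,Z_n]\bigr).
\]

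Applying Markov's inequality with the $2q$th power, and reducing to an entrywise problem via $\|\mathbf{M}\|_2\le\|\mathbf{M}\|_F$ and the power-mean inequality $(\sum_{i,i'}\mathbf{M}_{ii'}^2)^q\le n^{2(q-1)}\sum_{i,i'}\mathbf{M}_{ii'}^{2q}$, I obtain
\[
\mathbb{P}(\|\mathbf{M}\|_2\ge t)\le t^{-2q} n^{2q}\max_{i,i'}\mathbb{E}[\mathbf{M}_{ii'}^{2q}].
\]
For each fixed pair $(i,i')$, write $\mathbf{M}_{ii'}=p^{-1}\sum_{j=1}^p \xi_j^{(ii')}$, where $\xi_j^{(ii')}\coloneqq \mathbf{Y}_{ij}\mathbf{Y}_{i'j}-\mathbb{E}[\mathbf{Y}_{ij}\mathbf{Y}_{i'j}\mid Z_1,\ldots,Z_n]$ is centred and, given $Z_1,\ldots,Z_n$, independent across $j$. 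A Marcinkiewicz--Zygmund / Rosenthal-type inequality for conditionally independent mean-zero summands then yields
\[
\mathbb{E}\bigl[\bigl|\textstyle\sum_j \xi_j^{(ii')}\bigr|^{2q}\bigm| Z_1,\ldots,Z_n\bigr]\le (2q-1)^q\, p^q\,\max_j\mathbb{E}[(\xi_j^{(ii')})^{2q}\mid Z_1,\ldots,Z_n],
\]
which after division by $p^{2q}$ produces the required $p^{-q}(2q-1)^q$ factor.

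It remains to bound the individual moments $\mathbb{E}[(\xi_j^{(ii')})^{2q}\mid Z_1,\ldots,Z_n]$. By the triangle inequality in $L^{2q}$, then Cauchy--Schwarz, and finally Minkowski applied to $\mathbf{Y}_{ij}=X_j(Z_i)+\sigma\mathbf{E}_{ij}$, the conditional $L^{2q}$ norm of $\xi_j^{(ii')}$ is at most $2\|\mathbf{Y}_{ij}\|_{4q}\|\mathbf{Y}_{i'j}\|_{4q}\le 2(c_X(2q)^{1/(4q)}+\sigma c_E(2q)^{1/(4q)})^2$, where I use the suprema over $z$ and over $i,j$ to drop the conditioning. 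Expanding the square and applying AM--GM gives
\[
\bigl(c_X(2q)^{1/(4q)}+\sigma c_E(2q)^{1/(4q)}\bigr)^2\le 2\bigl(c_X(2q)^{1/(2q)}+\sigma^2 c_E(2q)^{1/(2q)}\bigr),
\]
so $\|\xi_j^{(ii')}\|_{2q}\le 4\bigl(c_X(2q)^{1/(2q)}+\sigma^2 c_E(2q)^{1/(2q)}\bigr)$, producing the $4^{2q}=16^q$ factor. Collecting all the estimates and applying the tower property to remove the conditioning yields exactly the stated bound.

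The main obstacle I anticipate is citing or deriving the Rosenthal-type moment inequality with the explicit constant $(2q-1)^q$, as the standard references contain several closely related versions with different constants. If the cleanest form is unavailable, an alternative is to expand $\mathbb{E}[(\sum_j\xi_j)^{2q}]$ combinatorially, use that mean-zero terms with a singleton index vanish, apply Hölder to each surviving monomial, and bound the number of partitions by $(2q-1)!!\le(2q-1)^q$; this route is self-contained and gives the same constant.
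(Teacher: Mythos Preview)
Your argument is correct and reaches exactly the stated bound, but it is genuinely different from the paper's proof. The paper does not reduce to entrywise scalar sums via $\|\mathbf{M}\|_2\le\|\mathbf{M}\|_F$; instead it works directly with the matrix, applying the matrix Chebyshev inequality together with the matrix polynomial Efron--Stein inequality of \citet{paulin2016efron} in Schatten $2q$-norm. The variance proxy involves $\|Y_jY_j^\top\|_{S_{2q}}^{2q}=\|Y_j\|_2^{4q}$, and the factor $n^{2q}$ arises from bounding $\|Y_j\|_2^2=\sum_i Y_{ij}^2$ by Minkowski in $L^{2q}$, rather than from the Frobenius-to-entrywise step you use. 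Both routes then finish with the same Minkowski-plus-$(a+b)^2\le 2(a^2+b^2)$ moment bound on $Y_{ij}=X_j(Z_i)+\sigma\mathbf{E}_{ij}$, which is why the constants coincide. Your approach is more elementary and self-contained, needing only scalar moment inequalities; the paper's approach imports heavier matrix-concentration machinery but treats the spectral norm directly.

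On your anticipated obstacle: the constant $(2q-1)^q$ in the Rosenthal-type step is indeed available, for instance via the sharp Marcinkiewicz--Zygmund/Burkholder bound $\mathbb{E}|\sum_j\xi_j|^{2q}\le(2q-1)^q\mathbb{E}[(\sum_j\xi_j^2)^q]$ followed by Minkowski in $L^q$, or equivalently Rio's inequality $\|\sum_j\xi_j\|_{2q}^2\le(2q-1)\sum_j\|\xi_j\|_{2q}^2$. Your combinatorial backup, however, has a small gap: the number of partitions of $\{1,\ldots,2q\}$ into blocks of size $\ge 2$ exceeds $(2q-1)!!$ already at $2q=4$ (four such partitions versus three perfect matchings), so that route would need a more careful count. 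Since the direct Rio/Burkholder citation delivers exactly $(2q-1)^q$, I would drop the combinatorial alternative.
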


\begin{proof}
Let us write the matrix $\mathbf{Y}$ in terms of its columns $\mathbf{Y}\equiv[Y_{1}|\cdots|Y_{p}]$
so that:

\begin{equation}
\mathbf{Y}\mathbf{Y}^{\top}=\sum_{j=1}^{p}Y_{j}Y_{j}^{\top}.\label{eq:Y^TY}
\end{equation}
Observe that under the model of section \ref{sec:Model}, conditional
on $(Z_{1},\ldots,Z_{n})$ the summands in (\ref{eq:Y^TY}) are independent
and as per lemma \ref{lem:conditiona_expectation_of_Y^TY}, the conditional
expectation of $\mathbf{Y}\mathbf{Y}^{\top}$ given $Z_{1},\ldots,Z_{n}$
is: $p\boldsymbol{\Phi}\boldsymbol{\Phi}^{\top}+p\sigma^{2}\mathbf{I}_{n}$.

The main tool we use from hereon is a direct combination of the matrix
Chebyshev inequality \citep[Prop. 3.1]{paulin2016efron} and the matrix
polynomial Effron-Stein inequality \citep[Thm 4.2]{paulin2016efron},
applied under the regular conditional distribution of $(Y_{1},\ldots,Y_{p})$
given $(Z_{1},\ldots,Z_{n})$. These inequalities taken together tell
us that, for any $q\geq1$, the following holds almost surely:
\begin{align*}
&\mathbb{P}\left(\left|\|p^{-1}\mathbf{Y}\mathbf{Y}^{\top}-\boldsymbol{\Phi}\boldsymbol{\Phi}^{\top}-\sigma^{2}\mathbf{I}_{n}\|_{2}\geq t\right|Z_{1},\ldots,Z_{n}\right) \\
& \leq\frac{1}{t^{2q}}\mathbb{E}\left[\left.\|p^{-1}\mathbf{Y}\mathbf{Y}^{\top}-\boldsymbol{\Phi}\boldsymbol{\Phi}^{\top}-\sigma^{2}\mathbf{I}_{n}\|_{S_{2q}}^{2q}\right|Z_{1},\ldots,Z_{n}\right]\\
 & \leq\frac{2^{q}(2q-1)^{q}}{t^{2q}}\mathbb{E}\left[\left.\|\boldsymbol{\Sigma}\|_{S_{q}}^{q}\right|Z_{1},\ldots,Z_{n}\right].
\end{align*}
Here $\|\cdot\|_{S_{q}}$ is the Schatten $q$-norm and $\boldsymbol{\Sigma}\in\mathbb{R}^{n\times n}$
is the variance proxy:
\begin{equation}
\boldsymbol{\Sigma}\coloneqq\frac{1}{2p^{2}}\sum_{j=1}^{p}\mathbb{E}\left[\left.\left(Y_{j}Y_{j}^{\top}-\tilde{Y}_{j}\tilde{Y}_{j}^{\top}\right)^{2}\right|Y_{j},Z_{1},\ldots,Z_{n}\right],\label{eq:variance_proxy_effron_Stein-1}
\end{equation}
where, conditional on $Z_{1},\ldots,Z_{n}$, $\tilde{Y}_{j}$ is an
independent copy of $Y_{j}$. For brevity in the remainder of the
proof we shall write $Z\equiv(Z_{1},\ldots,Z_{n})$, and to avoid
repetitive statements of ``almost surely'', every inequality involving
conditional expectations is to be understood as holding in the almost
sure sense.

We estimate:

\begin{align}
\mathbb{E}\left[\left.\|\boldsymbol{\Sigma}\|_{S_{q}}^{q}\right|Z\right]^{1/q} & =\frac{1}{2p^{2}}\mathbb{E}\left[\left.\left\Vert \sum_{j=1}^{p}\mathbb{E}\left[\left.\left(Y_{j}Y_{j}^{\top}-\tilde{Y}_{j}\tilde{Y}_{j}^{\top}\right)^{2}\right|Y_{j},Z\right]\right\Vert _{S_{q}}^{q}\right|Z\right]^{1/q}\nonumber \\
 & \leq\frac{1}{2p^{2}}\sum_{j=1}^{p}\mathbb{E}\left[\left.\left\Vert \mathbb{E}\left[\left.\left(Y_{j}Y_{j}^{\top}-\tilde{Y}_{j}\tilde{Y}_{j}^{\top}\right)^{2}\right|Y_{j},Z\right]\right\Vert _{S_{q}}^{q}\right|Z\right]^{1/q}\label{eq:apply_matrix_mink-1}\\
 & \leq\frac{1}{2p^{2}}\sum_{j=1}^{p}\mathbb{E}\left[\left.\left\Vert \left(Y_{j}Y_{j}^{\top}-\tilde{Y}_{j}\tilde{Y}_{j}^{\top}\right)^{2}\right\Vert _{S_{q}}^{q}\right|Z\right]^{1/q}\label{eq:apply_norm_convexity-1}\\
 & =\frac{1}{2p^{2}}\sum_{j=1}^{p}\mathbb{E}\left[\left.\left\Vert Y_{j}Y_{j}^{\top}-\tilde{Y}_{j}\tilde{Y}_{j}^{\top}\right\Vert _{S_{2q}}^{2q}\right|Z\right]^{1/q}\nonumber \\
 & \leq\frac{1}{2p^{2}}\sum_{j=1}^{p}\left(2\mathbb{E}\left[\left.\left\Vert Y_{j}Y_{j}^{\top}\right\Vert _{S_{2q}}^{2q}\right|Z\right]^{1/2q}\right)^{2}\label{eq:apply_matrix_mink_again-1}\\
 & =\frac{2}{p^{2}}\sum_{j=1}^{p}\mathbb{E}\left[\left.\left\Vert Y_{j}Y_{j}^{\top}\right\Vert _{S_{2q}}^{2q}\right|Z\right]^{1/q}\nonumber 
\end{align}
Here (\ref{eq:apply_matrix_mink-1}) holds by the second claim of
lemma \ref{lem:matrix_minkowski}; \ref{eq:apply_norm_convexity-1}
holds by first claim of lemma \ref{lem:matrix_minkowski} combined
with the fact that $x\mapsto x^{q}$ is convex for $x\geq0$ (recall
$q\geq1$); (\ref{eq:apply_matrix_mink_again-1}) holds by lemma \ref{lem:matrix_minkowski}
and the fact that $\tilde{Y}_{j}$ and $Y_{j}$ are equal in distribution.

By definition of the Schatten-$q$ norm, $\left\Vert Y_{j}Y_{j}^{\top}\right\Vert _{S_{2q}}^{2q}=\sum_{k=1}^{n}\lambda_{k}^{2q}\left(Y_{j}Y_{j}^{\top}\right)$,
where $\lambda_{1}\left(Y_{j}Y_{j}^{\top}\right)=\|Y_{j}\|_{2}^{2}$
and $\lambda_{k}\left(Y_{1}Y_{1}^{\top}\right)=0$ for $k=2,\ldots,n$.
Thus: 
\begin{equation}
\left\Vert Y_{j}Y_{j}^{\top}\right\Vert _{S_{2q}}^{2q}=\|Y_{j}\|_{2}^{4q}=\left|\sum_{i=1}^{n}\left(X_{j}(Z_{i})+\sigma\mathbf{E}_{ij}\right)^{2}\right|^{2q}.\label{eq:schatten_q_W-1}
\end{equation}
By two applications of Minkowski's inequality,
\begin{align*}
\mathbb{E}\left[\left.\left\Vert Y_{j}Y_{j}^{\top}\right\Vert _{S_{2q}}^{2q}\right|Z\right]^{1/2q} & \leq\sum_{i=1}^{n}\mathbb{E}\left[\left.\left|X_{j}(Z_{i})+\sigma\mathbf{E}_{ij}\right|^{4q}\right|Z\right]^{1/2q}\\
 & \leq2\sum_{i=1}^{n}\mathbb{E}\left(\left[\left.\left|X_{j}(Z_{i})\right|^{4q}\right|Z\right]^{1/2q}+\mathbb{E}\left[\left.\left|\sigma\mathbf{E}_{ij}\right|^{4q}\right|Z\right]^{1/2q}\right)\\
 & \leq2n\left(\sup_{l\geq1}\sup_{z\in\mathcal{Z}}\mathbb{E}\left[\left|X_{l}(z)\right|^{4q}\right]^{1/2q}+\sigma^{2}\sup_{i\geq1,l\geq1}\mathbb{E}\left[\left|\mathbf{E}_{il}\right|^{4q}\right]^{1/2q}\right),
\end{align*}
where the final inequality uses the facts that $X_{j}$, $Z$ and
$\mathbf{E}$ are independent.

Combining the above estimates we find:
\begin{align*}
 & \mathbb{P}\left(\left|\|p^{-1}\mathbf{Y}\mathbf{Y}^{\top}-\boldsymbol{\Phi}\boldsymbol{\Phi}^{\top}-\sigma^{2}\mathbf{I}_{n}\|_{2}\geq t\right|Z_{1},\ldots,Z_{n}\right)\\
 & \leq\frac{2^{q}(2q-1)^{q}}{t^{2q}}\left(\frac{2}{p}\right)^{q}4^{q}n^{2q}\left(\sup_{j\geq 1}\sup_{z\in\mathcal{Z}}\mathbb{E}\left[\left|X_{j}(z)\right|^{4q}\right]^{1/2q}+\sigma^{2}\sup_{i\geq1,j\geq 1}\mathbb{E}\left[\left|\mathbf{E}_{ij}\right|^{4q}\right]^{1/2q}\right)^{2q}\\
 & =(16)^{q}(2q-1)^{q}\frac{n^{2q}}{t^{2q}}\frac{1}{p^{q}}\left(\sup_{j\geq1}\sup_{z\in\mathcal{Z}}\mathbb{E}\left[\left|X_{j}(z)\right|^{4q}\right]^{1/2q}+\sigma^{2}\sup_{i\geq1,j\geq 1}\mathbb{E}\left[\left|\mathbf{E}_{ij}\right|^{4q}\right]^{1/2q}\right)^{2q},
\end{align*}
from which the result follows by the tower property of conditional
expectation.
\end{proof}
\begin{prop}
\label{prop:eigenvalue_perturbation}Assume 
\ref{ass:cont_covar}, \ref{ass:finite rank} and \ref{ass:moments} with some $q\geq1$.
For any $\delta,\epsilon\in(0,1)$, if 
\[
n\geq\frac{3\sigma^{2}}{\epsilon c_{\lambda}^{\mathrm{min}}}\vee\left[\log\left(\frac{1}{\delta}\right)+\log(4r)\right]\frac{1}{\epsilon^{2}}\frac{2((c_{\lambda}^{\mathrm{max}})^{2}+\epsilon c_{\lambda}^{\mathrm{max}}c_{\lambda}^{\mathrm{min}}/9)}{(c_{\lambda}^{\mathrm{min}})^{2}/9},
\]
and 
\[
p\geq\frac{1}{\delta^{1/q}\epsilon^{2}}2^{1/q}16(2q-1)\frac{9}{(c_{\lambda}^{\mathrm{min}})^{2}}\left(c_{X}(2q)^{1/2q}+\sigma^{2}c_{E}(2q)^{1/2q}\right)^{2}
\]
where $c_{X}$, $c_{E}$ are as in lemma \ref{lem:Y-phi-I_conc} and $c_{\lambda}^{\mathrm{max}}$, $c_{\lambda}^{\mathrm{min}}$ are as in lemma \ref{lem:c_lambda_constants},
then 
\[
\mathbb{P}\left(\bigcap_{i=1}^{n}B_{\mathbf{Y},i}(\epsilon)\cap\bigcap_{i=1}^{r}B_{\boldsymbol{\Phi},i}(\epsilon)\right)\geq1-\delta.
\]
\end{prop}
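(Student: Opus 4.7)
The plan is to combine the two spectral-norm concentration inequalities already in hand --- lemma \ref{lem:phiphi^T_conc} for $n^{-1}\boldsymbol{\Phi}^{\top}\boldsymbol{\Phi}$ and lemma \ref{lem:Y-phi-I_conc} for $p^{-1}\mathbf{Y}\mathbf{Y}^{\top}-\boldsymbol{\Phi}\boldsymbol{\Phi}^{\top}-\sigma^{2}\mathbf{I}_{n}$ --- with Weyl's inequality, and then close with a union bound. Two preliminary observations are useful: (i) by lemma \ref{lem:AA^T}, $\boldsymbol{\Phi}^{\top}\boldsymbol{\Phi}\in\mathbb{R}^{r\times r}$ and $\boldsymbol{\Phi}\boldsymbol{\Phi}^{\top}\in\mathbb{R}^{n\times n}$ share the same nonzero eigenvalues, and $\lambda_{i}(\boldsymbol{\Phi}\boldsymbol{\Phi}^{\top})=0$ for $i>r$; (ii) $L_{2}(\mu)$-orthonormality of $(u_{k}^{f})_{k\geq1}$ gives $\mathbb{E}[\phi(Z_{1})\phi(Z_{1})^{\top}]=\mathrm{diag}(\lambda_{1}^{f},\ldots,\lambda_{r}^{f})$, hence $n^{-1}\mathbb{E}[\boldsymbol{\Phi}^{\top}\boldsymbol{\Phi}]=\mathrm{diag}(\lambda_{1}^{f},\ldots,\lambda_{r}^{f})$.

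First, I would apply lemma \ref{lem:phiphi^T_conc} with $t=\epsilon\lambda_{r}^{f}/3$. Weyl's inequality yields $|n^{-1}\lambda_{i}(\boldsymbol{\Phi}\boldsymbol{\Phi}^{\top})-\lambda_{i}^{f}|\leq\epsilon\lambda_{r}^{f}/3\leq\epsilon\lambda_{i}^{f}/3$ for every $i\leq r$, i.e.\ the event $\bigcap_{i=1}^{r}B_{\boldsymbol{\Phi},i}(\epsilon)$; imposing that the Bernstein tail be at most $\delta/2$ and solving for $n$ reproduces exactly the second term in the stated lower bound on $n$ (with $\log(4r/\delta)=\log(1/\delta)+\log(4r)$ and $t=\epsilon\lambda_{r}^{f}/3$ plugged into the denominator). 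Next, I would apply lemma \ref{lem:Y-phi-I_conc} with $t=\epsilon n\lambda_{r}^{f}/3$ to obtain $A_{1}(\epsilon\lambda_{r}^{f}/3)$; inverting the polynomial tail to be at most $\delta/2$ reproduces the stated lower bound on $p$ after cancelling the factor $n^{2q}$ against $t^{2q}$.

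On the intersection of these two good events, Weyl's inequality applied to $p^{-1}\mathbf{Y}\mathbf{Y}^{\top}$ and $\boldsymbol{\Phi}\boldsymbol{\Phi}^{\top}+\sigma^{2}\mathbf{I}_{n}$ --- whose ordered eigenvalues are $\lambda_{i}(\boldsymbol{\Phi}\boldsymbol{\Phi}^{\top})+\sigma^{2}$ for $i\leq r$ and $\sigma^{2}$ for $i>r$ --- gives
\[
\bigl|n^{-1}\lambda_{i}(p^{-1}\mathbf{Y}\mathbf{Y}^{\top})-n^{-1}\lambda_{i}(\boldsymbol{\Phi}\boldsymbol{\Phi}^{\top})-\sigma^{2}/n\bigr|\leq\epsilon\lambda_{r}^{f}/3,\qquad i=1,\ldots,n.
\]
For $i\leq r$, the triangle inequality together with the Step~1 bound and the hypothesis $n\geq3\sigma^{2}/(\epsilon\lambda_{r}^{f})$ (which forces $\sigma^{2}/n\leq\epsilon\lambda_{r}^{f}/3$) sums three contributions each bounded by $\epsilon\lambda_{r}^{f}/3$ to yield $|n^{-1}\lambda_{i}(p^{-1}\mathbf{Y}\mathbf{Y}^{\top})-\lambda_{i}^{f}|\leq\epsilon\lambda_{r}^{f}\leq\epsilon\lambda_{i}^{f}$, i.e.\ $B_{\mathbf{Y},i}(\epsilon)$; for $i>r$, the same reasoning with $\lambda_{i}(\boldsymbol{\Phi}\boldsymbol{\Phi}^{\top})=0$ gives $n^{-1}\lambda_{i}(p^{-1}\mathbf{Y}\mathbf{Y}^{\top})\leq\sigma^{2}/n+\epsilon\lambda_{r}^{f}/3\leq\epsilon\lambda_{r}^{f}$, again $B_{\mathbf{Y},i}(\epsilon)$. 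A final union bound over the two bad events completes the argument with probability at least $1-\delta$.

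There is no serious technical obstacle here once the two concentration lemmas are granted; the only real work is bookkeeping. The key conceptual point is to partition a total error budget of $\epsilon\lambda_{r}^{f}$ into three equal pieces --- one for the deterministic noise-floor shift $\sigma^{2}/n$, one for the $\boldsymbol{\Phi}$-fluctuation, and one for the $\mathbf{Y}$-fluctuation --- and then verify that inverting the Bernstein and Effron--Stein tails at tolerance $\epsilon\lambda_{r}^{f}/3$ gives back exactly the displayed lower bounds on $n$ and $p$.
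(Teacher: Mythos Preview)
Your proposal is correct and follows essentially the same route as the paper: a three-way split of the error budget $\epsilon\lambda_{r}^{f}$ (one third each for $\sigma^{2}/n$, the $\boldsymbol{\Phi}$-fluctuation via lemma \ref{lem:phiphi^T_conc}, and the $\mathbf{Y}$-fluctuation via lemma \ref{lem:Y-phi-I_conc}), Weyl's inequality together with lemma \ref{lem:AA^T} and the orthonormality identity $n^{-1}\mathbb{E}[\boldsymbol{\Phi}^{\top}\boldsymbol{\Phi}]=\mathrm{diag}(\lambda_{1}^{f},\ldots,\lambda_{r}^{f})$, and a final union bound. The bookkeeping you describe --- inverting the Bernstein tail at $t=\epsilon\lambda_{r}^{f}/3$ and the Efron--Stein tail at $t=n\epsilon\lambda_{r}^{f}/3$ --- reproduces exactly the displayed conditions on $n$ and $p$.
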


\begin{proof}
Throughout the proof we shall adopt the convention $\lambda_{i}^{f}\coloneqq0$
for all $r+1\leq i\leq n$ and, in several places, we shall use the
fact that $\lambda_{i}(\boldsymbol{\Phi}\boldsymbol{\Phi}^{\top})=0$
for $r+1\leq i\leq n$ which holds since $\boldsymbol{\Phi}\in\mathbb{R}^{n\times r}$.

Consider the following decomposition for any $1\leq i\leq n$:
\begin{align*}
\left|\frac{1}{n}\lambda_{i}(p^{-1}\mathbf{Y}\mathbf{Y}^{\top})-\lambda_{i}^{f}\right| & \leq\left|\frac{1}{n}\lambda_{i}(p^{-1}\mathbf{Y}\mathbf{Y}^{\top})-\frac{1}{n}\lambda_{i}(\boldsymbol{\Phi}\boldsymbol{\Phi}^{\top}+\sigma^{2}\mathbf{I}_{n})\right|\\
 & +\left|\frac{1}{n}\lambda_{i}(\boldsymbol{\Phi}\boldsymbol{\Phi}^{\top}+\sigma^{2}\mathbf{I}_{n})-\frac{1}{n}\lambda_{i}(\boldsymbol{\Phi}\boldsymbol{\Phi}^{\top})\right|\\
 & +\left|\frac{1}{n}\lambda_{i}(\boldsymbol{\Phi}\boldsymbol{\Phi}^{\top})-\lambda_{i}^{f}\right|.
\end{align*}
Combining this decomposition with Weyl's inequality; the facts that
for $1\leq i\leq r$, $\mathbb{E}[\boldsymbol{\Phi}^{\top}\boldsymbol{\Phi}]_{ii}=n\lambda_{i}^{f}$
and $\mathbb{E}[\boldsymbol{\Phi}^{\top}\boldsymbol{\Phi}]_{ij}=0$
for $j\neq i$, hence $\lambda_{i}^{f}=\lambda_{i}(n^{-1}\mathbb{E}[\boldsymbol{\Phi}^{\top}\boldsymbol{\Phi}])$;
and by lemma \ref{lem:AA^T}, $\lambda_{i}(\boldsymbol{\Phi}\boldsymbol{\Phi}^{\top})=\lambda_{i}(\boldsymbol{\Phi}^{\top}\boldsymbol{\Phi})$;
whilst for $i\geq r+1$, $\lambda_{i}(\boldsymbol{\Phi}\boldsymbol{\Phi}^{\top})=\lambda_{i}^{f}=0$;
we obtain:
\begin{align}
\max_{1\leq i\leq n}\left|\frac{1}{n}\lambda_{i}(p^{-1}\mathbf{Y}\mathbf{Y}^{\top})-\lambda_{i}^{f}\right| & \leq\frac{1}{n}\|p^{-1}\mathbf{Y}\mathbf{Y}^{\top}-\boldsymbol{\Phi}\boldsymbol{\Phi}^{\top}-\sigma^{2}\mathbf{I}_{n}\|_{2}\nonumber \\
 & +\frac{\sigma^{2}}{n}\nonumber \\
 & +\|n^{-1}\boldsymbol{\Phi}^{\top}\boldsymbol{\Phi}-n^{-1}\mathbb{E}[\boldsymbol{\Phi}^{\top}\boldsymbol{\Phi}]\|_{2}\label{eq:weyl_Y_F}
\end{align}
and 
\[
\max_{1\leq i\leq n}\left|\frac{1}{n}\lambda_{i}(\boldsymbol{\Phi}\boldsymbol{\Phi}^{\top})-\lambda_{i}^{f}\right|\leq\|n^{-1}\boldsymbol{\Phi}^{\top}\boldsymbol{\Phi}-n^{-1}\mathbb{E}[\boldsymbol{\Phi}^{\top}\boldsymbol{\Phi}]\|_{2}.
\]
Now fix any $\epsilon\in(0,1)$. We have 
\begin{align*}
 & \mathbb{P}\left(\bigcap_{i=1}^{n}B_{\mathbf{Y},i}(\epsilon)\cap\bigcap_{i=1}^{r}B_{\boldsymbol{\Phi},i}(\epsilon)\right)\\
 & \geq\mathbb{P}\left(\bigcap_{i=1}^{n}\left\{ \left|\frac{1}{n}\lambda_{i}(p^{-1}\mathbf{Y}\mathbf{Y}^{\top})-\lambda_{i}^{f}\right|<\epsilon\lambda_{r}^{f}\right\} \cap\left\{ \left|\frac{1}{n}\lambda_{i}(\boldsymbol{\Phi}\boldsymbol{\Phi}^{\top})-\lambda_{i}^{f}\right|<\epsilon\lambda_{r}^{f}\right\} \right)\\
 & \geq1-\mathbb{P}\left(\frac{1}{n}\|p^{-1}\mathbf{Y}\mathbf{Y}^{\top}-\boldsymbol{\Phi}\boldsymbol{\Phi}^{\top}-\sigma^{2}\mathbf{I}_{n}\|_{2}\geq\epsilon\lambda_{r}^{f}/3\right)-\mathbb{P}\left(\|n^{-1}\boldsymbol{\Phi}^{\top}\boldsymbol{\Phi}-n^{-1}\mathbb{E}[\boldsymbol{\Phi}^{\top}\boldsymbol{\Phi}]\|_{2}\geq\epsilon\lambda_{r}^{f}/3\right)\\
 & \geq1-(16)^{q}(2q-1)^{q}\frac{1}{(\epsilon c_{\lambda}^{\mathrm{min}}/3)^{2q}}\frac{1}{p^{q}}\left(c_{X}(2q)^{1/2q}+\sigma^{2}c_{E}(2q)^{1/2q}\right)^{2q}-2r\exp\left(\frac{-(\epsilon/3)^{2}(c_{\lambda}^{\mathrm{min}})^{2}n/2}{(c_{\lambda}^{\mathrm{max}})^{2}+c_{\lambda}^{\mathrm{max}}\epsilon c_{\lambda}^{\mathrm{min}}/9}\right)
\end{align*}
where the second inequality holds by using $\lambda_{r}^{f}\leq\lambda_{i}^{f}$
for $i=1,\ldots r,$ together with (\ref{eq:weyl_Y_F}) and the condition
of the proposition $n\geq 3\sigma^{2}/(\epsilon\lambda_{r}^{f})$;
and the third inequality holds by applying lemma \ref{lem:phiphi^T_conc}
and lemma \ref{lem:Y-phi-I_conc} and using $\lambda_r^f\geq c_{\lambda}^{\mathrm{min}}$.

The proof is completed by re-arranging each of the two following inequalities:
\[
\delta/2\geq(16)^{q}(2q-1)^{q}\frac{1}{(\epsilon c_{\lambda}^{\mathrm{min}}/3)^{2q}}\frac{1}{p^{q}}\left(c_{X}(2q)^{1/2q}+\sigma^{2}c_{E}(2q)^{1/2q}\right)^{2q},
\]
\[
\frac{\delta}{2}\geq2r\exp\left(\frac{-(\epsilon/3)^{2}(c_{\lambda}^{\mathrm{min}})^{2}n/2}{(c_{\lambda}^{\mathrm{max}})^{2}+c_{\lambda}^{\mathrm{max}}\epsilon c_{\lambda}^{\mathrm{min}}/9}\right).
\]
\end{proof}
\begin{lem}
\label{lem:matrix_minkowski}For any $m_{1},m_{2}\geq1$ and any matrix
norm $\|\cdot\|_{\star}$ on $\mathbb{R}^{m_{1}\times m_{2}}$, $\|\cdot\|_{\star}$
is convex. For any random $\mathbf{A},\mathbf{B}\in\mathbb{R}^{m_{1}\times m_{2}}$
and any $1\leq q<\infty$ such that $\mathbb{E}\left[\|\mathbf{A}\|_{\star}^{q}\right]\vee\mathbb{E}\left[\|\mathbf{B}\|_{\star}^{q}\right]<\infty$,
$\mathbb{E}\left[\|\mathbf{A}+\mathbf{B}\|_{\star}^{q}\right]^{1/q}\leq\mathbb{E}\left[\|\mathbf{A}\|_{\star}^{q}\right]^{1/q}+\mathbb{E}\left[\|\mathbf{B}\|_{\star}^{q}\right]^{1/q}$.
\end{lem}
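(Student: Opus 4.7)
The lemma decomposes cleanly into two independent parts, both of which reduce to standard facts; my plan is simply to invoke the defining axioms of a matrix norm and the scalar Minkowski inequality.

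For the first claim, I would argue that convexity is an immediate consequence of the two defining properties of any norm: absolute homogeneity and the triangle inequality. Specifically, for $\alpha\in[0,1]$ and $\mathbf{A},\mathbf{B}\in\mathbb{R}^{m_1\times m_2}$,
\begin{equation*}
\|\alpha\mathbf{A}+(1-\alpha)\mathbf{B}\|_{\star}\leq\|\alpha\mathbf{A}\|_{\star}+\|(1-\alpha)\mathbf{B}\|_{\star}=\alpha\|\mathbf{A}\|_{\star}+(1-\alpha)\|\mathbf{B}\|_{\star}.
\end{equation*}

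For the second claim, the plan is to reduce the matrix-valued statement to a scalar $L^q$ Minkowski inequality by taking norms pointwise. First apply the triangle inequality pointwise in $\omega$ to get $\|\mathbf{A}+\mathbf{B}\|_{\star}\leq\|\mathbf{A}\|_{\star}+\|\mathbf{B}\|_{\star}$ as real-valued random variables, then use monotonicity of $x\mapsto x^q$ on $[0,\infty)$ to deduce $\mathbb{E}[\|\mathbf{A}+\mathbf{B}\|_{\star}^{q}]^{1/q}\leq\mathbb{E}[(\|\mathbf{A}\|_{\star}+\|\mathbf{B}\|_{\star})^{q}]^{1/q}$. Now the classical Minkowski inequality in $L^{q}(\Omega,\mathbb{P})$ applied to the nonnegative scalar random variables $\|\mathbf{A}\|_{\star}$ and $\|\mathbf{B}\|_{\star}$ (which are measurable since $\|\cdot\|_{\star}$ is continuous, hence Borel, and have finite $q$th moments by hypothesis) yields
\begin{equation*}
\mathbb{E}[(\|\mathbf{A}\|_{\star}+\|\mathbf{B}\|_{\star})^{q}]^{1/q}\leq\mathbb{E}[\|\mathbf{A}\|_{\star}^{q}]^{1/q}+\mathbb{E}[\|\mathbf{B}\|_{\star}^{q}]^{1/q}.
\end{equation*}
Chaining the two inequalities gives the result.

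There is no real obstacle here: the lemma is essentially a bookkeeping statement ensuring the authors can legitimately apply Minkowski in $L^q$ to a matrix-norm of a random matrix, as is done repeatedly via the second claim in the proof of Lemma \ref{lem:Y-phi-I_conc}. The only point that warrants a brief mention is measurability of $\omega\mapsto\|\mathbf{A}(\omega)\|_{\star}$, but this follows from continuity of $\|\cdot\|_{\star}$ on the finite-dimensional space $\mathbb{R}^{m_1\times m_2}$ together with measurability of $\mathbf{A}$.
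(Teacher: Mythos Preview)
Your proof is correct. The first claim (convexity) is handled identically to the paper, via absolute homogeneity and the triangle inequality. For the second claim the approaches diverge slightly: you reduce immediately to the scalar Minkowski inequality in $L^q(\Omega,\mathbb{P})$ applied to the nonnegative random variables $\|\mathbf{A}\|_\star$ and $\|\mathbf{B}\|_\star$, whereas the paper reproves Minkowski from scratch in this setting---writing $\|\mathbf{A}+\mathbf{B}\|_\star^q=\|\mathbf{A}+\mathbf{B}\|_\star\cdot\|\mathbf{A}+\mathbf{B}\|_\star^{q-1}$, applying the triangle inequality, then H\"older. Your route is shorter and perfectly legitimate since the scalar $L^q$ Minkowski inequality is standard; the paper's version is self-contained but does essentially the same work one level down. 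Neither approach has any real advantage over the other for the purposes of Lemma~\ref{lem:Y-phi-I_conc}.
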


\begin{proof}
The convexity holds due to the fact that any norm must be absolutely
homogeneous and satisfy the triangle inequality. For the second claim,
since $\mathbb{E}\left[\|\mathbf{A}\|_{\star}^{q}\right]\vee\mathbb{E}\left[\|\mathbf{B}\|_{\star}^{q}\right]<\infty$
we have the preliminary estimate $\mathbb{E}\left[\|\mathbf{A}+\mathbf{B}\|_{\star}^{q}\right]\leq2^{q-1}(\mathbb{E}\left[\|\mathbf{A}\|_{\star}^{q}\right]+\mathbb{E}\left[\|\mathbf{B}\|_{\star}^{q}\right])<\infty$.
If $\mathbb{E}\left[\|\mathbf{A}+\mathbf{B}\|_{\star}^{q}\right]=0$
then the desired inequality is trivial. So suppose $\mathbb{E}\left[\|\mathbf{A}+\mathbf{B}\|_{\star}^{q}\right]>0$.
Using the triangle inequality for the norm and then Holder's inequality
for the expectation,
\begin{align*}
\mathbb{E}\left[\|\mathbf{A}+\mathbf{B}\|_{\star}^{q}\right] & =\mathbb{E}\left[\|\mathbf{A}+\mathbf{B}\|_{\star}\|\mathbf{A}+\mathbf{B}\|_{\star}^{q-1}\right]\\
 & \leq\mathbb{E}\left[\left(\|\mathbf{A}\|_{\star}+\|\mathbf{B}\|_{\star}\right)\|\mathbf{A}+\mathbf{B}\|_{\star}^{q-1}\right]\\
 & =\mathbb{E}\left[\|\mathbf{A}\|_{\star}\|\mathbf{A}+\mathbf{B}\|_{\star}^{q-1}\right]+\mathbb{E}\left[\|\mathbf{B}\|_{\star}\|\mathbf{A}+\mathbf{B}\|_{\star}^{q-1}\right]\\
 & \leq\left(\mathbb{E}\left[\|\mathbf{A}\|_{\star}^{q}\right]^{1/q}+\mathbb{E}\left[\|\mathbf{B}\|_{\star}^{q}\right]^{1/q}\right)\mathbb{E}\left[\|\mathbf{A}+\mathbf{B}\|_{\star}^{(q-1)(\frac{q}{q-1})}\right]^{1-\frac{1}{q}}\\
 & =\left(\mathbb{E}\left[\|\mathbf{A}\|_{\star}^{q}\right]^{1/q}+\mathbb{E}\left[\|\mathbf{B}\|_{\star}^{q}\right]^{1/q}\right)\frac{\mathbb{E}\left[\|\mathbf{A}+\mathbf{B}\|_{\star}^{q}\right]}{\mathbb{E}\left[\|\mathbf{A}+\mathbf{B}\|_{\star}^{q}\right]^{1/q}}.
\end{align*}
The proof is completed by multiplying both sides by $\mathbb{E}\left[\|\mathbf{A}+\mathbf{B}\|_{\star}^{q}\right]^{1/q}/\mathbb{E}\left[\|\mathbf{A}+\mathbf{B}\|_{\star}^{q}\right]$.
\end{proof}
\begin{lem}
\label{lem:two_to_infty}Assume \ref{ass:cont_covar},  \ref{ass:finite rank},
and \ref{ass:moments} with some $q\geq1$. Let $U_{j}$ denote the
$j$th column of $\mathbf{U}_{\boldsymbol{\Phi}}.$ Then there exists
a constant $b(q)$ depending only on $q$ such that for any $t>0$,
\begin{align*}
 & \mathbb{P}\left(\max_{j=1,\ldots,r}\|(p^{-1}\mathbf{Y}\mathbf{Y}^{\top}-\boldsymbol{\Phi}\boldsymbol{\Phi}^{\top}-\sigma^{2}\mathbf{I}_{n})U_{j}\|_{\infty}\leq t\right)\\
 & \geq1-\frac{n^{1+q}r}{t^{2q}p^{q}}b(2q)2^{6q-1}\left(\max_{j=1,\ldots,p}\sup_{z\in\mathcal{Z}}\mathbb{E}\left[|X_{j}(z)|^{4q}\right]+\sigma^{4q}\max_{i=1,\ldots,n,j=1,\ldots,p}\mathbb{E}[|\mathbf{E}_{ij}|^{4q}]\right).
\end{align*}
\end{lem}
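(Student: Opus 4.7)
The plan is to condition on $Z\coloneqq(Z_{1},\ldots,Z_{n})$, decompose $(\mathbf{M}U_{j})_{i}$ as $p^{-1}$ times a sum of $p$ conditionally independent mean-zero summands, apply a Marcinkiewicz--Zygmund (Rosenthal) moment inequality, and conclude via Markov and a union bound over the $nr$ choices of $(i,j)$. Here $\mathbf{M}\coloneqq p^{-1}\mathbf{Y}\mathbf{Y}^{\top}-\boldsymbol{\Phi}\boldsymbol{\Phi}^{\top}-\sigma^{2}\mathbf{I}_{n}$, and $U_{j}$ is $Z$-measurable since it is an eigenvector of $\boldsymbol{\Phi}\boldsymbol{\Phi}^{\top}$.

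Setting $T_{ij'}\coloneqq\sum_{k}Y_{kj'}(U_{j})_{k}$, a direct computation using $\mathbf{Y}\mathbf{Y}^{\top}=\sum_{j'}Y_{\cdot j'}Y_{\cdot j'}^{\top}$ combined with $\mathbb{E}[Y_{ij'}T_{ij'}\mid Z]=\sum_{k}\mathbb{E}[X_{j'}(Z_{i})X_{j'}(Z_{k})\mid Z](U_{j})_{k}+\sigma^{2}(U_{j})_{i}$ yields
\begin{equation*}
(\mathbf{M}U_{j})_{i}=p^{-1}\sum_{j'=1}^{p}W_{ij'},\qquad W_{ij'}\coloneqq Y_{ij'}T_{ij'}-\mathbb{E}[Y_{ij'}T_{ij'}\mid Z].
\end{equation*}
The crucial structural point is that $W_{ij'}$ depends, apart from $Z$, only on the field $X_{j'}$ and the $j'$-th column of $\mathbf{E}$, so the independence of $X_{1},\ldots,X_{p}$ together with the column-independence of $\mathbf{E}$ makes $W_{i,1},\ldots,W_{i,p}$ conditionally independent given $Z$, each with conditional mean zero. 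A Marcinkiewicz--Zygmund inequality with constant $b(2q)$ combined with the power-mean bound $(\sum_{j'}a_{j'})^{q}\leq p^{q-1}\sum_{j'}a_{j'}^{q}$ then yields
\begin{equation*}
\mathbb{E}\bigl[|(\mathbf{M}U_{j})_{i}|^{2q}\,\big|\,Z\bigr]\leq b(2q)\,p^{-q}\,\max_{j'=1,\ldots,p}\mathbb{E}\bigl[|W_{ij'}|^{2q}\,\big|\,Z\bigr].
\end{equation*}

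It remains to bound the $W_{ij'}$ moment. Centering gives $\mathbb{E}[|W_{ij'}|^{2q}\mid Z]\leq 2^{2q}\mathbb{E}[|Y_{ij'}T_{ij'}|^{2q}\mid Z]$, and Cauchy--Schwarz reduces this to controlling $\mathbb{E}[|Y_{ij'}|^{4q}\mid Z]$ and $\mathbb{E}[|T_{ij'}|^{4q}\mid Z]$ separately. The first is handled via $|Y_{ij'}|^{4q}\leq 2^{4q-1}(|X_{j'}(Z_{i})|^{4q}+\sigma^{4q}|\mathbf{E}_{ij'}|^{4q})$ and assumption \ref{ass:moments}. For the second, split $T_{ij'}=T_{1}+\sigma T_{2}$ with $T_{1}\coloneqq\sum_{k}X_{j'}(Z_{k})(U_{j})_{k}$ and $T_{2}\coloneqq\sum_{k}\mathbf{E}_{kj'}(U_{j})_{k}$. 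Cauchy--Schwarz against $\|U_{j}\|_{2}=1$ bounds $|T_{1}|^{4q}\leq\bigl(\sum_{k}X_{j'}(Z_{k})^{2}\bigr)^{2q}\leq n^{2q-1}\sum_{k}X_{j'}(Z_{k})^{4q}$, whence $\mathbb{E}[|T_{1}|^{4q}\mid Z]\leq n^{2q}c_{X}(2q)$, while the entries $\mathbf{E}_{1j'},\ldots,\mathbf{E}_{nj'}$ lie in distinct columns of the $p\times n$ matrix $\mathbf{E}$ and are therefore independent, so a second application of Marcinkiewicz--Zygmund together with $\sum_{k}(U_{j})_{k}^{2}=1$ and Jensen gives $\mathbb{E}[|T_{2}|^{4q}\mid Z]\lesssim c_{E}(2q)$. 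Tracking the factors of $2$ through centering and both Minkowski-type decompositions collapses the constant to $2^{6q-1}$, producing $\mathbb{E}[|W_{ij'}|^{2q}\mid Z]\leq 2^{6q-1}n^{q}\bigl(c_{X}(2q)+\sigma^{4q}c_{E}(2q)\bigr)$. Taking expectations, applying Markov at level $t$, and union-bounding over $i\in\{1,\ldots,n\}$ and $j\in\{1,\ldots,r\}$ produces the advertised $n^{1+q}r/(t^{2q}p^{q})$ factor.

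The main obstacle is the $T_{2}$ estimate: one must recognise that although the model only asserts pairwise uncorrelatedness \emph{within} rows of $\mathbf{E}$, the summation index $k$ in $T_{2}$ ranges over \emph{columns} of the $p\times n$ noise matrix, where the genuine independence hypothesis applies, so a moment inequality for sums of independent random variables is legitimate. Pinning down this row/column convention is the only non-routine step; everything else is bookkeeping of constants feeding into $b(2q)$ and the powers of~$2$.
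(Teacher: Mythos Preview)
Your overall strategy matches the paper's: condition on $Z$, write each entry of $\mathbf{M}U_j$ as $p^{-1}\sum_{j'}W_{ij'}$ with the $W_{ij'}$ conditionally independent across $j'$, apply Marcinkiewicz--Zygmund, then Markov and a union bound over the $nr$ pairs $(i,j)$. The centering bound $\mathbb{E}[|W_{ij'}|^{2q}\mid Z]\le 2^{2q}\mathbb{E}[|Y_{ij'}T_{ij'}|^{2q}\mid Z]$ followed by Cauchy--Schwarz into $\mathbb{E}[|Y_{ij'}|^{4q}]$ and $\mathbb{E}[|T_{ij'}|^{4q}]$ is also essentially the paper's route, just with $T$ split into $T_1+\sigma T_2$ earlier rather than bounding $|T_{ij'}|\le\|Y_{\cdot j'}\|_2$ and splitting $Y$ into $X+\sigma E$ only at the end.

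There is, however, a genuine gap in your treatment of $T_2$. You invoke column-independence of $\mathbf{E}$ twice, in incompatible ways. For the conditional independence of $W_{i,1},\ldots,W_{i,p}$ you need the noise vectors $(\mathbf{E}_{1j'},\ldots,\mathbf{E}_{nj'})$ to be independent across the field index $j'$; this is indeed the content of the model assumption, and the paper's own proof relies on it. But then for $T_2=\sum_k\mathbf{E}_{kj'}(U_j)_k$ the summation is over the data index $k$ with $j'$ fixed, and the model grants only \emph{pairwise uncorrelatedness} across $k$, not independence. A second Marcinkiewicz--Zygmund is therefore not available here, and the step you single out as ``the only non-routine step'' fails.

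The fix is immediate and costs nothing in the final bound: treat $T_2$ exactly as you treat $T_1$, i.e.\ Cauchy--Schwarz against $\|U_j\|_2=1$ and the power-mean inequality give $\mathbb{E}[|T_2|^{4q}\mid Z]\le n^{2q}c_E(2q)$. Then $\mathbb{E}[|T_{ij'}|^{4q}\mid Z]\le 2^{4q-1}n^{2q}(c_X(2q)+\sigma^{4q}c_E(2q))$, and combining with $\mathbb{E}[|Y_{ij'}|^{4q}\mid Z]\le 2^{4q-1}(c_X(2q)+\sigma^{4q}c_E(2q))$ via Cauchy--Schwarz produces precisely the claimed $2^{6q-1}n^q(c_X(2q)+\sigma^{4q}c_E(2q))$. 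The paper sidesteps the whole issue by never separating the noise: it bounds $|Y_{\cdot j'}^\top U_j|\le\|Y_{\cdot j'}\|_2$ directly, expands $\|Y_{\cdot j'}\|_2^{2q}=(\sum_l Y_{lj'}^2)^q$ via Minkowski and Cauchy--Schwarz to get the $n^q$ factor, and only then uses $|Y_{lj'}|^{4q}\le 2^{4q-1}(|X_{j'}(Z_l)|^{4q}+\sigma^{4q}|\mathbf{E}_{lj'}|^{4q})$, which needs only moment bounds and no independence within a column.
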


\begin{proof}
The $i$th element of $(p^{-1}\mathbf{Y}\mathbf{Y}^{\top}-\boldsymbol{\Phi}\boldsymbol{\Phi}^{\top}-\sigma^{2}\mathbf{I}_{n})U_{j}$
can be written in the form:
\[
p^{-1}\sum_{k=1}^{p}\boldsymbol{\Delta}_{ij}(k)
\]
 where 
\[
\boldsymbol{\Delta}_{ij}(k)\coloneqq Y_{k}^{(i)}Y_{k}^{\top}U_{j}-\mathbb{E}\left[\left.Y_{k}^{(i)}Y_{k}^{\top}U_{j}\right|Z_{1},\ldots,Z_{n}\right]
\]
and for any $i,j$, the random variables $\boldsymbol{\Delta}_{ij}(k)$,
$k=1,\ldots,p$ are conditionally independent and conditionally mean
zero given $Z_{1},\ldots,Z_{n}$.

Applying Markov's inequality, the Marcinkiewicz-Zygmund inequality
and Minkowski's inequality, all conditionally on $Z\equiv(Z_{1},\ldots,Z_{n})$,
we have for any $q\geq1$ the following inequalities hold almost surely,
\begin{align}
\mathbb{P}\left(\left.\left|p^{-1}\sum_{k=1}^{p}\boldsymbol{\Delta}_{ij}(k)\right|\geq t\right|Z\right) & \leq\frac{1}{t^{2q}}\mathbb{E}\left[\left.\left|p^{-1}\sum_{k=1}^{p}\boldsymbol{\Delta}_{ij}(k)\right|^{2q}\right|Z\right]\nonumber \\
 & \leq\frac{b(2q)}{t^{2q}p^{2q}}\mathbb{E}\left[\left.\left|\sqrt{\sum_{k=1}^{p}|\boldsymbol{\Delta}_{ij}(k)|^{2}}\right|^{2q}\right|Z\right]\nonumber \\
 & \leq\frac{b(2q)}{t^{2q}p^{2q}}\left(\sum_{k=1}^{p}\mathbb{E}\left[\left.|\boldsymbol{\Delta}_{ij}(k)|^{2q}\right|Z\right]^{1/q}\right)^{q}\nonumber \\
 & =\frac{b(2q)}{t^{2q}p^{q}}\max_{k=1,\ldots,p}\mathbb{E}\left[\left.|\boldsymbol{\Delta}_{ij}(k)|^{2q}\right|Z\right].\label{eq:Delta_deviation}
\end{align}
Re-arranging the expression for $\boldsymbol{\Delta}_{ij}(k)$, applying
the Cauchy-Schwartz inequality and $\|U_{j}\|_{2}=1$, we estimate
\begin{align*}
|\boldsymbol{\Delta}_{ij}(k)| & \leq\left\Vert Y_{k}^{(i)}Y_{k}-\mathbb{E}\left[\left.Y_{k}^{(i)}Y_{k}\right|Z\right]\right\Vert _{2}\|U_{j}\|_{2}\\
 & \leq|Y_{k}^{(i)}|\|Y_{k}\|_{2}+\mathbb{E}\left[\left.|Y_{k}^{(i)}|\|Y_{k}\|_{2}\right|Z\right]
\end{align*}
 and so 
\begin{align}
\mathbb{E}\left[\left.|\boldsymbol{\Delta}_{ij}(k)|^{2q}\right|Z\right] & \leq2^{2q}\mathbb{E}\left[\left.(|Y_{k}^{(i)}|^{2}\|Y_{k}\|_{2}^{2})^{q}\right|Z\right]\nonumber \\
 & =2^{2q}\mathbb{E}\left[\left.\left(\sum_{l=1}^{n}|Y_{k}^{(i)}|^{2}|Y_{k}^{(l)}|^{2}\right)^{q}\right|Z\right]\nonumber \\
 & \leq2^{2q}\left(\sum_{l=1}^{n}\mathbb{E}\left[\left.(|Y_{k}^{(i)}|^{2}|Y_{k}^{(l)}|^{2})^{q}\right|Z\right]^{1/q}\right)^{q}\nonumber \\
 & \leq2^{2q}\left(\sum_{l=1}^{n}\mathbb{E}\left[\left.|Y_{k}^{(i)}|^{4q}\right|Z\right]^{1/(2q)}\mathbb{E}\left[\left.|Y_{k}^{(l)}|^{4q}\right|Z\right]^{1/(2q)}\right)^{q}\nonumber \\
 & =2^{2q}n^{q}\max_{l=1,\ldots,n}\mathbb{E}\left[\left.|Y_{k}^{(l)}|^{4q}\right|Z\right]\nonumber \\
 & =2^{2q}n^{q}\max_{l=1,\ldots,n}\mathbb{E}\left[\left.|X_{k}(Z_{l})+\sigma\mathbf{E}_{kl}|^{4q}\right|Z\right]\nonumber \\
 & \leq2^{6q-1}n^{q}\left(\sup_{l\geq 1}\sup_{z\in\mathcal{Z}}\mathbb{E}\left[|X_{l}(z)|^{4q}\right]+\sigma^{4q}\sup_{l\geq 1,\tilde{l} \geq 1}\mathbb{E}[|\mathbf{E}_{l\tilde{l}}|^{4q}]\right).\label{eq:Delta_moment}
\end{align}
Combining the almost sure upper bounds (\ref{eq:Delta_moment}) and
(\ref{eq:Delta_deviation}), using the tower property of conditional
expectation and then taking a union bound over $i=1,\ldots,n$ and
$j=1,\ldots,r$, we find:
\begin{align*}
 & \mathbb{P}\left(\max_{j=1,\ldots,r}\|(p^{-1}\mathbf{Y}\mathbf{Y}^{\top}-\boldsymbol{\Phi}\boldsymbol{\Phi}^{\top}-\sigma^{2}\mathbf{I}_{n})U_{j}\|_{\infty}\leq t\right)\\
 & \geq1-\frac{n^{1+q}r}{t^{2q}p^{q}}b(2q)2^{6q-1}\left(\sup_{j\geq 1}\sup_{z\in\mathcal{Z}}\mathbb{E}\left[|X_{j}(z)|^{4q}\right]+\sigma^{4q}\sup_{i\geq 1,j\geq 1}\mathbb{E}[|\mathbf{E}_{ij}|^{4q}]\right),
\end{align*}
which completes the proof.
\end{proof}

\section{Supplementary figures for Section~\ref{subsec:Choosing-r}}
\begin{figure}
    \centering
    \includegraphics[width=\textwidth]{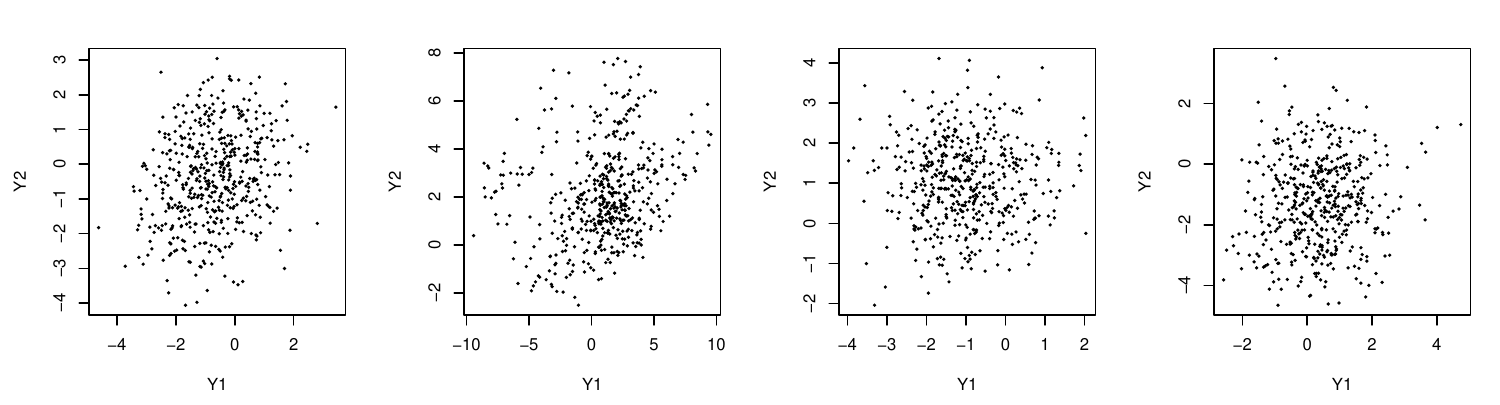}
    \caption{First two coordinates of the data matrices corresponding to figure~\ref{fig:dimension_selection}, showing much less structure than the principal components.}
    \label{fig:indistinguishability}
\end{figure}

\begin{figure}
    \centering
    \includegraphics[width=\textwidth]{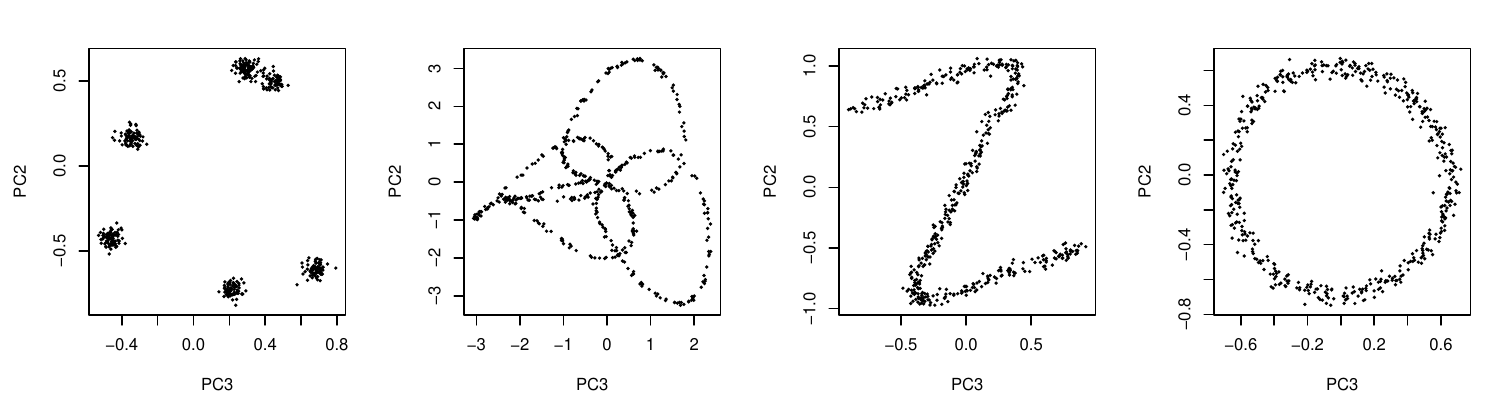}
    \caption{Third and second principal components of the data matrices corresponding to figure~\ref{fig:dimension_selection} (ordered like this to make the resemblance to $\mathcal{Z}$ more obvious).}
    \label{fig:PCA_other_dimensions}
\end{figure}

\begin{figure}
    \centering
    \includegraphics[width=.5\textwidth]{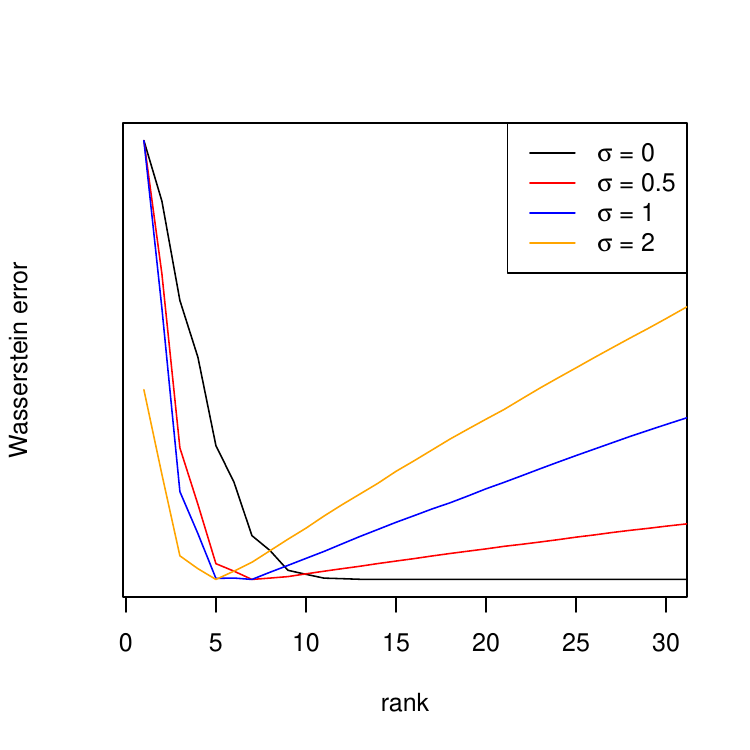}
    \caption{log-Wasserstein error for the fourth configuration in figure~\ref{fig:dimension_selection}, for different error variances. As the variance increases, the optimal dimension (point achieving lowest error) decreases. The curves are shifted and rescaled so that their maxima and minima agree.}
    \label{fig:Wasserstein_varying_sigma}
\end{figure}

\end{document}